\def\dOi{9(4:3)2013}
\keywords{intersection types, non-idempotence, strong normalisation,
  orthogonality models, filters, complexity}
\newenvironment{definition}{\begin{defi}}{\end{defi}}
\newenvironment{theorem}{\begin{thm}}{\end{thm}}
\newenvironment{lemma}{\begin{lem}}{\end{lem}}
\newenvironment{corollary}{\begin{cor}}{\end{cor}}
\theoremstyle{definition}\newtheorem{remark}[thm]{Remark}
\theoremstyle{definition}\newtheorem{example}[thm]{Example}
\theoremstyle{definition}\newtheorem{notation}[thm]{Notation}
\theoremstyle{plain}
\theoremstyle{plain}
\long\def\ignore#1{\relax}
\newcommand\struto[1][15pt]{{\raise #1 \hbox{\strut}}}%
\newcommand\strutb[1][15pt]{{\raise-#1 \hbox{\strut}}}%
\newcommand\upline{\hline\struto[12pt]}
\newcommand\downline[1][12]{\\[+ #1pt]\hline}
 \newcommand\toaux[1]{\immediate\write\@auxout{#1}}
\newcommand\mybox[1]{\fbox{\vbox{#1}}}
\renewcommand\[[1][3]{\par\removelastskip\vskip#1pt\vbox\bgroup\hrule height0pt\vfil\hbox to\hsize\bgroup\hfil\(}
\renewcommand\][1][3]{\)\hfil\egroup\vfil\hrule height0pt\egroup\vskip#1pt\nointerlineskip\noindent}
\newbox\columnsbox
\newbox\tmpbox
\newdimen\columnsheight
\newdimen\columnwidth
\newdimen\remainingwidth
\newdimen\textwidthsave
\def\mycolumnsheight{}
\newcommand\columns[1]{%
  \def\mycolumnsheight{}%
  \setlength\remainingwidth\textwidth%
  \setbox\columnsbox=\vbox\bgroup\vskip0pt\vfil\hbox to\textwidth\bgroup#1\egroup\vfil\egroup%
  \columnsheight=\ht\columnsbox%
  \def\mycolumnsheight{to\columnsheight}%
  \hrule height 0pt\vtop{\hbox to\wd\columnsbox\bgroup#1\egroup}%
}
\def\commonpart{%
  \setlength\columnwidth{\wd\tmpbox}%
  \vtop{\vskip0pt\hbox to\columnwidth{{\box\tmpbox}}}%
  \advance\remainingwidth-\columnwidth%
  \setlength\textwidth\textwidthsave%
  \hsize\textwidthsave%
}
\def\column{\unskip\setlength\textwidthsave\textwidth\@ifnextchar[\@columnwith\@columnwithout}
\long\def\@columnwith[#1]#2{%
  \def\newhsize{#1\dimexpr\textwidth\relax}%
  \hsize\newhsize%
  \ifdim\hsize<0.1pt\hsize\remainingwidth\fi%
  \setlength\textwidth\hsize%
  \setbox\tmpbox=\hbox to\hsize\bgroup\hfil\vtop\mycolumnsheight{\vskip0pt#2\vskip0pt}\hfil\egroup%
  \commonpart%
}
\long\def\@columnwithout#1{%
  \hsize\remainingwidth%
  \setlength\textwidth\hsize%
  \setbox\tmpbox=\hbox\bgroup\vtop\mycolumnsheight{\vskip0pt#1\vskip0pt}\egroup%
  \commonpart%
}
\renewcommand\v[1]{{\overrightarrow{#1}}}            % vecteur
\newcommand\dom[1]{\textsf{Dom}{(#1)}}
\newcommand{\eqdef}{:=\ }
\newcommand{\recdef}{::=\ }
\renewcommand\l{\lambda}
\newcommand{\Gam}{\Gamma}
\newcommand{\Del}{\Delta}
\newcommand\gGamma{\mathfrak{G}}
\newcommand\gDelta{\mathfrak{H}}
\newcommand\gA{\mathfrak{A}}
\newcommand\gB{\mathfrak{B}}
\newcommand\gP{\mathfrak{P}}
\newcommand\lx{$\l${\textsf x}}
\newcommand\mathFomega{F_\omega}
\newcommand\Fomega{\ifmmode\mathFomega\else$\mathFomega$\fi}
\newcommand\mathFomegaC{F_\omega^{\mathcal C}}
\newcommand\FomegaC{\ifmmode\mathFomegaC\else$\mathFomegaC$\fi}
\newcommand\mathDNE{\mathrm{DNE}}
\newcommand\DNE{\ifmmode\mathDNE\else$\mathDNE$\fi}
\renewcommand{\iff}{if and only if}
\newcommand{\ie}{i.e.~}
\newcommand{\eg}{e.g.~}
\newcommand{\cf}{cf.~}
\newcommand{\etc}{etc.~}
\newcommand{\wrt}{w.r.t.~}
\newcommand{\resp}{resp.~}
\def\url[#1]#2{\texttt{#2}}
\let\oldurl\url
\def\myurl{\@ifnextchar[\@myurlwith\@myurlwithout}
\long\def\@myurlwith[#1]#2{\mbox{\href{#2}{#1}}}
\long\def\@myurlwithout#1{\mbox{\href{#1}{#1}}}
\def\myurl{\@ifnextchar[\@myurlwith\@myurlwithout}
\long\def\@myurlwith[#1]#2{\mbox{\oldurl[#1]{#2}}}
\long\def\@myurlwithout#1{\mbox{\oldurl{#1}}}
\def\url{\myurl}
\newcommand\monthdisplay[1]{\unskip}
\newcommand{\subst}[3]{ #1 \{#2 := #3 \} }
\newcommand{\Subst}[3]{ #1 [#2 := #3]}
\newcommand{\weakening}[2]{W_{#1}({#2})}
\newcommand{\contraction}[4]{C_{#1}^{{#2}, {#3}}({#4})}
\newcommand{\LambdaS}{\l S}
\newcommand{\LambdaLxr}{\l lxr}
\newcommand\TP[1]{\textsf{TP}(#1)}
\newcommand\Mfilters{\mbox{$\mathcal M_{\mathcal F}^{\bot\hspace{-7.1pt}\bot}$}}
\newcommand\Mfiltersi{$\mathcal M_{\mathcal F}^{\textsf i}$}
\newcommand\cons[2]{{#1}\!::\!{#2}}
\newcommand\el{[]}
\newcommand\SysS{\mathcal S}
\newcommand\TV[1]{ftv(#1)}
\newcommand{\Gen}[1]{< #1 >}
\newcommand{\SemTe}[3][]{{\llbracket #2 \rrbracket}^{#1}_{#3}}
\newcommand{\SemTyP}[2]{{\llbracket #1 \rrbracket}_{#2}}
\newcommand{\SemTyN}[2]{{[ #1 ]}_{#2}}
\newcommand\termmodel{\textsc{term}}
\newcommand\aptermmodel{\ap^{\textsc{term}}}
\newcommand\ValDom{\mathcal D}
\newcommand\ValDomL{\mathcal D^*}
\newcommand\ValDomE{\mathcal E}
\newcommand\orth[2]{{#1}\ \mbox{$\bot\!\!\!\!\bot$}\ {#2}}
\newcommand\uniorth[1]{#1^\bot}
\newcommand\biorth[1]{#1^{\bot\bot}}
\newcommand\triorth[1]{#1^{\bot\bot\bot}}
\newcommand\ap{@}
\newcommand\bottom{\bot}
\newcommand{\Rew}[2][]{\longrightarrow^{#1}_{#2}\;}
\newcommand{\Rewn}[1]{\longrightarrow^{*}_{#1}\;}
\newcommand{\tempty}[0]{\omega}
\newcommand\sep\mid
\def\SN#1{\textsf{SN}^{#1}}
\newcommand\col{\!:\!}
\newcommand\arr{\hspace{-3pt}\rightarrow\hspace{-3pt}}
\newcommand{\seqf}[2][]{\mbox{$\ {\pmb \vdash}_{#1}^{#2}\ $}}
\newcommand\Seqf[4][]{#3\seqf[{#1}]{#2} #4}
\newcommand\Derif[5][]{\Seqf[#1]{#2}{#3}{{#4}\col{#5}}}
\newcommand\Deri[4][]{\Derif[\cap\subseteq]{#1}{#2}{#3}{#4}}
\newcommand\Derilis[4][]{\Derif[\cap\subseteq]{#1}{#2}{#3}{#4}}
\newcommand\Derili[4][]{\Derif[\cap\subseteq]{#1}{#2}{#3}{#4}}
\newcommand\DeriliS[4][]{\Derif[\mathcal S]{#1}{#2}{#3}{#4}}
\newcommand\Deriliopt[4][]{\Derif[{^{\sf opt}}]{#1}{#2}{#3}{#4}}
\newcommand\InterLambdaSub{$\lambda_\cap^\subseteq$}
\newcommand\per{\rightsquigarrow}
\newcommand\Per{\Rightarrow}
\newcommand\FV[1]{fv(#1)}
\newcommand\degr[2][]{\delta^{#1}(#2)}
\newcommand\forg[1]{\textsf{forg}(#1)}
\newcommand\accu[2]{Acc({#1}, {#2})}
\newcommand\MSN{\mbox{$\mathcal M_{\SN{}}^{\bot\hspace{-7.1pt}\bot}$}}
\newcommand\Mcap{\mbox{$\mathcal M_{\cap}^{\bot\hspace{-7.1pt}\bot}$}}
\newcommand{\SNLambda}{\SN{\lambda}}
\newcommand{\SNLambdaS}{\SN{\LambdaS}}
\newcommand{\SNLambdaLxr}{\SN{\LambdaLxr}}
\newcommand\LambdaTerms[1][]{\Lambda^\lambda}
\newcommand\LambdaSTerms[1][]{\Lambda^{\LambdaS}}
\newcommand\LambdaLxrTerms[1][]{\Lambda^{\LambdaLxr}}
\newcommand\TypableTerms{\Lambda^\lambda_\cap}
\title[Non-idempotent intersection types and strong normalisation]{Non-idempotent intersection types\\ and strong normalisation}
\author[A.~Bernadet]{Alexis Bernadet\rsuper a}
\address{{\lsuper a}{\'E}cole Polytechnique, France}
\email{Bernadet@LIX.Polytechnique.fr}
\author[S.~Graham-Lengrand]{St{\'e}phane Graham-Lengrand\rsuper b}
\address{{\lsuper b}CNRS and {\'E}cole Polytechnique, France}
\email{Graham-Lengrand@LIX.Polytechnique.fr}
\begin{document}

\begin{abstract}

  We present a typing system with \emph{non-idempotent intersection} types, typing a term syntax covering three different calculi: the pure $\l$-calculus, the calculus with explicit substitutions $\LambdaS$, and the calculus with explicit substitutions, contractions and weakenings $\LambdaLxr$.

  In each of the three calculi, a term is typable if and only if it is strongly normalising, as it is the case in (many) systems with idempotent intersections.

  Non-idempotency brings extra information into typing trees, such as simple bounds on the longest reduction sequence reducing a term to its normal form.  Strong normalisation follows, without requiring reducibility techniques.

  Using this, we revisit models of the $\l$-calculus based on \emph{filters of intersection types}, and extend them to $\LambdaS$ and $\LambdaLxr$.  Non-idempotency simplifies a methodology, based on such filter models, that produces modular proofs of strong normalisation for well-known typing systems (\eg System $F$).  We also present a filter model by means of orthogonality techniques, \ie as an instance of an abstract notion of \textit{orthogonality model} formalised in this paper and inspired by classical realisability. Compared to other instances based on terms (one of which rephrases a now standard proof of strong normalisation for the $\l$-calculus), the instance based on filters is shown to be better at proving strong normalisation results for $\LambdaS$ and $\LambdaLxr$.

  Finally, the bounds on the longest reduction sequence, read off our typing trees, are refined into an \emph{exact measure}, read off a specific typing tree (called \emph{principal}); in each of the three calculi, a specific reduction sequence of such length is identified.  In the case of the $\lambda$-calculus, this complexity result is, for longest reduction sequences, the counterpart of de Carvalho's result for linear head-reduction sequences.

\end{abstract}
\maketitle

\newpage
\tableofcontents

\newpage
\section{Introduction}

\newcommand\para[1]{\par\vspace{0pt}{\bf #1.}}

\begin{wraptable}r{0pt}\vspace{-10pt}$$\infer{M \col A \cap B}{M \col A \quad M \col B}$$
\end{wraptable}
Intersection types were introduced in~\cite{CD78:newtal,CDS79:funcss}, extending the simply-typed $\lambda$-calculus with a notion of finite polymorphism.  This is achieved by a new construct $A \cap B $ in the syntax of types and new typing rules such as the one on the right, where $M\col A$ denotes that a term $M$ is of type $A$.

One of the motivations was to characterise strongly normalising (SN) $\lambda$-terms, namely the property that a $\lambda$-term can be typed if and only if it is strongly normalising.  Variants of systems using intersection types have been studied to characterise other evaluation properties of $\lambda$-terms and served as the basis of corresponding semantics~\cite{BCD:JSL83,Leivant86,kriv90,Bakel-TCS'95,Ghilezan96,AmadioCurien-book,gall98,Dezani-TCS-2004,DHM05,AlessiTCS06,CoquandSpiwack07}.

This paper develops~\cite{bernadetleng11,bernadetleng11b} with detailed proofs and extends the results to other calculi than the pure $\lambda$-calculus, namely the calculus with explicit substitutions $\LambdaS$ (a minor variant of the calculi $\lambda_s$ of~\cite{Kesner07} and $\lambda_{es}$ of~\cite{RenaudPhD}), and the calculus with explicit substitutions, explicit contractions and explicit weakenings $\LambdaLxr$~\cite{lengrandkesner,KLIaC06}:

It presents a typing system (for a syntax that covers those of $\l$, $\LambdaS$ and $\LambdaLxr$) that uses \emph{non-idempotent} intersection types (as pioneered by~\cite{WellsPOPL99,NeergaardICFP04}).

Intersections were originally introduced as idempotent, with the equation $A \cap A = A$ either as an explicit quotient or as a consequence of the system.  This corresponds to the understanding of the judgement $M \col A \cap B$ as follows: $M$ can be used as data of type $A$ or as data of type $B$.  But the meaning of $M \col A \cap B$ can be strengthened in that $M$ \textbf{will} be used \textbf{once} as data of type $A$ and \textbf{once} as data of type $B$.  With this understanding, $A \cap A \neq A$, and dropping idempotency of intersections is thus a natural way to study control of resources and complexity.

Using this typing system, the contributions of this paper are threefold:

\para{Measuring worst-case complexity}

In each of the three calculi, we refine with quantitative information the property that typability characterises strong normalisation.  Since strong normalisation ensures that all reduction sequences are finite, we are naturally interested in identifying the length of the longest reduction sequence.  This can be done as our typing system is very sensitive to the usage of resources when terms are reduced (by any of the three calculi).

Our system actually results from a long line of research inspired by Linear Logic~\cite{girard-ll}. The usual logical connectives of, say, classical and intuitionist logic, are decomposed therein into finer-grained connectives, separating a \emph{linear} part from a part that controls how and when the structural rules of \emph{contraction} and \emph{weakening} are used in proofs.  This can be seen as resource management when hypotheses, or more generally logical formulae, are considered as resource.
The Curry-Howard correspondence, which originated in the context of intuitionist logic~\cite{How:fortnc}, can be adapted to Linear Logic~\cite{Abramsky.1993,BBdPHtlca}, whose resource-awareness translates to a control of resources in the execution of programs (in the usual computational sense). From this has emerged a theory of resource $\lambda$-calculus with semantical support (such as the differential $\lambda$-calculus)~\cite{BoudolL96,BoudolCL99,EhrhardTCS03,EhrhardCoRR10,EhrhardENTCS10}. In this line of research, de~Carvalho~\cite{Carvalho05,Carvalho09corr} obtained interesting measures of reduction lengths in the $\lambda$-calculus by means of \emph{non-idempotent} intersection types: he showed a correspondence between the size of a typing derivation tree and the number of steps taken by a Krivine machine to reduce the typed $\lambda$-term, which relates to the length of linear head-reductions. But if we remain in the realm of intersection systems that characterise strong normalisation, then the more interesting measure is the length of the longest reduction sequence.

In this paper we get a result similar to de~Carvalho's, but with the measure corresponding to strong normalisation:
the length of the longest $\beta$-reduction sequence starting from any strongly normalising $\lambda$-term can be read off its typing tree in our system.\footnote{While Linear Logic also evolved typing systems that capture poly time functions~\cite{Baillot02,Baillot03,LafontTCS04,Gaboardi07}, let us emphasise that no linearity constraint is here imposed and all strongly normalising $\l$-terms can be typed (including non-linear ones). In this we also depart from the complexity results specific to the simply-typed $\lambda$-terms~\cite{Sch82,BeckmannJSL01}.}

Moreover, the idea of controlling resource usage by intersection types naturally leads to the investigation of calculi that handle resources more explicitly than the pure $\l$-calculus. While the resource calculi along the lines of~\cite{EhrhardENTCS10} are well-suited to de~Carvalho's study of head reductions, our interest in longest reduction sequences (no matter where the redexes are) lead us to explicit substitution calculi along the lines of~\cite{lengrandkesner,KLIaC06,Kesner07,RenaudPhD}. 
Hence the extension of our complexity results (already presented in~\cite{bernadetleng11} for $\l$) to $\LambdaS$ and $\LambdaLxr$.\footnote{In particular, the explicit contractions of $\LambdaLxr$ relate to the left-introduction of intersections.}

\para{Filter models and strong normalisation}

Intersection types were also used to build filter models of $\lambda$-calculus as early as~\cite{BCD:JSL83}.\footnote{For instance,~\cite{AlessiTCS06} reveals how the notion of intersection type filter can be tuned so that the corresponding filter models identify those $\lambda$-terms that are convertible by various restrictions of $\beta$- and $\eta$-conversion.} In particular,~\cite{CoquandSpiwack07} shows how filters of intersection types can be used to produce models of various type theories;~this in turn provides a modular proof that the $\lambda$-terms that are typable in some (dependent) type theory (the \emph{source system}) are typable in a unique strongly normalising system of intersection types (the \emph{target system}), and are therefore strongly normalising.

Following~\cite{bernadetleng11b}, we show here an improvement on this methodology, changing the target system of idempotent intersection types to our system of {\em non-idempotent} intersection types.\footnote{This also departs from~\cite{AlessiTCS06}.} The benefit of that move is that the strong normalisation of this new target system follows from the fact that typing trees get strictly smaller with every $\beta$-reduction. This is significantly simpler than the strong normalisation of the simply-typed $\l$-calculus and, even more so, of its extension with idempotent intersection types (for which~\cite{CoquandSpiwack07} involves reducibility techniques~\cite{Girard72,tait75}).  Strangely enough there is no price to pay for this simplification, as the construction and correctness of the filter models with respect to a source system is not made harder by non-idempotency.

While this improvement concerns any of the source systems treated in~\cite{CoquandSpiwack07}, we choose to illustrate the methodology with a concrete source system that includes the impredicative features of System~$F$~\cite{Girard72}, as suggested in the conclusion of~\cite{CoquandSpiwack07}.

Moreover, extending our improved methodology~\cite{bernadetleng11b} to the explicit substitution calculi $\LambdaS$ and $\LambdaLxr$ is a new contribution that addresses problems that are reputedly difficult: as illustrated by Melliès~\cite{Mellies:tlca1995}, strong normalisation can be hard to satisfy by an explicit substitution calculus. When it is satisfied, proof techniques often reduce the problem to the strong normalisation of pure $\l$-terms via a property known as \emph{Preservation of Strong Normalisation} (PSN)~\cite{BBLRD_JFP}, while direct proofs (\eg by reducibility~\cite{Girard72,tait75}) become hugely intricated~\cite{Dougherty-mscs-TA,LLDDvB} even in the simplest explicit substitution calculus~\lx~\cite{BlooRose:csn1995}. Here we have direct proofs of strong normalisation for $\LambdaS$ and $\LambdaLxr$, when it is typed with simple types, idempotent intersection types, System~$F$ types. These are, to our knowledge, the first direct proofs for those systems (\ie proofs that do not rely on the strong normalisation of pure $\l$-terms).

\para{Orthogonality models}

The third contribution of this paper is to show how the above methodology can be formalised in the framework of {\em orthogonality}.  Orthogonality underlies Linear Logic and its models~\cite{girard-ll} as well as classical realisability~\cite{DanosKrivine00,Krivine01,MunchCSL09}, and is used to prove properties of proofs or programs~\cite{Parigot97,mellies05recursive,LM:APAL07}.

We formalise here a parametric model construction by introducing an abstract notion of {\em orthogonality model}, which we illustrate with three different instances:
\begin{iteMize}{$\bullet$}
\item one instance is a model made of strongly normalising terms\\
(which, in the case of the pure $\l$-calculus, captures the traditional use of orthogonality to prove strong normalisation~\cite{Parigot97,LM:APAL07})
\item one instance is a model made of terms that are typable with intersection types
\item one instance is a model made of filters of intersection types
\end{iteMize}

To our knowledge, this is the first time that some filter models are shown to be captured by orthogonality techniques. Also, the systematic and modular approach offered by the abstract notion of orthogonality model facilitates the comparison of different proof techniques:
As already showed in~\cite{bernadetleng11b}, all three orthogonality models provide proofs of strong normalisation for the pure $\l$-calculus.
But here we also show that, in the case of $\LambdaS$ and $\LambdaLxr$, the term models fail to easily provide such direct proofs: one has to either infer that a term is strongly normalising from some normalisation (\resp typing) properties of its projection as a pure $\l$-term (as in the PSN property), or prove complex normalisation (\resp typing) properties within $\LambdaS$ and $\LambdaLxr$ themselves (as in the {\bf IE} property identified in~\cite{Kesner09lmcs}).
On the contrary, the filter model provides strong normalisation results for $\LambdaS$ and $\LambdaLxr$ as smoothly as for the pure $\lambda$-calculus.

\para{Structure of the paper}

This paper aims at factorising as much material as possible between the three calculi, and present the material specific to each of them in a systematic way.

Section~\ref{sec:calculus} presents the generic syntax that covers those of $\l$-calculus, $\LambdaS$ and $\LambdaLxr$; it presents the (non-idempotent) intersection types, the typing system used in the rest of this paper and its basic properties.

Section~\ref{sec:soundness} proves Subject Reduction for each of the three calculi, showing that typing trees get smaller with every reduction, from which strong normalisation is inferred (\emph{Soundness}).

Section~\ref{sec:semantics} presents the filter structure of our intersection types, and the construction of a filter model for a very general \emph{source typing system}, which is thus proved strongly normalising in each of the three calculi; the abstract notion of \emph{orthogonality model} is defined with sufficient conditions for the Adequacy Lemma to hold (being typed implies having a semantics in the model); three instances of orthogonality models are defined and compared in the view of proving strong normalisation results for the three calculi.

Section~\ref{sec:completeness} proves Subject Expansion for each of the three calculi, from which typing derivations are shown to exist for every strongly normalising terms (\emph{Completeness}); such derivations are proved to satisfy some specific properties called \emph{optimality} and \emph{principality}.

Section~\ref{sec:complexity} draws advantage of the optimality and principality properties to refine the upper bound on longest reduction sequences into an exact measure that is reached by some specific reduction sequence; this is done in each of the three calculi.

Section~\ref{sec:other} discusses alternative measures for the explicit substitution calculi $\LambdaS$ and $\LambdaLxr$, and Section~\ref{sec:conclusion} concludes. An appendix details the proofs of the theorems that would otherwise overload the paper with technicalities.

\section{The calculus}

\label{sec:calculus}

The intersection type system we define here was first designed for the pure $\l$-calculus.  However, it can easily be extended to other calculi such as the explicit substitution calculus $\LambdaS$, or the explicit substitution calculus $\LambdaLxr$ where weakenings and contractions are also explicit.

The theories of those three calculi share a lot of material, which is why we present them in parallel, factorising what can be factorised: For instance, the syntaxes of the three calculi are fragments of a common grammar, for which a generic intersection type system specifies a notion of typing for each fragment.  However, the calculi do not share the same reduction rules.

In this section, we first present the common grammar, then we define our generic intersection type system for it.

\subsection{Terms}

The syntaxes of the three calculi are subsets of a common grammar defined
as follows:
\[M, N \recdef x \mid \l x . M \mid M N \mid \Subst M x N \mid
\weakening x M \mid \contraction x y z M\]
The free variables $fv(M)$ of a term $M$ are defined by the rules of figure~
\ref{fig:freevars}.

\begin{figure}[!h]
  \[\begin{array}{|c|}
    \upline
    fv(x) = \{ x \} \qquad fv(\l x . M) = fv(M) - \{ x \} \qquad
    fv(M N) = fv(M) \cup fv(N)\\\\
    fv(\Subst M x N) = (fv(M) - \{ x \}) \cup fv(N) \qquad
    fv(\weakening x M) = fv(M) \cup \{ x \}\\\\
    fv(\contraction x y z M) = \begin{cases}
      (fv(M) - \{y, z\}) \cup \{ x \} \text{ if } y \in fv(M) \text{ or }
      z \in fv(M)\\
      fv(M) \text{ otherwise}
    \end{cases}
    \downline
  \end{array}\]
  \caption{Free variables of a term}
  \label{fig:freevars}
\end{figure}

We consider terms up to $\alpha$-equivalence and use Barendregt's convention~\cite{Bar84} to avoid variable capture.

\begin{definition}[Linear terms]\strut

  \begin{iteMize}{$\bullet$}

  \item $x$ is linear.

  \item If $M$ and $N$ are linear and $fv(M) \cap fv(N) = \varnothing$, then
    $M N$ is linear.

  \item If $M$ is linear and $x \in fv(M)$, then $\l x . M$ is linear.

  \item If $M$ and $N$ are linear, $x \in fv(M)$ and
    $(fv(M) - \{x \}) \cap fv(N) = \varnothing$, then $\Subst M x N$ is linear.

  \item If $M$ is linear and $x \notin fv(M)$, then $\weakening x M$ is linear.

  \item If $M$ is linear, $y \in fv(M)$, $z \in fv(M)$, and $x \notin fv(M)$, then
    $\contraction x y z M$ is linear.

  \end{iteMize}
\end{definition}

In this paper we consider in particular the three following fragments of the above syntax.

\begin{definition}[Fragments]\strut

  \begin{enumerate}[\hbox to8 pt{\hfill}]  

  \item\noindent{\hskip-12 pt\bf Pure $\l$-calculus:}\ A $\l$-term is a term $M$ that does not contain
    $\Subst M x N$, or $\weakening x M$ or $\contraction x y z M$.

  \item\noindent{\hskip-12 pt\bf $\LambdaS$-calculus:}\ A $\LambdaS$-term is a term that does not contain
    $\weakening x M$ or $\contraction x y z M$.

  \item\noindent{\hskip-12 pt\bf $\LambdaLxr$-calculus:}\ A $\LambdaLxr$-term is a term $M$ that is linear:
    every free and bound variable appears once and only once in the term
    (see~\cite{KLIaC06}).

  \end{enumerate}

\end{definition}

\subsection{Types}

To define the intersection type system, we first define the intersection types,
which are the same for the three calculi.

\begin{definition}[Types]\strut

  We consider a countable infinite set of elements called \emph{atomic types} and use the type variable $\tau$ to range over it.

  Intersection types are defined by the following syntax:

  \[\begin{array}{ll@{\qquad}l}

    F, G,\ldots &\recdef \tau \mid A \rightarrow F&\mbox{$F$-types}\\
    A, B,\ldots &\recdef F \mid A \cap B&\mbox{$A$-types}\\
    U, V,\ldots &\recdef A \mid \tempty&\mbox{$U$-types}

  \end{array}\]
  $F$-types are types that are not intersections, $A$-types are types that
  are not empty and $U$-types are types that can be empty.

  We extend the intersection construct as an operation on $U$-types as follows:

  \[
  A \cap \tempty \eqdef A \quad
  \tempty \cap A \eqdef A \quad
  \tempty \cap \tempty \eqdef \tempty
  \]

\end{definition}

\begin{remark}

  For all $U$ and $V$ we have :

  \begin{iteMize}{$\bullet$}

  \item $U \cap V$ exists.

  \item $U \cap \tempty = \tempty \cap U$

  \item If $U \cap V = \tempty$, then $U = V = \tempty$.

  \end{iteMize}

\end{remark}

Note that we do {\bf not} assume any implicit equivalence between intersection
types (such as idempotency, associativity, commutativity).

$F$-types are similar to strict types defined in~\cite{pubsdoc:Bakel-TCS92}.

However, in order to prove theorems such as subject reduction we will need
associativity and commutativity of the intersection $\cap$.
So we define an equivalence relation on types.

\begin{definition}[$\approx$]
  We inductively define $U \approx V$ by the rules of Fig.~\ref{fig:inter-equiv}.
\end{definition}

\begin{figure}[!h]
  \[
  \begin{array}{|c|}

    \upline

    \infer{F \approx F}{\strut} \quad

    \infer{A \cap B \approx B \cap A}{\strut} \quad

    \infer{A \cap B \approx A' \cap B'}{A \approx A' \quad B \approx B'}\quad

    \infer{A \approx C}{A \approx B \quad B \approx C}\quad\\\\

    \infer{(A \cap B) \cap C \approx A \cap (B \cap C)}{}\quad

    \infer{A \cap (B \cap C) \approx (A \cap B) \cap C}{}\quad

    \infer{\tempty \approx \tempty}{}

    \downline

  \end{array}
  \]
  \caption{Equivalence on intersection types}
  \label{fig:inter-equiv}
\end{figure}

The intersection types that we use here differ from those of~\cite{bernadetleng11}, in that the associativity and commutativity (AC) of the intersection $\cap$ are only featured ``on the surface'' of types, and not underneath functional arrows $\rightarrow$.  This will make the typing rules much more syntax-directed, simplifying the proofs of soundness and completeness of typing with respect to the strong normalisation property.  More to the point, this approach reduces the use of the AC properties to the only places where they are needed.

\begin{toappendix}
  \appendixbeyond 0
  \begin{lemma}[Properties of $\approx$]\label{lem:capprop}

    For all $U$, $V$, $W$, $F$, $U'$, $V'$,

    \begin{enumerate}[\em(1)]

    \item $\approx$ is an equivalence relation.

    \item If $U \approx \tempty$, then $U = \tempty$ and if $U \approx F$, then
      $U = F$.

    \item $U \cap V \approx V \cap U$ and
      $(U \cap V) \cap W \approx U \cap (V \cap W)$.

    \item If $U \approx U'$ and $V \approx V'$, then $U \cap V \approx U' \cap V'$.

    \item  For all $U$ and $V$, if $U \cap V \approx U$, then $V = \tempty$.

    \end{enumerate}

  \end{lemma}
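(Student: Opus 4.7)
The plan is to treat the five items in turn, using the fact that the defining rules of $\approx$ are almost syntax-directed once one observes that they do not mix the three ``kinds'' of $U$-types: $F$-types, proper intersections, and $\tempty$.

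For (1), reflexivity is established by induction on $U$: the cases $U=\tempty$ and $U=F$ are given by the corresponding axioms, and if $U = A \cap B$ then reflexivity of $A$ and of $B$ combined with the congruence rule yields $A\cap B \approx A\cap B$. Symmetry proceeds by induction on the derivation: the base rules $F\approx F$ and $\tempty\approx\tempty$ are self-symmetric, the commutativity axiom is its own converse, the two associativity axioms are mutual converses, and both the congruence and the transitivity rules are obviously closed under symmetry via the induction hypothesis. Transitivity is built-in.

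For (2), I would introduce a predicate ``$U$ is empty'' (meaning $U=\tempty$) and prove by induction on the derivation of $U\approx V$ that $U$ is empty if and only if $V$ is: each base rule and each inductive rule respects this dichotomy, since intersections never appear on the same line as $\tempty$. Thus $U\approx\tempty$ forces $U=\tempty$. For the $F$-case, a slightly stronger statement is needed: I would prove by induction on the derivation that if $V$ is an $F$-type and $U\approx V$ then $U=V$. The only base rule whose right-hand side is an $F$-type is $F\approx F$, all other base rules have conclusions of a different kind, and transitivity propagates the equality through the induction hypothesis.

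For (3) and (4), the arguments are routine case splits on whether the operands are $\tempty$. When every operand is an $A$-type, the stated commutativity/associativity/congruence rules apply directly; when one operand is $\tempty$, the extended intersection is defined so that $\tempty\cap X = X\cap\tempty = X$, and the desired equivalence collapses to reflexivity or to the remaining hypothesis, invoking item~(1) and item~(2) (so that e.g.\ $U\approx U'$ and $U=\tempty$ give $U'=\tempty$).

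The one item with actual content is (5), where I expect the main obstacle. I would introduce a weight $|\cdot|$ on $U$-types by $|\tempty|\eqdef 0$, $|F|\eqdef 1$, $|A\cap B|\eqdef |A|+|B|$, so that $|U\cap V|=|U|+|V|$ in all cases (including when one side is $\tempty$). By induction on the derivation of $U\approx V$, one checks that $|U|=|V|$: the base rules $F\approx F$ and $\tempty\approx\tempty$ are immediate, commutativity and associativity hold because $+$ is commutative and associative on $\N$, and the congruence and transitivity rules propagate the equality. Finally, for any $A$-type $V$ a straightforward induction gives $|V|\geq 1$. Hence from $U\cap V\approx U$ we deduce $|U|+|V|=|U|$, so $|V|=0$, and therefore $V=\tempty$.
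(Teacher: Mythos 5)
Your proof is correct and takes essentially the same approach as the paper: the first four items by routine inductions on derivations (with case splits on $\tempty$ for the extended intersection), and item (5) via exactly the additive weight the paper calls $\phi$, concluding $|V|=0$ hence $V=\tempty$.
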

\end{toappendix}

\begin{toappendix}
[\begin{proof}See Appendix~\thisappendix.\end{proof}]

  \begin{proof}

    The first four points are proved by straightforward inductions on derivations.
    The last point is more subtle. We define $\phi(U)$ by induction on $U$ as
    follows:

    \[\begin{array}{ll}

      \phi(\tempty) &\eqdef 0\\

      \phi(F) &\eqdef 1\\

      \phi(A \cap B) &\eqdef \phi(A) + \phi(B)

    \end{array}
    \]

    So for all $U$ and $V$ we have $\phi(U \cap V) = \phi(U) + \phi(V)$. 
    Also, for all $A$, $\phi(A) > 0$. So for all $U$, if $\phi(U) = 0$, then
    $U = \tempty$.

    Moreover, for all $U$ and $V$, if $U \approx V$, then
    $\phi(U) = \phi(V)$ (by induction on $U \approx V$).

    Now if $U \cap V \approx U$, then

    \[\phi(U \cap V) = \phi(U) + \phi(V) =  \phi(U)\]

    So we have $\phi(V) = 0$, from which we get $V = \tempty$.
  \end{proof}
\end{toappendix}

We equip intersection types with a notion of sub-typing:

\begin{definition}[$\subseteq$]
  We write $U \subseteq V$ if there exists $U'$ such that $U \approx V \cap U'$.
\end{definition}

\begin{lemma}[Properties of $\subseteq$]

  \label{lem:capsubprop}

  For all $U$, $U'$, $V$, $V'$ :

  \begin{enumerate}[\em(1)]

  \item $\subseteq$ is a partial pre-order for intersection types, and
    $U \approx U'$ if and only if $U\subseteq U'$ and $U'\subseteq U$.

  \item $U \cap V \subseteq U$ and $U \subseteq \tempty$

  \item If $U \subseteq U'$ and $V \subseteq V'$, then
    $U \cap V \subseteq U' \cap V'$

  \end{enumerate}

\end{lemma}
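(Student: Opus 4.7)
The plan is to unfold the definition $U \subseteq V \iff \exists U'.\ U \approx V \cap U'$ and reduce everything to the algebraic properties of $\approx$ collected in Lemma~\ref{lem:capprop}, combined with the neutrality equations $U \cap \tempty = \tempty \cap U = U$ built into the definition of $\cap$.

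For item~(1), reflexivity of $\subseteq$ follows by taking the witness $U' \eqdef \tempty$, since $U \cap \tempty = U$ and $\approx$ is reflexive (Lemma~\ref{lem:capprop}(1)). For transitivity, from $U \approx V \cap X$ and $V \approx W \cap Y$ I would chain, using congruence (Lemma~\ref{lem:capprop}(4)) and associativity (Lemma~\ref{lem:capprop}(3)), to get $U \approx W \cap (Y \cap X)$, so $U \subseteq W$ with witness $Y \cap X$. The forward direction of the ``iff'' is immediate: from $U \approx U'$, write $U \approx U' \cap \tempty$ (so $U \subseteq U'$) and $U' \approx U \cap \tempty$ (so $U' \subseteq U$).

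The backwards direction of the ``iff'' is the only genuinely non-routine step, and I expect it to be the main obstacle: assuming $U \approx U' \cap X$ and $U' \approx U \cap Y$, I substitute the second into the first to obtain $U \approx (U \cap Y) \cap X \approx U \cap (Y \cap X)$ via Lemma~\ref{lem:capprop}(3,4). Lemma~\ref{lem:capprop}(5) then forces $Y \cap X = \tempty$, and hence (by the remark immediately following the definition of $\cap$ on $U$-types) both $X = \tempty$ and $Y = \tempty$. Thus $U \approx U' \cap \tempty = U'$.

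For item~(2), both clauses are direct: $U \cap V \subseteq U$ with witness $V$ via reflexivity of $\approx$, and $U \subseteq \tempty$ with witness $U$ itself, since $\tempty \cap U = U$. For item~(3), from $U \approx U' \cap X$ and $V \approx V' \cap Y$, congruence of $\approx$ under $\cap$ (Lemma~\ref{lem:capprop}(4)) gives $U \cap V \approx (U' \cap X) \cap (V' \cap Y)$, and then associativity and commutativity (Lemma~\ref{lem:capprop}(3)) rearrange this into $(U' \cap V') \cap (X \cap Y)$, exhibiting the required witness $X \cap Y$. All of these rearrangements are available on $A$-types from Lemma~\ref{lem:capprop}, and extend to $U$-types since $\tempty$ acts as a neutral element by definition, so no separate case analysis for the presence of $\tempty$ is needed beyond invoking Lemma~\ref{lem:capprop}(2).
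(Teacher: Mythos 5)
Your proof is correct and is precisely the argument the paper intends: the paper dismisses this lemma as ``straightforward with Lemma~\ref{lem:capprop}'', and your write-up simply carries out that reduction, with the only non-trivial step (the backwards direction of the ``iff'') correctly handled via Lemma~\ref{lem:capprop}(5) and the remark that $X \cap Y = \tempty$ forces $X = Y = \tempty$.
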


\begin{proof}
  Straightforward with Lemma~\ref{lem:capprop}.
\end{proof}

\subsection{Typing contexts}

We now lift those concepts to typing contexts before presenting the typing
rules.

\begin{definition}[Contexts]\strut

  A context $\Gamma$ is a total map from variables to $U$-types such that
  $\dom\Gamma \eqdef \{ x \mid \Gamma(x) \neq \tempty \} $ is finite. The
  intersection of contexts $\Gamma \cap \Delta$, the relations
  $\Gamma \approx \Delta$ and $\Gamma \subseteq \Delta$, are defined point-wise.

  By $()$ we denote the context mapping every variable to $\tempty$ and by
  $x \col U$ the context mapping $x$ to $U$ and every other variable to $\tempty$.

  The special case of $\Gam \cap \Del$ when $\dom \Gam$ and $\dom \Del$ are
  disjoint is denoted $\Gam, \Del$.
\end{definition}

\begin{lemma}[Properties of contexts]
  For all contexts $\Gamma$, $\Gamma'$, $\Delta$, $\Delta'$, $\Gamma''$,
  \begin{enumerate}[\em\phantom0(1)]
  \item $\Gamma \cap () = \Gamma = () \cap \Gamma$
    (for instance $\Gamma, x\col\tempty = \Gamma = x\col\tempty, \Gamma$)
  \item If $\Gamma \cap \Delta = ()$, then $\Gamma = \Delta = ()$
    and if $\Gamma \approx ()$, then $\Gamma = ()$
  \item $\approx$ is an equivalence relation on contexts.
  \item $\Gamma \cap \Delta \approx \Delta \cap \Gamma$
    and $(\Gamma \cap \Gamma') \cap \Gamma'' \approx
    \Gamma \cap (\Gamma' \cap \Gamma'')$
  \item If $\Gamma \approx \Gamma'$ and $\Delta \approx \Delta'$, then
    $\Gamma \cap \Gamma' \approx \Delta \cap \Delta'$
  \item $\Gamma \subseteq \Delta$ if and only if there exists $\Gamma'$ such
    that $\Gamma \approx \Delta \cap \Gamma'$.
  \item $\subseteq$ is a partial pre-order for contexts, and
    $\Gamma\approx\Delta$ \iff\ $\Gamma\subseteq\Delta$ and
    $\Delta\subseteq\Gamma$.
  \item $\Gamma \cap \Delta \subseteq \Gamma$
  \item If $\Gamma \subseteq \Gamma'$ and $\Delta \subseteq \Delta'$, then
    $\Gamma \cap \Delta \subseteq \Gamma' \cap \Delta'$.
  \item[\em(10)] $(\Gamma, x \col U) \subseteq \Gamma$, in particular
    $\Gamma \subseteq ()$.
  \end{enumerate}
\end{lemma}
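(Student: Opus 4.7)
The plan is to reduce every one of the ten items to the analogous pointwise statement on $U$-types, which is then settled by Lemma~\ref{lem:capprop} or Lemma~\ref{lem:capsubprop}. Since a context is just a total map from variables to $U$-types with finite support, and each of $\cap$, $\approx$, $\subseteq$ on contexts is defined pointwise, this reduction is entirely systematic.

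For items (1)--(5) and (7)--(9), I would fix an arbitrary variable $x$ and invoke the corresponding clause of Lemma~\ref{lem:capprop} or Lemma~\ref{lem:capsubprop} at $x$; the required equation or inequality then holds at every variable and therefore at the level of contexts. Concretely, (1) and (2) use the neutrality of $\tempty$ and Lemma~\ref{lem:capprop}(2); (3)--(5) use the equivalence, commutativity, associativity and congruence properties of $\approx$ on $U$-types; and (7)--(9) use the corresponding properties of $\subseteq$. Item (10) is then an immediate consequence of (8), since by definition $\Gam, x\col U$ is just $\Gam \cap (x\col U)$, and in particular $\Gam = \Gam, x\col\tempty$.

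The only item that requires more than routine bookkeeping is (6), where one must manufacture a single global witness context $\Gam'$ from pointwise witnesses. For each $x$, the assumption $\Gam \subseteq \Del$ supplies some $V_x$ with $\Gam(x) \approx \Del(x) \cap V_x$; the candidate is $\Gam'(x) \eqdef V_x$. The one thing to verify is that $\dom{\Gam'}$ is finite: if $\Gam(x) = \tempty$ then $\Del(x) \cap V_x \approx \tempty$, and by Lemma~\ref{lem:capprop}(2) combined with the remark following the definition of types this forces $V_x = \tempty$. Hence $\dom{\Gam'} \subseteq \dom\Gam$, which is finite, so $\Gam'$ is a legitimate context and $\Gam \approx \Del \cap \Gam'$ holds by construction. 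The converse direction of (6) is trivial from the pointwise definition of $\subseteq$.

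No real conceptual obstacle stands out; the entire proof is clerical work on top of the two type-level lemmas, the only mildly delicate point being the finite-support argument in the proof of (6).
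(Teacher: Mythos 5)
Your proof is correct and follows the same route as the paper, which simply declares all ten items straightforward consequences of Lemma~\ref{lem:capprop} and Lemma~\ref{lem:capsubprop} applied pointwise. Your additional care on item (6) — checking that the pointwise witnesses $V_x$ assemble into a context with finite support, via $\Delta(x)\cap V_x\approx\tempty$ forcing $V_x=\tempty$ — is exactly the one detail the paper leaves implicit, and it is handled correctly.
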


\begin{proof}
  The proofs of these properties are straightforward with the use of Lemma~\ref{lem:capprop} and Lemma~\ref{lem:capsubprop}.
\end{proof}

\subsection{Typing judgements}

Now that we have defined types  and contexts we can define typing derivations.
Instead of defining three typing systems for the three calculi we can define
one typing system for the common grammar.

\begin{definition}[Typability in System~\InterLambdaSub]\strut

  The judgement $\Derilis \Gamma M U$ denotes the derivability of $\Derili \Gamma M U$ with the rules of Fig.~\ref{fig:InterLambdaSub}.  We write $\Derilis[n] \Gamma M U$ if there exists a derivation with $n$ uses of the (App) rule.
\end{definition}

\begin{figure}[h]
  \[\begin{array}{|c|}

    \upline

    \infer{\Derili {x \col F} x F}{\strut}\text{(Var)} \qquad

    \infer{\Derili \Gamma {\l x . M} {A \rightarrow F}}
    {\Derili {\Gamma, x \col U} M F \quad A \subseteq U}\text{(Abs)}\qquad

    \infer{\Derili {\Gamma \cap \Delta} {M N} F}
    {\Derili \Gamma M {A \rightarrow F} \quad \Derili \Delta N A}\text{(App)}\\\\

    \infer{\Derili {\Gamma \cap \Delta} M {A \cap B}}
    {\Derili \Gamma M A \quad \Derili \Delta M B}\text{(Inter)} \qquad

    \infer{\Derili {} M \tempty}{\strut}\text{(Omega)}\\\\

    \infer{\Derili {\Gamma \cap \Delta} {\Subst M x N} F}
    {\Derili \Gamma N A \quad \Derili {\Delta, x : U} M F \quad
      U = A \vee U = \tempty}\text{(Subst)}\\\\

    % \infer{\Derili {\Gamma \cap \Delta} {\Subst M x N} F}
    % {\Derili \Gamma N A \quad \Derili {\Delta, x \col \tempty} M F}\\\\

    \infer{\Derili {\Gamma, x \col U \cap (V_1 \cap V_2)} {\contraction x y z M} F}
    {\Derili {\Gamma, x \col U, y \col V_1, z \col V_2} M F}\text{(Contraction)}
    \qquad

    \infer{\Derili {\Gamma, x \col U \cap A} {\weakening x M} F}
    {\Derili {\Gamma, x \col U} M F}\text{(Weakening)}

    \downline
  \end{array}
  \]
  \caption{System \InterLambdaSub}
  \label{fig:InterLambdaSub}
\end{figure}

Note that the rule deriving $\Derili {} M \tempty$ does not interfere with the
rest of the system as $\tempty$ is not an $A$-type. 
It is only here for convenience to synthetically express some statements and
proofs that would otherwise need a verbose case analysis
(\eg Lemma~\ref{lem:typsubSR}).

Examples of how $\l$-terms are typed are given in the next section.

Also, note that the introduction rule for the intersection is directed by the
syntax of the types: 
If $\Derilis {\Gamma\cap\Delta} M {A \cap B}$, then the last rule of the derivation is necessarily $(Inter)$ and its premises are necessarily $\Derili {\Gamma} M A$ and $\Derili{\Delta} M B$.
We are not aware of any intersection type system featuring this property, which
is here a consequence of dropping the implicit AC properties of intersections,
and a clear advantage over the system in~\cite{bernadetleng11}.
Similar properties can however be found in systems such as those of~\cite{GhilezanILL11}, which avoids having to explicitly type a term by an intersection type.

\begin{lemma}[Basic properties of \InterLambdaSub]\strut\label{lem:basicprop}

  \begin{enumerate}[\em(1)]

  \item If $\Derilis[n] \Gamma M {U \cap V}$, then there exist $\Gamma_1$,
    $\Gamma_2$, $n_1$, $n_2$ such that $n = n_1 + n_2$,
    $\Gamma = \Gamma_1 \cap \Gamma_2$, $\Derilis[n_1] {\Gamma_1} M U$ and
    $\Derilis[n_2]{\Gamma_2} M V$.

  \item If $\Derilis \Gamma M {A}$, then $\dom \Gam=\FV M$. 

  \item If $\Derilis[n] \Gamma M U$ and $U \approx U'$, then there exists $\Gamma'$
    such that $\Gamma \approx \Gamma'$ and $\Derilis[n] {\Gamma'} M {U'}$

  \item If $\Derilis[n] \Gamma M U$ and $U \subseteq V$, then there exist $m$ and
    $\Delta$ such that $m \leq n$, $\Gamma \subseteq \Delta$ and
    $\Derilis[m] \Delta M V$

  \end{enumerate}
\end{lemma}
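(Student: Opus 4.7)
The plan is to prove the four items in order, each leaning on the previous ones. The central observation is that the typing rules of Figure~\ref{fig:InterLambdaSub} are highly syntax-directed with respect to the subject type: when the subject is a true intersection $A\cap B$ (neither side $\tempty$), the last rule is forced to be (Inter); when it is $\tempty$, only (Omega) applies; otherwise the rule is determined by the shape of the term together with that of the $F$-type. For~(1), I would split on whether $U$ or $V$ equals $\tempty$. If $V=\tempty$ (symmetrically $U=\tempty$) then $U\cap V=U$, so I set $\Gamma_1\eqdef\Gamma$, $\Gamma_2\eqdef()$, $n_1\eqdef n$, $n_2\eqdef 0$, and use (Omega) to derive $\Derilis[0]{()}{M}{\tempty}$; if both are $\tempty$ the same rule forces $\Gamma=()$ and $n=0$. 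In the remaining case $U\cap V$ is genuinely an intersection, and the given derivation must therefore end with (Inter), which directly supplies the decomposition.

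For~(2), I would induct on the derivation of $\Derilis{\Gamma}{M}{A}$; the case (Omega) is excluded since $\tempty$ is not an $A$-type. The cases (Var), (Abs), (App), (Inter) reduce to a routine set-theoretic equality between $\dom{\cdot}$ and $\FV{\cdot}$ via the induction hypothesis, and the cases (Subst), (Weakening), (Contraction) require a further split according to whether the $U$-type annotations are $\tempty$; because the notation $\Delta,x\col U$ presupposes $x\notin\dom\Delta$, these sub-cases line up cleanly with the defining equations of $\FV{\cdot}$ in Figure~\ref{fig:freevars}. For~(3), I would induct on the derivation of $U\approx U'$. The reflexivity cases on $F$ and $\tempty$ are immediate; for commutativity, associativity, and congruence of $\cap$, I would appeal to~(1) to decompose the derivation along the intersection structure of the subject type, then reassemble the pieces in the desired shape via (Inter). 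Since (Inter) contributes no counted use of (App) and~(1) preserves the total index, the size $n$ is preserved; the resulting context is $\approx$-equivalent (not necessarily equal) to the original by the corresponding properties of $\cap$ on contexts. Transitivity is a direct composition of two appeals to the induction hypothesis.

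For~(4), I would unfold $U\subseteq V$ into $U\approx V\cap U'$; (3) then produces a derivation of $V\cap U'$ of the same size $n$ in some $\Gamma'\approx\Gamma$; (1) splits it into a derivation of $V$ of size $m\leq n$ in some $\Delta$ together with a derivation of $U'$ in some $\Delta'$ with $\Gamma'=\Delta\cap\Delta'$, so $\Gamma\approx\Delta\cap\Delta'\subseteq\Delta$ and $m\leq n$ as required. The main obstacle is~(3): the rearrangement of nested (Inter) applications to mirror the equations of $\approx$ demands careful bookkeeping of which premise carries which fragment of the context, given that contexts are only preserved up to $\approx$ and not up to equality. Once~(1) is available, however, this is essentially mechanical, and the tight interplay between~(1) and~(3) is what makes~(4) immediate.
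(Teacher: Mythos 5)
Your proposal is correct and follows the same route as the paper's (very terse) proof: point~(1) generalises to $U$-types the syntax-directedness remark preceding the lemma, point~(2) is by induction on the typing tree, point~(3) by induction on the derivation of $U\approx U'$, and point~(4) combines~(3) and~(1) via the definition of $\subseteq$. The details you supply — the $\tempty$ case split in~(1), the observation that (Inter) is not counted so $n$ is preserved in~(3), and the bookkeeping of contexts up to $\approx$ — are exactly the ones the paper leaves implicit.
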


\begin{proof}
  The first point generalises to $U$-types the previous remark. The second point is by induction on the typing tree. The third point is by induction on the derivation of $U \approx U'$, and the fourth one combines the previous points.
\end{proof}

The following lemma is used to prove Subject Reduction in both $\LambdaS$ and
$\LambdaLxr$:

\begin{toappendix}
  \begin{lemma}[Typing of explicit substitution] \label{lemma:TypingSubst}
    Assume $\Derilis[n] {\Gamma, x \col A} M B$ and $\Derilis[m] \Delta N A$.
    Then, there exists $\Gamma'$ such that $\Gamma' \approx \Gamma \cap \Delta$ and $\Derilis[n + m] {\Gamma'} {\Subst M x N} B$.
  \end{lemma}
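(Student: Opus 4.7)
The plan is to proceed by induction on the structure of the $A$-type $B$.

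In the base case, $B$ is an $F$-type. I would apply the (Subst) rule directly, taking the two given derivations as premises and choosing $U = A$ in the side condition. Since (Subst) does not introduce a new use of (App), this immediately yields $\Derilis[n+m]{\Delta \cap \Gamma}{\Subst M x N}{B}$, and $\Delta \cap \Gamma \approx \Gamma \cap \Delta$ by commutativity of $\cap$ on contexts.

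For the inductive step $B = B_1 \cap B_2$, I would apply Lemma~\ref{lem:basicprop}(1) twice. First, splitting the derivation of $M$ provides $\Derilis[n_i]{\Gamma_i, x\col A_i}{M}{B_i}$ for $i \in \{1,2\}$, with $n_1 + n_2 = n$, $\Gamma_1 \cap \Gamma_2 = \Gamma$, and $A_1 \cap A_2 = A$. Second, splitting the derivation of $N$ at type $A_1 \cap A_2$ provides $\Derilis[m_i]{\Delta_i}{N}{A_i}$, with $m_1 + m_2 = m$ and $\Delta_1 \cap \Delta_2 = \Delta$. The induction hypothesis then produces $\Gamma_i' \approx \Gamma_i \cap \Delta_i$ with $\Derilis[n_i + m_i]{\Gamma_i'}{\Subst M x N}{B_i}$ for each $i$, and a single application of (Inter) combines these into $\Derilis[n+m]{\Gamma_1' \cap \Gamma_2'}{\Subst M x N}{B_1 \cap B_2}$, whose context is $\approx \Gamma \cap \Delta$ by associativity and commutativity of $\cap$, and whose (App) count is $(n_1+m_1)+(n_2+m_2) = n+m$.

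The one subtlety, which I expect to be the main thing to verify carefully, is that the contextual split furnished by Lemma~\ref{lem:basicprop}(1) could a priori assign $A_i = \tempty$ to one of the sub-derivations for $M$, in which case $A_i$ would not be an $A$-type and the induction hypothesis, as stated, would not directly apply. This is ruled out by Lemma~\ref{lem:basicprop}(2): since $A$ is an $A$-type, $x \in \dom{\Gamma, x\col A} = \FV(M)$, and applying the same clause to each sub-derivation (whose conclusion $B_i$ is again an $A$-type) forces $x \in \dom{\Gamma_i, x\col A_i}$ and hence $A_i \neq \tempty$. This ensures that $A_1$ and $A_2$ are genuine $A$-types and that the induction hypothesis can be applied to each of the two pairs of sub-derivations without further generalisation of the statement.
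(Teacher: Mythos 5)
Your proof is correct and follows essentially the same route as the paper's: induction on $B$, using the (Subst) rule in the base case and Lemma~\ref{lem:basicprop}(1) plus the induction hypothesis and (Inter) in the intersection case. Your explicit justification that $A_1$ and $A_2$ are genuine $A$-types (via Lemma~\ref{lem:basicprop}(2) and $x\in\FV M$) addresses a point the paper's proof leaves implicit, and it is exactly the right argument.
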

\end{toappendix}

\begin{toappendix}
  [\begin{proof}See Appendix~\thisappendix.\end{proof}]
  \begin{proof}

    By induction on $B$:

    \begin{iteMize}{$\bullet$}

    \item If $B = F$, then the result is trivial : we use the (Subst) rule.

    \item If $B = B_1 \cap B_2$, then, by Lemma~\ref{lem:basicprop}.1, there
			exist $\Gamma_1$, $\Gamma_2$, $A_1$,
      $A_2$, $n_1$ and $n_2$ such that $\Gamma = \Gamma_1 \cap \Gamma_2$,
      $A = A_1 \cap A_2$, $n = n_1 + n_2$,
      $\Derili[n_1] {\Gamma_1, x \col A_1} M {B_1}$ and
      $\Derili[n_2] {\Gamma_2, x \col A_2} M {B_2}$.
			By hypothesis, $\Derilis[m] \Delta N A$.
      Hence, by Lemma~\ref{lem:basicprop}.1, there exist $\Delta_1$, $\Delta_2$,
			$m_1$ and $m_2$ such that
      $\Delta = \Delta_1 \cap \Delta_2$, $m = m_1 + m_2$,
      $\Derili[m_1] {\Delta_1} N {A_1}$ and $\Derili[m_2] {\Delta_2} N {A_2}$.

      By induction hypothesis, there exist $\Gamma_1'$ and $\Gamma_2'$ such that
      $\Gamma_1' \approx \Gamma_1 \cap \Delta_1$,
      $\Gamma_2' \approx \Gamma_2 \cap \Delta_2$,
      $\Derili[n_1 + m_1] {\Gamma_1'} {\Subst M x N} {B_1}$ and
      $\Derili[n_2 + m_2] {\Gamma_2'} {\Subst M x N} {B_2}$.
      So we have $\Derili[n_1 + m_1 + n_2 + m_2] {\Gamma_1' \cap \Gamma_2'}
      {\Subst M x N} {B_1 \cap B_2}$ with $n_1 + m_1 + n_2 + m_2 = n + m$,
      $\Gamma_1' \cap \Gamma_2' \approx
      (\Gamma_1 \cap \Delta_1) \cap (\Gamma_2 \cap \Delta_2) \approx
      (\Gamma_1 \cap \Gamma_2) \cap (\Delta_1 \cap \Delta_2) \approx
      \Gamma \cap \Delta$ and $B = B_1 \cap B_2$.

    \end{iteMize}

  \end{proof}
\end{toappendix}

\begin{remark}

  The previous theorem is not true if we replace $A$ by $\tempty$:
  If $B$ is an intersection, we would need to duplicate the typing tree of $N$.

\end{remark}

\section{Soundness}
\label{sec:soundness}

In this section we prove Subject Reduction and Soundness: respectively, the property that typing is preserved by reduction, and the property that if a term is typable, then it is strongly normalising for the reduction relation.  This is true for the three calculi, but specific to each of them because the reduction relation is itself specific to each calculus.

Therefore in this section, we work separately on each calculus: each time, we define the reduction rules and prove the Subject Reduction property, which leads to Soundness.

\subsection{Pure $\l$-calculus}

\label{sec:SNLambda}

Remember that a pure $\l$-term is a term $M$ that does not contain any
explicit substitutions $\Subst M x N$, or weakenings $\weakening x M$ or
contractions $\contraction x y z M$.

As we will see, only strongly normalising terms can be assigned an $A$-type by
the system (Theorem~\ref{th:soundness}).
In fact, all of them can (Theorem~\ref{th:lambda-compl}), see for instance how
the example below correctly uses the abstraction rule ($A\subseteq \tempty$).

\[\infer{\Derili{x\col F} {\l y.x}{A\arr F}}
{
  \infer{\Derili{x\col F,y\col \tempty} {x}{F}}
  {}
}
\]

Owing to non-idempotency, no closed term inhabits the simple type
$(\tau\arr \tau\arr \tau')\arr(\tau\arr \tau')$ (with $\tau\neq\tau'$), but its
natural inhabitant $\l f.\l x.f\ x\ x$  in a simply-typed system can here be
given type $(\tau\arr \tau\arr \tau')\arr(\tau\cap \tau\arr \tau')$.

\begin{definition}[Reduction in $\l$-calculus]\strut\\
  If $M$ and $N$ are pure $\l$-terms, we denote by $\subst M x N$ the result of the (implicit) substitution (as defined in \eg\cite{Bar84}).

  The reduction rule is $\beta$-reduction:
  \[(\l x.M)\ N \Rew{} \subst M x N\]
  The congruent closure of this rule is denoted $\Rew{\beta}$.
  $\SNLambda$ denotes the set of strongly normalising $\l$-terms
  (for $\beta$-reduction).

\end{definition}

\begin{toappendix}
  \begin{lemma}[Typing of implicit substitutions]
    \label{lem:typsubSR}
    If $\Derilis[n] {\Gamma, x \col U} M A$ and $\Derilis[m] \Delta N U$, then there exists $\Gamma'$ such that $\Gamma' \approx \Gamma \cap \Delta$ and $\Derilis[n + m] {\Gamma'} {\subst M x N} A$.
  \end{lemma}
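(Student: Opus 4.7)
The plan is to proceed by induction on the derivation $\Derilis[n] {\Gamma, x\col U} M A$, with case analysis on its last rule. Since $A$ is an $A$-type, the (Omega) rule cannot be the last rule. Note also that in the pure $\l$-calculus the rules (Subst), (Weakening) and (Contraction) are inapplicable, so only (Var), (Abs), (App) and (Inter) need to be considered.

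In the (Var) case, $M$ is some variable $z$ and $n = 0$: if $z = x$, then $U = A = F$ and $\Gamma = ()$, so since $\subst x x N = N$ we take $\Gamma' := \Delta$ and reuse the derivation of $N$; if $z \neq x$, then $\Gamma = z\col F$ and $U = \tempty$, which forces $\Delta = ()$ and $m = 0$ (by inspection of the derivation of $N$), while the substitution leaves $z$ unchanged, so $\Gamma' := \Gamma$ works. In the (Abs) case ($M = \l y.M'$, $A = A' \rightarrow F$), Barendregt's convention ensures $y$ is fresh for both $x$ and $N$; applying the induction hypothesis to the premise typing $M'$ together with the derivation of $N$ yields a derivation of $\subst{M'}{x}{N}$ in a context of the form $\Gamma''', y\col V$ with $\Gamma''' \approx \Gamma \cap \Delta$, and reapplying (Abs) (using the side-condition $A' \subseteq V$ inherited from the original derivation) concludes.

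The (Inter) and (App) cases are the most delicate. In both, the context $\Gamma, x\col U$ is split by the two premises as $\Gamma_1 \cap \Gamma_2$. We decompose each $\Gamma_i$ point-wise as $\tilde\Gamma_i, x\col U_i$ so that $\tilde\Gamma_1 \cap \tilde\Gamma_2 = \Gamma$ and $U_1 \cap U_2 = U$; then Lemma~\ref{lem:basicprop}.1 applied to $\Derilis[m] \Delta N {U_1 \cap U_2}$ splits the derivation of $N$ into $\Derilis[m_i]{\Delta_i} N {U_i}$ with $\Delta_1 \cap \Delta_2 \approx \Delta$ and $m_1 + m_2 = m$. The induction hypothesis applied to each pair produces either derivations of $\subst{M_1}{x}{N}$ and $\subst{M_2}{x}{N}$ (in the (App) case) or two typings of $\subst M x N$ at types $A_1, A_2$ (in the (Inter) case), which are recombined by (App) or (Inter). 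In either case, the resulting context is equivalent to $(\tilde\Gamma_1 \cap \Delta_1) \cap (\tilde\Gamma_2 \cap \Delta_2) \approx \Gamma \cap \Delta$, and the (App)-counts add up to $n + m$ as required.

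The main technical effort lies precisely in this bookkeeping around the splits: one must ensure, via the point-wise definition of $\cap$ on contexts and the properties collected in Lemma~\ref{lem:capprop}, that the extraction of the $\tilde\Gamma_i$ and $U_i$ is coherent and that the final $\Gamma'$ satisfies $\Gamma' \approx \Gamma \cap \Delta$. No technique beyond the structural manipulation of intersection types and Lemma~\ref{lem:basicprop} should be needed.
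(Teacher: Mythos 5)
Your proposal is correct and follows essentially the same route as the paper's proof: induction on the typing derivation, with the variable cases handled exactly as you describe (including the observation that $U=\tempty$ forces $\Delta=()$ and $m=0$), the abstraction case via freshness of the bound variable, and the (App)/(Inter) cases resolved by splitting the derivation of $N$ with Lemma~\ref{lem:basicprop}.1 and recombining after the induction hypothesis. The only negligible difference is that Lemma~\ref{lem:basicprop}.1 gives the context split of $\Delta$ as an equality rather than up to $\approx$, which only simplifies your bookkeeping.
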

\end{toappendix}

\begin{toappendix}
  [\begin{proof}See Appendix~\thisappendix.\end{proof}]
  \begin{proof}

    By induction on the derivation of $\Deri {\Gamma, x : U} M A$.

    \begin{iteMize}{$\bullet$}

    \item $\infer{\Deri {x : F} x F}{}$\\
      Here $\Gamma = ()$, $x = M$, $n = 0$, $A = F$ and $U = F$.
      Therefore, $\subst M x N = N$ and $\Gamma \cap \Delta = \Delta$.
			By hypothesis, $\Deri[m] \Delta N U$.
      So we have $\Deri[n + m] {\Gamma \cap \Delta} {\subst M x N} A$ with
      $\Gamma \cap \Delta \approx \Gamma \cap \Delta$.\bigskip

    \item $\infer{\Deri {y : F} y F}{}$ with $y \neq x$\\
      Here $\Gamma = (y : F)$, $A = F$, $U = \tempty$, $M = y$, $n = 0$.
      Since $\tempty = U$ and $\Deri[m] \Delta N U$, we have $\Delta = ()$ and
			$m = 0$.
      Then, we have, $\Deri[n + m] {\Gamma \cap \Delta} {\subst M x N} A$
			because $\subst M x N = y$.\bigskip

    \item $\infer{\Deri[n_1 + n_2] {\Gamma_1 \cap \Gamma_2, x : U_1 \cap U_2}
        M {A_1 \cap A_2}}{\Deri[n_1] {\Gamma_1, x : U_1} M {A_1} \quad \Deri[n_2]
        {\Gamma_2, x : U_2} M {A_2}}$\\
      Here $A = A_1 \cap A_2$, $n = n_1 + n_2$, $U = U_1 \cap U_2$ and
      $\Gamma = \Gamma_1 \cap \Gamma_2$.
			By hypothesis, $\Deri[m] \Delta N U$.
      So, by Lemma~\ref{lem:basicprop}.1, there exist $\Delta_1$, $\Delta_2$,
			$m_1$, $m_2$ such that
      $\Delta = \Delta_1 \cap \Delta_2$, $\Deri[m_1] {\Delta_1} N {U_1}$ and
      $\Deri[m_2] {\Delta_2} N {U_2}$.
      By induction hypothesis, there exists $\Gamma'_1$ such that
      $\Gamma'_1 \approx \Gamma_1 \cap \Delta_1$ and
      $\Deri[n_1 + m_1] {\Gamma'_1} {\subst M x N} {A_1}$.
      We also have the existence of $\Gamma'_2$ such that
      $\Gamma'_2 \approx \Gamma_2 \cap \Delta_2$ and
      $\Deri[n_2 + m_2]{\Gamma'_2} {\subst M x N} {A_2}$. 

      Therefore we have $\Deri[n_1 + m_1 + n_2 + m_2]{\Gamma'_1 \cap \Gamma_2'} M A$
      with $n_1 + m_1 + n_2 + m_2 = n + m$ and $\Gamma'_1 \cap \Gamma'_2 \approx
      (\Gamma_1 \cap \Delta_1) \cap (\Gamma_2 \cap \Delta_2) \approx
      (\Gamma_1 \cap \Gamma_2) \cap (\Delta_1 \cap \Delta_2) \approx \Gamma
      \cap \Delta$.\bigskip

    \item $\infer{\Deri[n_1 + n_2 + 1] {\Gamma_1 \cap \Gamma_2, x : U_1 \cap U_2}
        M {A}}{\Deri[n_1] {\Gamma_1, x : U_1} {M_1} {A_1 \rightarrow A} \quad \Deri[n_2]
        {\Gamma_2, x : U_2} {M_2} {A_1}}$\\
      Here $M = M_1 M_2$, $n = n_1 + n_2 + 1$, $U = U_1 \cap U_2$ and
      $\Gamma = \Gamma_1 \cap \Gamma_2$.
			By hypothesis, $\Deri[m] \Delta N U$.
      So, by Lemma~\ref{lem:basicprop}, there exist $\Delta_1$, $\Delta_2$,
			$m_1$, $m_2$ such that
      $\Delta = \Delta_1 \cap \Delta_2$, $m = m_1 + m_2$,
			$\Deri[m_1] {\Delta_1} N {U_1}$ and
      $\Deri[m_2] {\Delta_2} N {U_2}$.
      By induction hypothesis, there exists $\Gamma'_1$ such that
      $\Gamma'_1 \approx \Gamma_1 \cap \Delta_1$ and
      $\Deri[n_1 + m_1] {\Gamma'_1} {\subst {M_1} x N}{A_1 \rightarrow A}$.
      We also have the existence of $\Gamma'_2$ such that
      $\Gamma'_2 \approx \Gamma_2 \cap \Delta_2$ and
			$\Deri[n_2 + m_2] {\Gamma'_2}
      {\subst {M_2} x N}{A_1}$. 

      Therefore we have $\Deri[n_1 + m_1 + n_2 + m_2 + 1]
      {\Gamma'_1 \cap \Gamma_2'} {\subst {M_1 M_2} x N} A$ with
			$n_1 + m_1 + n_2 + m_2 + 1 = n + m$ and
      $\Gamma'_1 \cap \Gamma'_2 \approx (\Gamma_1 \cap \Delta_1) \cap
      (\Gamma_2 \cap \Delta_2) \approx (\Gamma_1 \cap \Gamma_2) \cap
      (\Delta_1 \cap \Delta_2) \approx \Gamma \cap \Delta$.\bigskip

    \item $\infer{\Deri[n] {\Gamma, x \col U} {\l y . M_1} {A_1 \rightarrow F}}
      {\Deri[n] {\Gamma, x \col U, y \col V} {M_1} F \quad {A_1} \subseteq V}$
      with $x \neq y$ and $y \notin fv(N)$.\\

      \noindent By induction hypothesis, there exist $\Gamma'$ such that
      $(\Gamma, y \col V) \cap \Delta \approx \Gamma'$ and
      $\Deri[n + m] {\Gamma'} {\subst {M_1} x N} F$.
      $y \notin fv(N)$, so there exist $\Gamma''$ and $V'$ such that
      $\Gamma' \approx (\Gamma'', y \col V')$, $V \approx V'$, and
      $\Gamma'' \approx \Gamma \cap \Delta$. Then we can conclude.\bigskip

    \item We do not have to deal with the other rules because they cannot be used
      for a pure $\l$-term.

    \end{iteMize}

  \end{proof}
\end{toappendix}

\begin{toappendix}
  \begin{theorem}[Subject Reduction for $\l$]
    \label{th:SubjectReduction}
    If $\Derilis[n] {\Gamma} M A$ and $M \Rew{\beta} M'$, then there exist $m$ and
    $\Delta$ such that $m < n$, $\Gamma \subseteq \Delta$ and
    $\Derilis[m] \Delta {M'} A$.
  \end{theorem}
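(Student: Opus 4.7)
The plan is to proceed by induction on the typing derivation $\Derilis[n] \Gamma M A$, with a case split on its last rule. The (Var) case is vacuous since variables do not reduce, and (Omega) is excluded because $A$ is assumed to be an $A$-type. In every remaining case I must exhibit a derivation $\Derilis[m] \Delta {M'} A$ with $m < n$ and $\Gamma \subseteq \Delta$.

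When the derivation ends with (Inter), so $A = A_1 \cap A_2$, Lemma~\ref{lem:basicprop}.1 splits it into $\Derilis[n_i] {\Gamma_i} M {A_i}$ with $n = n_1 + n_2$ and $\Gamma = \Gamma_1 \cap \Gamma_2$. Because the redex contracted in $M \Rew{\beta} M'$ is an application occurring inside $M$, it must be witnessed by at least one use of (App) in each subderivation, so both $n_i \geq 1$; applying the induction hypothesis to each yields $\Derilis[m_i] {\Delta_i} {M'} {A_i}$ with $m_i < n_i$, and recombining by (Inter) gives the goal with count $m_1 + m_2 < n$ and $\Gamma \subseteq \Delta_1 \cap \Delta_2$ by monotonicity. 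The congruence cases — (Abs) or (App) with the reduction occurring strictly inside a premise subterm — reduce immediately to the induction hypothesis, simply reassembling with the same rule. In the (Abs) subcase the IH may return the body $P'$ under a context annotating $x$ by some $U'$ with $U \subseteq U'$; the side condition $B \subseteq U'$ needed to rebuild (Abs) still holds by transitivity from $B \subseteq U$.

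The core case is the root $\beta$-reduction $M = (\l x.P)\,Q \Rew{\beta} \subst P x Q$, with $A$ an $F$-type $F$. Inverting the final (App) and (Abs) gives $\Derilis[n_1] {\Gamma_1, x \col U} P F$ with $B \subseteq U$, together with $\Derilis[n_2] {\Gamma_2} Q B$, where $n = n_1 + n_2 + 1$ and $\Gamma = \Gamma_1 \cap \Gamma_2$. I first use Lemma~\ref{lem:basicprop}.4 to promote the derivation of $Q$ from type $B$ to type $U$, obtaining $\Derilis[m']{\Delta_2} Q U$ with $m' \leq n_2$ and $\Gamma_2 \subseteq \Delta_2$; Lemma~\ref{lemma:TypingSubst} then delivers $\Derilis[n_1 + m']{\Gamma'}{\subst P x Q} F$ with $\Gamma' \approx \Gamma_1 \cap \Delta_2$, whence $\Gamma \subseteq \Gamma'$ and $n_1 + m' \leq n_1 + n_2 < n$. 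The degenerate sub-case $U = \tempty$ (where $x \notin fv(P)$, $\subst P x Q = P$, and the promotion of $Q$ collapses to an (Omega) derivation with empty context and zero (App)-count) is transparently absorbed by this argument. I expect the main technical subtlety to be the quantitative bookkeeping: the strict decrease of the (App)-count depends on the linear additivity of sizes through the splitting and substitution lemmas, which is precisely what non-idempotency buys — in an idempotent system one would not generally obtain a strict drop from a single redex contraction.
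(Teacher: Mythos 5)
Your proof is correct and follows essentially the same route as the paper's: split intersections with Lemma~\ref{lem:basicprop}.1, promote the argument's type from $B$ to $U$ with Lemma~\ref{lem:basicprop}.4, close the root-redex case with a substitution lemma whose counts add, and handle the congruence cases by reassembly (your induction is on the typing derivation rather than on the reduction followed by the type, but the content is identical). One correction: the substitution lemma you need is Lemma~\ref{lem:typsubSR} (typing of \emph{implicit} substitutions, stated for arbitrary $U$-types, which is exactly why the $U=\tempty$ sub-case is absorbed as you describe), not Lemma~\ref{lemma:TypingSubst}, whose conclusion concerns the explicit substitution $\Subst P x Q$ and whose hypothesis requires an $A$-type.
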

\end{toappendix}

\begin{toappendix}
  [\begin{proof}See Appendix~\thisappendix.\end{proof}]
  \begin{proof}

    First by induction on $M \Rew{\beta} M'$, then by induction on $A$.

    \begin{iteMize}{$\bullet$}

    \item If there exist $A_1$ and $A_2$ such that $A = A_1 \cap A_2$, then,
			by Lemma~\ref{lem:basicprop}.1, there
      exist $\Gamma_1$, $\Gamma_2$, $n_1$, $n_2$ such that
      $\Gamma = \Gamma_1 \cap \Gamma_2$, $n = n_1 + n_2$, $\Deri[n_1] {\Gamma_1}
      M {A_1}$ and $\Deri[n_2] {\Gamma_2} M {A_2}$.
      By induction hypothesis (on $(M \Rew{\beta} M', A_1)$ and
      $(M \Rew{\beta} M', A_2)$), there exist $\Delta_1$ and $m_1$ such that
      $\Gamma_1 \subseteq \Delta_1$, $m_1 < n_1$ and
      $\Deri[m_1] {\Delta_1} {M'} {A_1}$.
      We also have the existence of $\Delta_2$ and $m_2$ such that $m_2 < n_2$,
      $\Gamma_2 \subseteq \Delta_2$ and $\Deri[m_2] {\Delta_2} {M'} {A_2}$.
      Then we have $\Deri[m_1 + m_2] {\Delta_1 \cap \Delta_2} {M'} A$ with
      $m_1 + m_2 < n$ and $\Gamma \subseteq \Delta_1 \cap \Delta_2$.

    \item $\infer{(\l x . M_1) M_2 \Rew{\beta} \subst {M_1} x {M_2}}{}$ and
      $A = F$\\
      Then there exist $\Gamma_1$, $\Gamma_2$, $n_1$, $n_2$ and $B$ such that
      $\Gamma = \Gamma_1 \cap \Gamma_2$, $n = n_1 + n_2 + 1$, $\Deri[n_1] {\Gamma_1}
      {\l x . M_1} {B \rightarrow F}$ and $\Deri[n_2] {\Gamma_2} {M_2} B$.
      So there exists $U$ such that $B \subseteq U$ and $\Deri[n_1] {\Gamma_1, x : U}
      {M_1} F$.
      Then, by Lemma~\ref{lem:basicprop}.4, there exist $\Gamma_2'$ and $n_2'$
			such that
      $\Gamma_2 \subseteq \Gamma_2'$, $n_2' \leq n_2$ and
      $\Deri[n_2']{\Gamma_2'} {M_2} U$.
      Therefore, by the substitution lemma (Lemma~\ref{lem:typsubSR}), there exists $\Gamma'$ such that
      $\Gamma' \approx \Gamma_1 \cap \Gamma_2'$ and
      $\Deri[n_1 + n_2'] {\Gamma'} {\subst {M_1} x {M_2}} F$ with $A = F$,
      $n_1 + n_2' < n$ and $\Gamma = \Gamma_1 \cap \Gamma_2 \subseteq
      \Gamma_1 \cap \Gamma_2' \approx \Gamma'$.

    \item $\infer{\l x . M_1 \Rew{\beta} \l x . M_1'}{M_1 \Rew{\beta} M_1'}$ and
      $A = F$\\
      Here, there exist $B$, $G$ and $U$ such that $A = B \rightarrow G$,
      $B \subseteq U$ and $\Deri[n] {\Gamma, x : U} {M_1} G$.
      By induction hypothesis, there exist $\Gamma'$, $U'$ and $m$ such that
      $\Gamma \subseteq \Gamma'$, $U \subseteq U'$, $m < n$ and
      $\Deri[m] {\Gamma', x : U'} {M_1'} G$.
      So we have $B \subseteq U'$. Hence $\Deri[m] {\Gamma'} {M'} A$.

    \item $\infer{M_1 M_2 \Rew{\beta} M_1' M_2}{M_1 \Rew{\beta} M_1'}$ and $A = F$\\
      Then there exist $B$, $\Gamma_1$, $\Gamma_2$, $n_1$, $n_2$ such that
      $n = n_1 + n_2 + 1$, $\Gamma = \Gamma_1 \cap \Gamma_2$, $\Deri[n_1]
      {\Gamma_1} {M_1} {B \rightarrow A}$ and $\Deri[n_2] {\Gamma_2} {M_2} B$.
      By induction hypothesis there exist $\Gamma_1'$ and $n_1'$ such that
      $\Gamma_1 \subseteq \Gamma_1'$, $n_1' < n_1$ and $\Deri[n_1'] {\Gamma_1'} {M_1'}
      {B \rightarrow F}$.
      Therefore $\Deri[n_1' + n_2 + 1] {\Gamma_1' \cap \Gamma_2} {M_1' M_2} A$ with
      $n_1' + n_2 + 1 < n$ and $\Gamma \subseteq \Gamma_1' \cap \Gamma_2$.

    \item $\infer{M_1 M_2 \Rew{\beta} M_1 M_2'}{M_2 \Rew{\beta} M_2'}$ and $A = F$\\
      Then there exist $B$, $\Gamma_1$, $\Gamma_2$, $n_1$, $n_2$ such that
      $n = n_1 + n_2 + 1$, $\Gamma = \Gamma_1 \cap \Gamma_2$, $\Deri[n_1]
      {\Gamma_1} {M_1} {B \rightarrow A}$ and $\Deri[n_2] {\Gamma_2} {M_2} B$.
      By induction hypothesis there exist $\Gamma_2'$ and $n_2'$ such that
      $\Gamma_2 \subseteq \Gamma_2'$, $n_2' < n_2$ and
      $\Deri[n_2'] {\Gamma_2'} {M_2'} B$.
      Therefore, $\Deri[n_1 + n_2' + 1] {\Gamma_1 \cap \Gamma_2'} {M_1 M_2'} A$ with
      $n_1 + n_2' + 1 < n$ and $\Gamma \subseteq \Gamma_1 \cap \Gamma_2'$.\qedhere
    \end{iteMize}
  \end{proof}
\end{toappendix}

The above theorem and its proof are standard but for the quantitative
information in the typability properties.
This is where non-idempotent intersections provide a real advantage over
idempotent ones, as every $\beta$-reduction strictly reduces the number of
application rules in the typing trees: no sub-tree is duplicated in the process.
This is something specific to non-idempotent intersection types, and obviously
false for simple types or idempotent intersection types.

As a direct corollary we obtain:

\begin{theorem}[Soundness for $\l$]  \label{th:soundness}
  If $M$ is a pure $\l$-term and $\Derilis \Gamma M A$, then $M\in \SNLambda$.
\end{theorem}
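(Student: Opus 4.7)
The plan is to read Soundness off Subject Reduction (Theorem~\ref{th:SubjectReduction}) essentially for free. The crucial feature of that theorem is the strict inequality $m<n$ on the number of (App) rules: every $\beta$-step strictly decreases a natural-number measure associated with any typing derivation of $M$. So the whole argument reduces to observing that $\N$ is well-founded.

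Concretely, I would argue by strong induction on this measure. Fix $n$ such that $\Derilis[n] \Gamma M A$ and assume, as inductive hypothesis, that every typable pure $\l$-term whose derivation uses fewer than $n$ applications is in $\SNLambda$. To conclude $M\in\SNLambda$ it suffices, by definition of strong normalisation, to show that every one-step $\beta$-reduct of $M$ is in $\SNLambda$. So pick any $M'$ with $M\Rew{\beta}M'$; Subject Reduction delivers some $\Delta$ with $\Gamma\subseteq\Delta$ and some $m<n$ with $\Derilis[m]\Delta{M'}A$, and the inductive hypothesis applied to $M'$ at the strictly smaller index $m$ gives $M'\in\SNLambda$. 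Equivalently, and slightly more pictorially, any infinite $\beta$-reduction $M=M_0\Rew{\beta}M_1\Rew{\beta}M_2\Rew{\beta}\cdots$ would yield, by iterating Subject Reduction along the sequence, a strictly decreasing chain $n=n_0>n_1>n_2>\cdots$ of natural numbers, which is impossible.

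There is essentially no obstacle here, all the real work having been front-loaded into the quantitative form of Subject Reduction. The only point worth emphasising is that the argument depends crucially on \emph{non-idempotency}: Lemma~\ref{lem:basicprop}.1 forces the (App)-counts to \emph{add} across the two premises of an (Inter) rule, so substituting an intersection-typed argument into a redex cannot share sub-derivations and therefore cannot compensate for the one (App) rule consumed by firing that redex. This additivity is precisely what keeps the index $n$ strictly decreasing under $\Rew{\beta}$, and it is unavailable in a system with idempotent intersections, where a termination measure of this purely syntactic kind could not be sustained.
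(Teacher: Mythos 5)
Your proof is correct and matches the paper's: Theorem~\ref{th:soundness} is stated there as a direct corollary of the quantitative Subject Reduction theorem, the argument being exactly the well-foundedness of $\N$ applied to the strictly decreasing count of (App) rules along any $\beta$-reduction sequence. Your closing remark on non-idempotency also mirrors the paper's own discussion following Theorem~\ref{th:SubjectReduction}.
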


The converse is also true (strongly normalising terms can be typed in
\InterLambdaSub), see Theorem~\ref{th:lambda-compl} and more generally section
\ref{lambda-compl} (with Subject Expansion, \etc).

\subsection{$\LambdaS$}

Remember that terms of $\LambdaS$ are terms that do not contain any weakenings
$\weakening x M$ or contractions $\contraction x y z M$.
In other words, we consider the extension of the pure $\l$-calculus with
explicit substitutions ($\Subst M x N$). This is the same syntax as that of
\lx~\cite{BlooRose:csn1995}, but unlike \lx, the reduction rules only duplicate
substitutions when needed.
For example, the following rule:
\[ \Subst {(M_1 M_2)} x N \Rew{} \Subst {M_1} x N \Subst {M_2} x N \]
can only be applied  if $x \in fv(M_1)$ and $x \in fv(M_2)$.

In the other cases, the explicit substitution will only go one way.
The rules are chosen with the proof of Subject Reduction in mind, which is a
simple adaptation of the proof in the pure $\l$-calculus.
This leads to soundness (typable implies strongly normalising), and therefore
Melliès's counter-example~\cite{Mellies:tlca1995} to strong normalisation is
naturally avoided.

\begin{definition}[Reduction in $\LambdaS$]\strut

  The reduction and equivalence rules of $\LambdaS$ are presented in Fig.~\ref{fig:LambdaSRew}.

  For a set of rules $E \subseteq \{B, S, W\}$ from
  Figure~\ref{fig:LambdaSRew}, $\Rew{E}$ denotes the congruent closure of
  the rules in $E$ modulo the $\equiv$ rule.

  $\SNLambdaS$ denotes the set of strongly normalising $\LambdaS$-terms for
  $\Rew{B,S,W}$.
\end{definition}

\begin{figure}
  \begin{small}
    \[\begin{array}{|llcll|}

      \upline

      B: & (\l x . M) N & \Rew{} & \Subst  M x N & \\
      W: & \Subst y x N & \Rew{} & y & x \neq y\\

      S: & & & &  \\

      & \Subst x x N & \Rew{} & N & (SR) \\

      & \Subst {(M_1 M_2)} x N & \Rew{} & (\Subst {M_1} x N) (\Subst {M_2} x N) &
      x \in fv(M_1), x \in fv(M_2)\\

      & \Subst {(M_1 M_2)} x N & \Rew{} & M_1 (\Subst {M_2} x N) &
      {x \notin fv(M_1), x \in fv(M_2)}\\

      & \Subst {(M_1 M_2)} x  N & \Rew{} & (\Subst {M_1} x N) M_2 &
      x \notin fv(M_2)\\

      & \Subst {(\l y . M)} x N & \Rew{} & \l y . \Subst M x N & 
      x \neq y, y \notin fv(N)\\

      & \Subst {(M_1[y := M_2])} x N & \Rew{} & \Subst {(\Subst {M_1} x N)} y
      {\Subst {M_2} x N} &
      x \in fv(M_1), x \in fv(M_2), y \notin fv(N)\\

      & \Subst {(\Subst {M_1} y {M_2})} x  N & \Rew{} & \Subst {M_1} y
      {\Subst {M_2} x N} &
      x \notin fv(M_1), x \in fv(M_2)\\

      & \Subst {(\Subst M x {N_1})} y  {N_2} & \equiv &
      \Subst {(\Subst {M} y {N_2})} x {N_1} &
      x \neq y, x \notin fv(N_2), y \notin fv(N_1)

      \downline

    \end{array}\]

  \end{small}

  \caption{Reduction and equivalence rules of $\LambdaS$}
  \label{fig:LambdaSRew}

\end{figure}

We call this calculus $\LambdaS$ because it is a variant of the calculi $\lambda_s$ of~\cite{Kesner07} and $\lambda_{es}$ of~\cite{RenaudPhD}.
That of~\cite{RenaudPhD} is more general than that of~\cite{Kesner07} in the sense that it allows the reductions
\[\begin{array}{llcll}
  \!\!\!(1) & \Subst {(M_1 M_2)} x N &\Rew{}& \Subst {M_1} x N M_2 & \mbox{ when }x \notin fv(M_1), x \notin fv(M_2)\\
  \!\!\!(2) & \Subst {(M_1 M_2)} x N &\Rew{}& M_1 \Subst {M_2} x N & \mbox{ when }x \notin fv(M_1), x \notin fv(M_2)\hbox to 68 pt{\hfill}
\end{array}
\]

Reduction $(2)$ is problematic in our approach since, even though the Subject Reduction property would still hold, it would not hold with the quantitative information from which Strong Normalisation can be proved: In the typing tree, the type of $M_1$ is not an intersection (it is an $F$-type) but the type of $M_2$ can be one.  So we cannot directly type $\Subst {M_2} x N$.  If $x \in fv(M_2)$ we can use Lemma~\ref{lemma:TypingSubst}, otherwise we have to duplicate the typing tree of $N$.

We therefore exclude $(2)$ from the calculus, but keep $(1)$ as one of our rules, since it is perfectly compatible with our approach. It is also needed to simulate (in several steps) the general \emph{garbage collection} rule below
\[ \Subst M x N \Rew{} M \quad (x \notin fv(M)) \]
which is present in both~\cite{Kesner07} and~\cite{RenaudPhD}, and which we decide to restrict, for simplicity, to the case where $M$ is a variable different from $x$.\footnote{\cite{Kesner07} needs the general version, if only for the lack of rule $(1)$.} All of our results would still hold with the general garbage collection rule.

\begin{toappendix}
  \appendixbeyond 0
  \begin{lemma}
\label{lem:SNForSW}
    $\Rew{S, W}$ terminates.
  \end{lemma}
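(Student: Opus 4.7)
The plan is to find a well-founded measure $\mu : \LambdaS \to \N$ that strictly decreases along every $\Rew{S,W}$-step and is invariant under the equivalence $\equiv$. I would first try the polynomial interpretation
\[
\mu(x) := 2,\ \ \mu(\lambda y. M) := \mu(M)+1,\ \ \mu(MN) := \mu(M)+\mu(N)+1,\ \ \mu(\Subst M y N) := \mu(M)(\mu(N)+1)+1,
\]
which satisfies $\mu(N)\geq 2$ for every term $N$ by a straightforward induction. Invariance under the $\equiv$-rule is immediate from the symmetry of $(\mu(N_1)+1)(\mu(N_2)+1)$ in its two factors. A direct case analysis then shows that $\mu$ strictly decreases for the $W$-rule, for the $SR$-rule $\Subst x x N \Rew{} N$, for the rules pushing a substitution past a $\lambda$-abstraction or an application, and for the rule $\Subst{(\Subst{M_1} y {M_2})} x N \Rew{} \Subst{M_1} y {\Subst{M_2} x N}$ (when $x\notin\FV{M_1}$); in each case the bound $\mu(N)\geq 2$ supplies a strict decrease.

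The main obstacle is the rule
\[
\Subst{(\Subst{M_1}{y'}{M_2})} x N \Rew{} \Subst{(\Subst{M_1} x N)} {y'} {\Subst{M_2} x N}\qquad (x\in\FV{M_1}\cap\FV{M_2}),
\]
for which the polynomial $\mu$ actually \emph{increases}: the outer substitution is duplicated inside a nested structure, and no naive polynomial can absorb this growth. To handle it I would pair $\mu$ with an auxiliary measure $\nu$ in a lexicographic product $(\nu,\mu)$, where $\nu$ is tailored so that this particular rule strictly decreases $\nu$ while every other rule either decreases $\nu$ or leaves it unchanged (so that $\mu$ is the one to strictly decrease there). A natural candidate for $\nu$ is the multiset of pairs $(|T|,|U|)$ indexed by the substitution nodes $\Subst{T}{z}{U}$ occurring syntactically in $M$, compared under the Dershowitz--Manna ordering, combined with the observation that $|M_1|,|M_2|<|\Subst{M_1}{y'}{M_2}|$: the tricky rule replaces the dominating substitution node whose body has size $|\Subst{M_1}{y'}{M_2}|$ by substitution nodes whose bodies are strictly smaller in this sense. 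Verifying that no other rule undoes this decrease is the technical heart of the proof.

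A conceptually cleaner alternative is to exhibit a simulation of each $\Rew{S,W}$-step by a non-empty sequence of reductions in a closely related explicit-substitution calculus whose termination is already known — for instance the substitution sub-calculus of Kesner's $\lambda_s$~\cite{Kesner07} or Renaud's $\lambda_{es}$~\cite{RenaudPhD}, of which $\LambdaS$ is presented as a minor variant — thereby inheriting termination directly from those works.
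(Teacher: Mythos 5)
You have correctly isolated the one rule on which any naive polynomial interpretation fails, namely the composition rule that duplicates $N$:
\[
\Subst{(\Subst{M_1}{y}{M_2})}{x}{N} \;\Rew{S}\; \Subst{(\Subst{M_1}{x}{N})}{y}{\Subst{M_2}{x}{N}}
\qquad (x\in fv(M_1),\ x\in fv(M_2)).
\]
However, the first component $\nu$ you propose for the lexicographic pair does not in fact decrease on this rule. Under the Dershowitz--Manna ordering every element added to the multiset must lie strictly below some removed element; here the removed elements are the pairs attached to the two left-hand substitution nodes, the larger being $\bigl(|\Subst{M_1}{y}{M_2}|,|N|\bigr)$. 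But the right-hand side contains the node $\Subst{(\Subst{M_1}{x}{N})}{y}{\Subst{M_2}{x}{N}}$, whose body $\Subst{M_1}{x}{N}$ contains a full copy of $N$ and is therefore larger than the removed body $\Subst{M_1}{y}{M_2}$ whenever $|N|>|M_2|$; moreover all substitution nodes occurring inside $N$ are duplicated, and each copy must also be dominated, which again fails when $N$ dwarfs $M_1$ and $M_2$. So $\nu$ can increase: the ``technical heart'' you defer is not merely tedious, the candidate measure is wrong. A second, smaller slip: because of the trailing $+1$ in $\mu(\Subst M y N)=\mu(M)(\mu(N)+1)+1$, your $\mu$ is \emph{not} invariant under the commutation $\Subst{(\Subst M x{N_1})}{y}{N_2}\equiv\Subst{(\Subst M y{N_2})}{x}{N_1}$; the two sides differ by $\mu(N_2)-\mu(N_1)$ (dropping that $+1$ restores symmetry).

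The paper's proof is also a lexicographic argument, but its first component is built from a \emph{multiplicity} function $m_x$, with $m_x(\Subst M y N)=m_x(M)+m_y(M)\times(m_x(N)+1)$ when $x$ occurs in both $M$ and $N$, and then $S(\Subst M x N)=S(M)+m_x(M)\times S(N)$. Weighting $S(N)$ by $m_x$ is precisely what absorbs the duplication: on the rule above, the left-hand side already charges $N$ with weight $m_x(M_1)+m_y(M_1)(m_x(M_2)+1)$, which pays for both right-hand copies with a strict surplus of $m_y(M_1)\cdot S(N)$, so $S$ strictly decreases there and never increases elsewhere; the plain polynomial $I$ (your $\mu$ without the trailing $+1$) then settles the remaining rules. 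Your fallback of inheriting termination from the substitution subcalculus of $\lambda_{es}$ is in fact the route the paper acknowledges (it cites the Kesner--Renaud termination result as the general case of this lemma), but as stated it is a one-line gesture: you would still need to exhibit the simulation rule by rule, including the restricted garbage-collection rule and the one-sided propagation rules, modulo the equivalence $\equiv$.
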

\end{toappendix}

\begin{toappendix}
  [\begin{proof}By a polynomial argument. More precisely, see appendix \thisappendix.\end{proof}]

  \begin{proof}

    By a polynomial argument.
    We define $m_x(M)$ as follow: if $x \notin fv(M)$, then
    $m_x(M) = 1$. Otherwise we have:
    \[\begin{array}{lllr}

      m_x(x) & = & 1 & \\

      m_x(\l y . M) & = & m_x(M) & \\

      m_x(M_1 M_2) & = & m_x(M_1) + m_x(M_2) & x \in fv(M_1), x \in fv(M_2) \\

      m_x(M_1 M_2) & = & m_x(M_1) & x \notin fv(M_2) \\

      m_x(M_1 M_2) & = & m_x(M_2) & x \notin fv(M_1) \\

      m_x(\Subst M y N) & = & m_x(M) + m_y(M) \times (m_x(N) + 1) &
      x \in fv(M), x \in fv(N) \\

      m_x(\Subst M y N) & = & m_y(M) \times (m_x(N) + 1) &
      x \notin fv(M), x \in fv(N) \\

      m_x(\Subst M y N) & = & m_x(M) & x \notin fv(N)

    \end{array}
    \]
    We also define $S(M)$ as follow:

    \[\begin{array}{lll}

      S(x) & = & 1 \\

      S(M_1 M_2) & = & S(M_1) + S(M_2) \\

      S(\l x . M) & = & S(M) \\

      S(\Subst M x N) & = & S(M) + m_x(M) \times S(N) \\

    \end{array}
    \]
    Finally, we define $I(M)$ as follow:

    \[\begin{array}{lll}

      I(x) & = & 2 \\

      I(\l x . M) & = & 2 I(M) + 2 \\

      I(M_1 M_2) & = & 2 I(M_1) + 2 I(M_2) + 2 \\

      I(\Subst M x N) & = & I(M) \times (I(N) + 1)

    \end{array}
    \]
    If we consider $n = (S(M), I(M))$ in lexical order, then $\Rew{S,W}$ strictly decreases $n$ and $\equiv$ does not change it.

    Hence $\Rew{S,W}$ terminates.
    This lemma and proof are a special case of~\cite{KesnerR11}.
  \end{proof}
\end{toappendix}

\begin{toappendix}
  \begin{theorem}[Subject Reduction for $\LambdaS$]\strut
    \label{th:SubjectReduction-lS}
    
    Assume $\Derilis[n] \Gamma M A$. We have the following properties:
    
    \begin{iteMize}{$\bullet$}
      
    \item If $M \Rew{B} M'$, then there exist $\Gamma'$ and $m$ such that
      $\Gamma \subseteq \Gamma'$, $m < n$ and $\Derilis[m] {\Gamma'} {M'} A$

    \item If $M \Rew{S} M'$, then there exists $\Gamma'$ such that
      $\Gamma \approx \Gamma'$ and $\Derilis[n] {\Gamma'} {M'} A$

    \item If $M \Rew{W} M'$, then there exist $\Gamma'$ and $m$ such that
      $\Gamma \subseteq \Gamma'$, $m \leq n$ and $\Derilis[m] {\Gamma'} {M'} A$

    \item If $M \equiv M'$, then there exists $\Gamma'$ such that
      $\Gamma \approx \Gamma'$ and $\Derilis[n] {\Gamma'} {M'} A$

    \end{iteMize}

  \end{theorem}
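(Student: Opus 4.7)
The plan is to prove the four claims simultaneously by induction on the derivation of $M \Rew{B} M'$, $M \Rew{S} M'$, $M \Rew{W} M'$, and $M \equiv M'$. At each inductive step, the first move is to reduce to the case where $A$ is an $F$-type: if $A = A_1 \cap A_2$, then Lemma~\ref{lem:basicprop}.1 splits the derivation into $\Derilis[n_1] {\Gamma_1} M {A_1}$ and $\Derilis[n_2] {\Gamma_2} M {A_2}$ with $\Gamma = \Gamma_1 \cap \Gamma_2$ and $n = n_1 + n_2$; applying the induction hypothesis to each and recombining with (Inter) yields the desired conclusion with the appropriate comparison between $n_1 + n_2$ and the new counts (strict decrease for $B$, equality for $S$ and $\equiv$, and $\leq$ for $W$, provided the bound is preserved by sums, which it is). Once $A = F$, the analysis is driven by the shape of the reduction.

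For the congruence cases (a reduction occurring inside a subterm), one inverts the last typing rule and applies the induction hypothesis to the sub-derivation of the reducing subterm, then reapplies the same rule; contexts grow by $\subseteq$ in the $B$ and $W$ cases and stay the same up to $\approx$ in the $S$ and $\equiv$ cases. For the $B$-base case $(\l x. M)\, N \Rew{} \Subst M x N$, the typing of the LHS must end with (App) over (Abs), giving $\Derilis[n_1] {\Gamma_1, x \col U} M F$ with $A \subseteq U$ and $\Derilis[n_2] {\Gamma_2} N A$, where $n = n_1 + n_2 + 1$. By Lemma~\ref{lem:basicprop}.4 one lifts $N$'s derivation to type $U$ with at most $n_2$ applications, and Lemma~\ref{lemma:TypingSubst} then produces a derivation of $\Subst M x N$ of type $F$ with at most $n_1 + n_2 < n$ applications and context below $\Gamma_1 \cap \Gamma_2$; the $+1$ of (App) is exactly what funds the strict decrease. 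For the $W$-rule $\Subst y x N \Rew{} y$ with $x \neq y$, the (Subst) inversion gives a discarded derivation of $N$ and a derivation of $y$ of count $\leq n$, yielding the desired bound.

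The main obstacle lies in the $S$-rules that distribute a substitution over an application or another substitution. Consider the representative case $\Subst {(M_1 M_2)} x N \Rew{} (\Subst {M_1} x N)(\Subst {M_2} x N)$ with $x \in fv(M_1) \cap fv(M_2)$. Inverting (Subst) over (App) in the LHS typing gives $\Gamma = \Gamma_0 \cap \Delta_1 \cap \Delta_2$ with $\Derilis {\Gamma_0} N A$, $\Derilis {\Delta_1, x \col U_1} {M_1} {B \rightarrow F}$, $\Derilis {\Delta_2, x \col U_2} {M_2} B$, and $U_1 \cap U_2 \subseteq A$ (after the side-condition on $U$). The key step is to split $N$'s derivation using Lemma~\ref{lem:basicprop}.1 and Lemma~\ref{lem:basicprop}.4 into two sub-derivations typing $N$ at $U_1$ and $U_2$ respectively, apply (Subst) on each branch, and close with (App); the total application count is preserved and the context is equal up to $\approx$. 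The remaining $S$-rules (and the $\equiv$ rule for commuting substitutions) follow the same recipe, with the relevant side-conditions on free variables ensuring that either $N$ need be duplicated only in the form dictated by $U_1 \cap U_2$, or a single copy suffices because $x$ appears in only one subterm. The genuine difficulty is book-keeping: carefully tracking how the typing context splits across the (Subst), (App), and (Abs) inversions, and checking that every case yields the precise $<$, $\approx$, $\leq$ bound demanded by the statement.
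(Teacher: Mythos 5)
Your plan matches the paper's proof: induction on the reduction/equivalence derivation followed by an inner induction on $A$, splitting intersections with Lemma~\ref{lem:basicprop}.1, inverting (App)/(Abs)/(Subst) in each base case, funding the strict decrease in the $B$-case with the (App) rule's $+1$, and splitting $N$'s derivation into two copies (with the count summing exactly) for the distributing $S$-rules. The only small divergences are that the paper closes the $B$-case by applying the (Subst) typing rule directly — with a separate $U=\tempty$ subcase, where Lemma~\ref{lemma:TypingSubst} would not apply — rather than via Lemma~\ref{lemma:TypingSubst}, and that in the congruence case $\Subst M x N \Rew{W} \Subst {M'} x N$ the type of $x$ returned by the induction hypothesis may have changed up to $\subseteq$, forcing a re-typing of $N$ via Lemma~\ref{lem:basicprop}.4; both fall under the book-keeping you anticipate.
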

\end{toappendix}

\begin{toappendix}
  [\begin{proof}See Appendix~\thisappendix.\end{proof}]
  
  \begin{proof}

    First by induction on $M \Rew{E} M'$ and $M \equiv M'$, then by
    induction on $A$.

    For modularity,
    the triplet $(\Rew{E}, R, r)$ can be one of the following triplets:
    \[(\Rew{B}, \approx, <),\enspace(\Rew{S}, \approx, =),\enspace
    (\Rew{W}, \subseteq, \leq)), (\equiv, \approx, =).\]

    % we write $R$ as $\approx$ and $r$ as $=$ if
    % $M \Rew{S} M'$ or $M \approx M'$, $R$ as $\subseteq$ and $r$ as $\leq$ if
    % $M \Rew{W} M'$ and $R$ as $\subseteq$ and $r$ as $<$ if $M \Rew{B} M'$.

    \begin{iteMize}{$\bullet$}

    \item If $A = A_1 \cap A_2$, then there exist $\Gamma_1$, $\Gamma_2$, $n_1$ and
      $n_2$ such that $\Gamma = \Gamma_1 \cap \Gamma_2$, $n = n_1 + n_2$,
      $\Derilis[n_1] {\Gamma_1} M {A_1}$ and $\Derili[n_2] {\Gamma_2} M {A_2}$.

      By induction hypothesis (on $(M \Rew E M', A_{1})$ and $(M \Rew E M', A_2)$),
      there exist $\Gamma_1'$, $\Gamma_2'$, $m_1$ and $m_2$
      such that $\Gamma_1 ~ R ~ \Gamma_1'$, $\Gamma_2 ~ R ~ \Gamma_2'$,
      $m_1 ~ r ~ n_1$, $m_2 ~ r ~ n_2$, $\Derilis[m_1] {\Gamma_1'} {M'} {A_1}$ and
      $\Derilis[m_2] {\Gamma_2'} {M'} {A_2}$.

      Hence, $\Derilis[m_1 + m_2] {\Gamma_1' \cap \Gamma_2'} {M'} {A_1 \cap A_2}$
      with $A = A_1 \cap A_2$, $m_1 + m_2 ~ r ~ n$,
      $\Gamma ~ R ~ \Gamma_1' \cap \Gamma_2'$.

    \item $(\l x . M_1) M_2 \Rew{B} \Subst {M_1} x {M_2}$ and $A = F$ :
      There exist $\Gamma_1$, $\Gamma_2$, $n_1$, $n_2$ and $B$ such that
      $\Gamma = \Gamma_1 \cap \Gamma_2$, $n = n_1 + n_2 + 1$,
      $\Derilis[n_1] {\Gamma_1} {\l x . M_1} {B \rightarrow F}$ and
      $\Derilis[n_2] {\Gamma_2} {M_2} B$. Hence, there exists $U$ such that
      $B \subseteq U$ and $\Derilis[n_1] {\Gamma_1, x \col U} {M_1} F$.

      \begin{iteMize}{$-$}

      \item If $U = C$, then by using Lemma~\ref{lem:basicprop}.4 there exist $\Delta$ and $m$ such that $m \leq n_2$,
        $\Gamma_2 \subseteq \Delta$ and $\Derilis[m] \Delta {M_2} C$.
        Hence $\Derilis[n_1 + m] {\Gamma_1 \cap \Delta} {\Subst {M_1} x {M_2}} F$ with
        $n_1 + m < n$, $F = A$ and $\Gamma \subseteq \Gamma_1 \cap \Delta$.

      \item If $U = \tempty$, then $\Derilis[n_1 + n_2] {\Gamma_1 \cap \Gamma_2}
        {\Subst {M_1} x {M_2}} F$ with $n_1 + n_2 < n$, $A = F$ and
        $\Gamma \subseteq \Gamma_1 \cap \Gamma_2$.

      \end{iteMize}

    \item $\Subst y x N \Rew{W} y$, $x \neq y$ and $A = F$ :
      there exist $\Gamma_1$, $\Gamma_2$, $n_1$, $n_2$ and $B$ such that
      $\Gamma = \Gamma_1 \cap \Gamma_2$, $n = n_1 + n_2$,
      $\Derilis[n_1] {\Gamma_1} N B$ and
      $\Derilis[n_2] {\Gamma_2, x \col \tempty} y F$.
      So we have $\Derilis[n_2] {\Gamma_2} y A$ with
      $\Gamma \subseteq (\Gamma_2, x \col \tempty)$ and $n_2 \leq n$.

    \item $\Subst x x N \Rew{S} N$ and $A = F$ :
      there exist $\Gamma_1$, $\Gamma_2$, $n_1$, $n_2$ and $B$ such that
      $\Gamma = \Gamma_1 \cap \Gamma_2$, $n = n_1 + n_2$,
      $\Derilis[n_1] {\Gamma_1} N B$ and
      $\Derilis[n_2] {\Gamma_2, x \col B} x F$.
      Hence $\Gamma_2 = ()$ and $B = F$ and $n_2 = 0$.
      Therefore $\Gamma = \Gamma_1$ and $n_1 = n$.
      So we have $\Derilis[n] {\Gamma} N A$ with $\Gamma \approx \Gamma$.

    \item $\Subst {(M_1 M_2)} x N \Rew{S} \Subst {M_1} x N \Subst {M_2} x N$ with
      $x \in fv(M_1)$, $x \in fv(M_2)$ and $A = F$:
      Then there exist $\Gamma_1$, $\Gamma_2$, $\Gamma_3$, $\Gamma_4$,
      $A_1$, $A_2$, $B$, $n_1$, $n_2$, $n_3$ and $n_4$ such that:
      $\Derilis[n_1] {\Gamma_1, x \col A_1} {M_1} {B \rightarrow F}$
      $\Derilis[n_2] {\Gamma_2, x \col A_2} {M_2} B$,
      $\Derilis[n_3] {\Gamma_3} N {A_1}$, $\Derilis[n_4] {\Gamma_4} N {A_2}$,
      $n = n_1 + n_2 + n_3 + n_4$ and
      $\Gamma = (\Gamma_1 \cap \Gamma_2) \cap (\Gamma_3 \cap \Gamma_4)$.
      Hence $\Derilis[n_1 + n_3 + n_2 + n_4]
      {(\Gamma_1 \cap \Gamma_3) \cap (\Gamma_2 \cap \Gamma_4)}
      {\Subst {M_1} x N \Subst {M_2} x N} F$.

    \item $\Subst {(M_1 M_2)} x N  \Rew{S} \Subst {M_1} x N M_2$ with
      $x \notin fv(M_2)$ and $A = F$:
      Then there exist $\Gamma_1$, $\Gamma_2$, $\Gamma_3$, $U$, $A_1$, $B$, $n_1$,
      $n_2$, and $n_3$ such that:
      $\Derilis[n_1] {\Gamma_1, x \col U} {M_1} {B \rightarrow F}$
      $\Derilis[n_2] {\Gamma_2, x \col \tempty} {M_2} B$,
      $\Derilis[n_3] {\Gamma_3} N {A_1}$,
      $n = n_1 + n_2 + n_3$, $\Gamma = (\Gamma_1 \cap \Gamma_2) \cap \Gamma_3$,
      $U = A_1$ or $U = \tempty$.
      Hence $\Derilis[n_1 + n_3 + n_2] {(\Gamma_1 \cap \Gamma_3) \cap \Gamma_2}
      {\Subst {M_1} x N {M_2}} F$.

    \item $\Subst {(M_1 M_2)} x N \Rew{S} M_1 \Subst {M_2} x N$ with
      $x \notin fv(M_1)$, $x \in fv(M_2)$ and $A = F$:
      Then there exist $\Gamma_1$, $\Gamma_2$, $\Gamma_3$, $A_1$, $B$, $n_1$, $n_2$
      and $n_3$ such that $\Derilis[n_1] {\Gamma_1, x \col \tempty} {M_1}
      {B \rightarrow F}$, $\Derilis[n_2] {\Gamma_2, x \col A_1} {M_2} F$,
      $\Derilis[n_3] {\Gamma_3} N {A_1}$, $n = n_1 + n_2 + n_3$,
      $\Gamma = (\Gamma_1 \cap \Gamma_2) \cap \Gamma_3$,
      Hence $\Derilis[n_1 + n_2 + n_3] {\Gamma_1 \cap (\Gamma_2 \cap \Gamma_3)}
      {M_1 \Subst {M_2} x N} F$.

		\item For $\Subst {\Subst M x {N_1}} y {N_2} \equiv
			\Subst {\Subst M y {N_2}} x {N_1}$ with $x \neq y$,
			$x \notin fv(N_2)$, $y \notin fv(N_1)$ and $A = F$:
		There exist $\Gamma_1$, $\Gamma_2$, $n_1$, $n_2$, $U$ and $B$ such that
		$\Gamma = \Gamma_1 \cap \Gamma_2$, $n = n_1 + n_2$, $U = B$ or
		$U = \tempty$,
$\Deri[n_1] {\Gamma_1, y \col U} {\Subst M x {N_1}} F$ and
$\Deri[n_2] {\Gamma_2} {N_2} B$.
Therefore, there exist $\Gamma_3$, $\Gamma_4$, $n_3$, $n_4$, $V$ and $C$ such
that $\Gamma_1, y \col U = \Gamma_3 \cap \Gamma_4$, $n_1 = n_3 + n_4$,
$V = C$ or $V = \tempty$, $\Deri[n_3] {\Gamma_3, x \col V} M F$ and
$\Deri[n_4] {\Gamma_4} {N_1} C$.
By the fact that $y \notin fv(N_1)$ and by Lemma~\ref{lem:basicprop}.2, we have
$\Gamma_4 = \Gamma_4, x \col \tempty$.
Hence, there exists $\Gamma_5$ such that $\Gamma_3 = \Gamma_5, x \col U$ and
$\Gamma_1 = \Gamma_5 \cap \Gamma_4$.
So, $\Deri[n_3] {\Gamma_5, y \col U, x \col V} M F$.
Hence, $\Deri[n_3 + n_2] {(\Gamma_5, x \col V) \cap \Gamma_2} {\Subst M y {N_2}}
F$.
By the fact that $x \notin fv(N_2)$ and by Lemma~\ref{lem:basicprop},
$\Gamma_2 = \Gamma_2, x \col \tempty$.
Hence, $(\Gamma_5, x \col V) \cap \Gamma_2 = \Gamma_5 \cap \Gamma_2, x \col V$.
Therefore, $\Deri[n_3 + n_2 + n_4] {(\Gamma_5 \cap \Gamma_2) \Gamma_4}
{\Subst {\Subst M y {N_2}} x {N_1}} F$ with
$n_3 + n_2 + n_4 = n$ and $(\Gamma_5 \cap \Gamma_2) \cap \Gamma_4 \approx \Gamma$.

    \item The other rules follow the same patterns, especially for the
      propagation of an explicit substitution over another explicit substitution.
      Now concerning the congruent closure of the rules, all cases are straightforward
      but for the following one:

    \item $\Subst M x N \Rew{W} \Subst {M'} x N$ with $M \Rew{W} M'$,
      $x \in fv(M)$, $x \in fv(M')$ and $A = F$:
      Then there exist $\Gamma_1$, $\Gamma_2$, $B$, $n_1$, $n_2$ such that:
      $\Derilis[n_1] {\Gamma_1, x \col B} M F$ and
      $\Derilis[n_2] {\Gamma_2} N B$, $n = n_1 + n_2$ and
      $\Gamma = \Gamma_1 \cap \Gamma_2$.
      By induction hypothesis, there exist $\Gamma_1'$, $C$ and $n_1'$ such that
      $\Gamma_1 \subseteq \Gamma_1'$, $B \subseteq C$, $n_1' \leq n_1$ and
      $\Derilis[n_1'] {\Gamma_1', x \col C} {M'} A$.
      Then there exist $\Gamma_2'$ and $n_2'$ such that
      $\Gamma_2 \subseteq \Gamma_2'$, $n_2' \leq n_2$ and
      $\Derilis[n_2'] {\Gamma_2'} N C$.
      Hence $\Derilis[n_1' + n_2'] {\Gamma_1' \cap \Gamma_2'} {\Subst {M'} x N} F$
      with $n_1' + n_2' \leq n$ and $\Gamma \subseteq \Gamma_1' \cap \Gamma_2'$.\qedhere
    \end{iteMize}
  \end{proof}
\end{toappendix}

\begin{theorem}[Soundness for $\LambdaS$]
  \label{th:LSsoundness}
  If $M$ is a $\LambdaS$-term and $\Derilis \Gamma M A$, then $M \in \SNLambdaS$.
\end{theorem}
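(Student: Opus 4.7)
The plan is to combine Subject Reduction for $\LambdaS$ (Theorem~\ref{th:SubjectReduction-lS}) with the termination of $\Rew{S,W}$ (Lemma~\ref{lem:SNForSW}). The key observation, read directly off the four clauses of Subject Reduction, is that the natural number $n$ counting the (App) rules in a typing derivation provides a measure that strictly decreases along any $B$-step and is non-increasing along any $S$- or $W$-step (and preserved by $\equiv$).

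Concretely, I would argue by contradiction: assume $\Derilis[n] \Gamma M A$ and that $M$ admits an infinite reduction sequence $M = M_0 \Rew{} M_1 \Rew{} M_2 \Rew{} \cdots$ in $\Rew{B,S,W}$. Iterating Theorem~\ref{th:SubjectReduction-lS} along this sequence produces, for every $i$, contexts $\Gamma_i$ and integers $n_i$ with $\Derilis[n_i] {\Gamma_i} {M_i} A$, such that $n_{i+1} < n_i$ whenever $M_i \Rew{B} M_{i+1}$, and $n_{i+1} \leq n_i$ when the step is $\Rew{S}$ or $\Rew{W}$ (the implicit use of $\equiv$ also leaves $n_i$ unchanged). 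Since $(n_i)_{i\in\N}$ is a non-increasing sequence of natural numbers, it stabilises from some index $i_0$ onwards. Beyond $i_0$, no $B$-step can occur (it would force a strict decrease), so the tail $M_{i_0} \Rew{} M_{i_0+1} \Rew{} \cdots$ consists entirely of $S$- and $W$-steps, contradicting Lemma~\ref{lem:SNForSW}.

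Since both ingredients are already established, there is no serious obstacle; the argument mirrors that of Theorem~\ref{th:soundness} for the pure $\l$-calculus, with the single additional subtlety that $S$-, $W$- and $\equiv$-steps do not decrease the typing measure. The role of Lemma~\ref{lem:SNForSW} is precisely to compensate for this: it rules out the only scenario compatible with the measure $n$ stabilising along an infinite reduction.
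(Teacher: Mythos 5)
Your proof is correct and is essentially the paper's own argument: the paper invokes a lexicographic measure (the number $n$ of (App) rules, combined with termination of $\Rew{S,W}$ from Lemma~\ref{lem:SNForSW}), which is exactly the contradiction argument you spell out — an infinite sequence would have $n$ stabilise, forcing an infinite tail of $S$- and $W$-steps. The only difference is presentational: you unfold the lexicographic argument explicitly, which is a fine (and arguably clearer) way to write the same proof.
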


\begin{proof}
  We have $\Derilis[n] \Gamma M A$ for some $n$, and strong normalisation is provided by a lexicogrphic argument as follows:
  \begin{iteMize}{$\bullet$}
  \item $\Rew{B}$ strictly decreases $n$
  \item $\Rew{S}$ and $\Rew{W}$ decrease $n$ or do not change it
  \item $\Rew{S, W}$ terminates on its own.\qedhere
  \end{iteMize}
\end{proof}

% We could use the results in this section to prove the results in section~
% \ref{sec:SNLambda}. But we did not do this to show that we can prove directly
% these results on the pure $\l$-calculus without knowing anything about
% $\lS$.

\subsection{$\LambdaLxr$}

Remember that $\LambdaLxr$-terms are terms that are linear.
In particular the typing rules of abstraction, explicit substitution,
weakening and contraction, degenerate into the following rules when linearity is assumed:

\[\begin{array}{c}

  \infer{\Derili \Gamma {\l x . M} {A \rightarrow F}}
  {A \subseteq C \quad \Derili {\Gamma, x \col C} M F} \qquad

  \infer{\Derili {\Gamma \cap \Delta} {\Subst M x N} F}
  {\Derili \Gamma N B \quad
    \Derili {\Delta, x \col B} M F} \\\\

  \infer{\Derili {\Gamma, x \col A} {\weakening x M} F}
  {\Derili \Gamma M F} \qquad

  \infer{\Derili {\Gamma, x \col A \cap B} {\contraction x y z M} F}
  {\Derili {\Gamma, y \col A, z \col B} M F}

\end{array}
\]

\begin{remark}

Had we defined the $(Weakening)$ and $(Contraction)$ rules as above from the start (\cf Fig.~\ref{fig:InterLambdaSub}), we would have had to add extra conditions for Theorems~\ref{th:SemTe}.5 and \ref{th:SemTe}.6 to hold. This would amount to assuming some of the consequences of linearity. We prefer to keep Theorem~\ref{th:SemTe} and the whole of Section~\ref{sec:semantics} calculus-independent, by giving a more general definition of the two rules.

\end{remark}

\begin{definition}[Reduction in $\LambdaLxr$]\strut

  The reduction and equivalence rules of $\LambdaLxr$ are presented in Fig.~\ref{fig:LambdaLxr_reductions}.

  For a set of rules $E$ from Fig.~\ref{fig:LambdaLxr_reductions}
denotes the congruent closure of the rules in $E$ modulo
  the equivalence rules.

  $\SNLambdaLxr$ denotes the set of strongly normalising $\LambdaLxr$-terms for
  the entire reduction relation.

\end{definition}

\begin{figure}[!h]

  \[
  \begin{array}{|llcll|}

    \upline

    B: & (\l x . M) N & \Rew{} & \Subst M x N & \\

    SR: & \Subst x x N & \Rew{} & N & \\

    SP: & \Subst {(M_1 M_2)} x N & \Rew{} & (\Subst {M_1} x N) M_2 &
    x \in fv(M_1) \\

    & \Subst {(M_1 M_2)} x N & \Rew{} & M_1 (\Subst {M_2} x N) & x \in fv(M_2) \\

    & \Subst {(\l y . M)} x N & \Rew{} & \l y . \Subst M x N &
    x \neq y, y \notin fv(N) \\

    & \Subst {\weakening y M} x N & \Rew{} & \weakening y {\Subst M x N} &
    x \neq y \\

    & \Subst {\contraction y {z1} {z2} M} x N & \Rew{} &
    \contraction y {z1} {z2} {\Subst M x N} &
    y \neq x, y \notin fv(N), z_i \notin fv(N) \\

    W: & \Subst {\weakening x M} x N & \Rew{} & \weakening {fv(N)} M & \\

    D: & \Subst {\contraction x y z M} x N & \Rew{} &
    \contraction X Y Z {\Subst {\Subst {M} y {N_1}} z {N_2}} & \\

    ACC: & \contraction w x v {\contraction  x z y M} & \equiv &
    \contraction w x y {\contraction x z v M} & x \neq y, v \\

    & \contraction x y z M & \equiv & \contraction x z y M & \\

    & \contraction {x'} {y'} {z'} {\contraction x y z M} & \equiv &
    \contraction x y z {\contraction {x'} {y'} {z'} M} &
    x \neq y', z' \& x' \neq y, z \\

    & \Subst {\Subst {M_1} y {M_2}} x N & \Rew{} &
    \Subst {M_1} y {\Subst {M_2} x N} & x \in fv(M_2) \\

    ACW: & \weakening x {\weakening y M} & \equiv &
    \weakening y {\weakening x M} & \\

    CS: & \Subst {\Subst M x {N_1}} y {N_2} & \equiv &
    \Subst {\Subst M y {N_2}} x {N_1} &
    y \notin fv(N_1), x \notin fv(N_2) \\

    & \Subst {\contraction w y z M} x N & \equiv &
    \contraction w y z {\Subst M x N} &
    x \neq w, \& y, z \notin fv(N) \\

    WAbs: & \l x . \weakening y M & \Rew{} & \weakening y {\l x . M} &
    x \neq y \\

    WApp1: & \weakening y M N & \Rew{} & \weakening y {M N} & \\

    WApp2: & M \weakening y N & \Rew{} & \weakening y {M N} & \\

    WSubs: & \Subst M x {\weakening y N} & \Rew{} &
    \weakening y {\Subst M x N} & \\

    Merge: & \contraction w y z {\weakening y M} & \Rew{} &
    R_w^z(M) & \\

    Cross: & \contraction w y z  {\weakening x M} & \Rew{} &\weakening x
    {\contraction w y z M} & x \neq y, x \neq z \\

    CAbs: & \contraction w y z {\l x . M} & \Rew{} &
    \l x . \contraction w y z M &\\

    CApp1: & \contraction w y z {M N} & \Rew{} &
    \contraction w y z M N & y, z \in fv(M) \\

    CApp2: & \contraction w y z {M N} & \Rew{} &
    M \contraction w y z N & y, z \in fv(N) \\

    CSubs: & \contraction w y z {\Subst M x N} & \Rew{} &
    \Subst M x {\contraction w y z N} & y, z \in fv(N)

    \downline 

  \end{array}
  \]
In rule $D$,  $X = fv(N)$, $N_1 = \subst N {\vec X} {\vec Y}$,
  $N_2 = \subst N {\vec X} {\vec Z}$, $Y$, $Z$ fresh sets of variables in bijection with $X$.
In rule $Merge$, $R_w^{z}(M)$ is the renaming of $z$ by $w$ in $M$.
  \caption{Reduction and equivalence rules of $\LambdaLxr$}
  \label{fig:LambdaLxr_reductions}
\end{figure}

\begin{toappendix}

\appendixbeyond 0

\begin{theorem}[Subject Reduction for $\LambdaLxr$]
  If $\Derilis[n] \Gamma M A$ then:
  \begin{iteMize}{$\bullet$}
  \item If $M \Rew{B} M'$, then there exist $\Gamma'$ and $m$ such that
    $\Gamma \subseteq \Gamma'$, $m < n$,  and $\Derilis[m] {\Gamma'} {M'} A$
    % \item If $M \Rew{S, D} M'$ then there exists $\Gamma'$ such that
    %   $\Gamma \approx \Gamma'$ and $\Derili[n] {\Gamma'} {M'} A$
  \item If $M \Rew{E} M'$ and $B \notin E$, then there exist $\Gamma'$ and $m$ such that
    $\Gamma \subseteq \Gamma'$, $m \leq n$ and $\Derilis[m] {\Gamma'} {M'} A$.
\item If $M \equiv M'$ then there exist $\Gamma'$ such that
$\Gamma \approx \Gamma'$ and $\Deri[n] {\Gamma'} {M'}  A$.
  \end{iteMize}
\end{theorem}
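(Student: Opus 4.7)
The plan is to mimic the inductive structure of the Subject Reduction proofs for $\l$ (Theorem~\ref{th:SubjectReduction}) and $\LambdaS$ (Theorem~\ref{th:SubjectReduction-lS}): I would proceed first by induction on the derivation of $M \Rew{E} M'$ (respectively $M \equiv M'$), and then by induction on the type $A$. The case $A = A_1 \cap A_2$ is entirely routine, invoking Lemma~\ref{lem:basicprop}.1 to split the typing tree and then recombining the two outputs of the induction hypothesis via the (Inter) rule; only the case $A = F$ requires real work.

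For an $F$-type I would do a case analysis on the last reduction or equivalence rule. The $B$ case is handled exactly as for $\l$ and $\LambdaS$, using (Subst) on the premises obtained by inverting (App) and (Abs), and yielding strict decrease of the application count. The rule $SR$ ($\Subst x x N \Rew{} N$) forces $N$'s type to equal $F$ and preserves $n$. The propagation rules pushing a substitution through an application, an abstraction, a weakening or another substitution, together with $CAbs$, $CApp1$, $CApp2$, $CSubs$, $Cross$, $WAbs$, $WApp1$, $WApp2$, $WSubs$ and $Merge$, are pure reshufflings of sub-derivations (the $Merge$ case additionally uses that $\alpha$-renaming preserves typing). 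The three equivalences $ACC$, $ACW$ and $CS$ amount to context rearrangement, yielding $\Gamma \approx \Gamma'$ and the same count~$n$.

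The two delicate cases are $W$ and $D$. For $W$, i.e.\ $\Subst{\weakening x M} x N \Rew{} \weakening{fv(N)} M$, inverting (Subst) and then the linear (Weakening) rule gives $\Derilis[n_1]{\Gamma_1}{N}{A}$ and $\Derilis[n_2]{\Gamma_2}{M}{F}$ with $x \notin \dom{\Gamma_2}$; by Lemma~\ref{lem:basicprop}.2, $\dom{\Gamma_1} = \FV N$, so iterating (Weakening) over each $y \in \FV N$ with its type $\Gamma_1(y)$ produces $\Derilis[n_2]{\Gamma_2 \cap \Gamma_1}{\weakening{\FV N} M}{F}$; the count drops from $n$ to $n_2 \leq n$ since the $n_1$ application nodes used to type $N$ are discarded.

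The main obstacle is rule $D$, i.e.\ $\Subst{\contraction x y z M} x N \Rew{} \contraction X Y Z {\Subst{\Subst M y {N_1}} z {N_2}}$. Inverting (Subst) and then the linear (Contraction) rule gives $\Derilis[n_1]{\Gamma_1}{N}{A \cap B}$ and $\Derilis[n_2]{\Gamma_2, y \col A, z \col B}{M}{F}$. I would then apply Lemma~\ref{lem:basicprop}.1 to split the derivation of $N$ into typings of $N$ at types $A$ and $B$ respectively, $\alpha$-rename the free variables $\v X = \FV N$ into the fresh disjoint vectors $\v Y$ and $\v Z$ in each of those two derivations, use Lemma~\ref{lemma:TypingSubst} twice to assemble a typing of $\Subst{\Subst M y {N_1}} z {N_2}$ with the same total application count, and finally rebuild the outer head-block of contractions by iterated applications of (Contraction). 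The delicate bookkeeping here is to match the splitting of $N$'s context produced by Lemma~\ref{lem:basicprop}.1 with the renaming of $\v X$ into $\v Y$ and $\v Z$; once that alignment is taken care of, no rule is genuinely harder than its counterpart in $\LambdaS$.
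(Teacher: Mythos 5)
Your proposal is correct and follows essentially the same route as the paper's proof: the same double induction (first on the reduction/equivalence derivation, then on $A$, with the intersection case handled by Lemma~\ref{lem:basicprop}.1), and the same treatment of the delicate rules $W$ (discarding the sub-derivation typing $N$ and re-introducing $\FV N$ by iterated weakenings, using $\dom{\Gamma_1}=\FV N$) and $D$ (splitting the typing of $N$ at $A\cap B$, renaming into $\vec Y,\vec Z$, applying Lemma~\ref{lemma:TypingSubst} twice, and rebuilding the contractions). No gaps to report.
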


\end{toappendix}

\begin{toappendix}[
\begin{proof}
See Appendix~\thisappendix.
\end{proof}
]

\begin{proof} %By induction on $M \Rew{} M'$.
First by induction on $M \Rew{B} M'$ (resp. $M \Rew{E} M'$, $M \equiv M'$) then
by induction on $A$.
The proof is similar to the proof of Subject reduction.
Here we will only detail the cases of rules $D$, $W$, $B$ and $Merge$:
\begin{iteMize}{$\bullet$}

\item For $\Subst {\contraction x y z M} x N \Rew{}
\contraction X Y Z {\Subst {\Subst { M} y {N_1}} z {N_2}}$ with $A = F$:
There exist $n_1$, $n_2$, $n_3$, $B_1$, $C_1$, $\Delta$, $\vec{B}$ and $\vec{C}$
such that
$\Gamma = \Delta, \vec{X} \col \vec{B} \cap \vec{C}$, $n = n_1 + n_2 + n_3$,
$\Deri[n_1] {\Delta, y \col B_1, z \col C_1} M F$,
$\Deri[n_2] {\vec{X} \col \vec{B}} N {B_1}$ and
$\Deri[n_3] {\vec{X} \col \vec{C}} N {C_1}$.
Hence, $\Deri[n_2] {\vec{Y} \col \vec{B}} {N_1} {B_1}$ and
$\Deri[n_3] {\vec{Z} \col \vec{C}} {N_2} {C_1}$.
Therefore, $\Deri[n] {\Delta, \vec{Y} \col \vec{B}, \vec{Z} \col \vec{C}}
{\Subst {\Subst M y {N_1}} z {N_2}} F$. 
Then we can conclude.

\item For $\Subst {\weakening x M} x N \Rew {} \weakening {fv(N)} M$ with
$X = fv(N)$ and $A = F$:
Then, there exist $n_1$, $n_2$, $\Delta$, $\vec{B}$ and $B_1$ such that
$n = n_1 + n_2$, $\Gamma = \Delta, \vec{X} \col \vec{B}$,
$\Deri[n_1] {\Delta} M F$, $x \notin \dom \Delta$,
and $\Deri[n_2] {\vec{X} \col \vec{B}} N {B_1}$.
Hence, for all $y \in X$, $y \notin \dom \Delta$.
Then we can conclude.

\item For $(\l x . M) N \Rew{B} \Subst M x N$, the proof is exactly as the proof
for the rule $B$ in $\LambdaS$.

\item For $\contraction w y z {\weakening y M} \Rew{} R_w^{z}(M)$ with $A = F$:
Then, there exist $\Gamma_1$, $A$ and $B$ such that
$\Gamma = \Gamma_1, w \col A \cap B$ and
$\Deri[n] {\Gamma_1, y \col A, z \col B} {\weakening y M} F$.
Hence, $\Deri[n] {\Gamma_1, z \col B} M F$.
Therefore, $\Deri[n] {\Gamma_1, w \col B} {R_w^{z}(M)} F$ with
$\Gamma \subseteq (\Gamma_1, w \col B)$.

\item For $\contraction w x v {\contraction x y z M} \equiv
\contraction w x y {\contraction x z v M}$ with $F = A$:
Then, there exist $\Gamma_1$, $A_1$, $A_2$, $A_3$, such that
$\Gamma = \Gamma_1, w \col (A_1 \cap A_2) \cap A_3$ and
$\Deri[n] {\Gamma_1, y \col A_1, z \col A_2, v \col A_3} M F$.
Therefore $\Deri[n] {\Gamma_1, w \col (A_2 \cap A_3) \cap A_1}
{\contraction w x y {\contraction x z v M}} F$.\qedhere

\end{iteMize}
\end{proof}

\end{toappendix}

\begin{theorem}[Soundness for $\LambdaLxr$]
  \label{th:Llxrsoundness}
  If $M$ is a $\LambdaLxr$-term and $\Derilis \Gamma M A$, then
  $M \in \SNLambdaLxr$.
\end{theorem}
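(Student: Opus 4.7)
The plan is to mirror the structure of the Soundness proof for $\LambdaS$ (Theorem~\ref{th:LSsoundness}), leveraging Subject Reduction for $\LambdaLxr$ stated just above. From that theorem, any $B$-step strictly decreases the measure $n$ (the number of $(App)$ rules in a fixed typing derivation of $M$), every non-$B$ reduction leaves $n$ unchanged or decreases it, and $\equiv$ preserves $n$ up to $\approx$ on contexts. So strong normalisation will follow by a lexicographic argument on the pair $(n, k)$, where $k$ is a measure that bounds the length of any rewrite sequence consisting purely of non-$B$ rules modulo $\equiv$, provided such sequences are shown to terminate on their own.

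The bulk of the work is therefore to establish termination of the restricted relation $\Rew{E}$, for $E$ the set of all rules of Fig.~\ref{fig:LambdaLxr_reductions} except $B$, considered modulo the equivalence $\equiv$. This is analogous to Lemma~\ref{lem:SNForSW} for $\LambdaS$, and should be obtained by a polynomial interpretation of $\LambdaLxr$-terms into $\mathbb{N}$. The interpretation must handle: (i) the substitution-propagation rules, which is essentially the $\LambdaS$ situation; (ii) the weakening-propagation and contraction-propagation rules ($WAbs$, $WApp1$, $WApp2$, $WSubs$, $CAbs$, $CApp1$, $CApp2$, $CSubs$), which push the resource markers outwards; (iii) the annihilation rules $W$, $Merge$ and the rule eliminating $\Subst x x N$; and (iv) the duplicating rule $D$.

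The main obstacle is accommodating rule $D$, since it duplicates the term $N$ to form $N_1$ and $N_2$, so a naive size-based interpretation would grow. The key insight is that $D$ is only applicable when a contraction $\contraction x y z M$ stands under a substitution for $x$, a configuration that is both destroyed by the step and bounded in number by the original term structure. One may therefore weight each subterm by a factor depending on the number of surrounding contractions it feeds into, making the contribution of $N$ in $\Subst{\contraction x y z M} x N$ strictly larger than that of the two copies in the reduct; or, equivalently, mimic the $S(\cdot)$ measure of Lemma~\ref{lem:SNForSW} with multiplicative coefficients $m_x(M)$ refined to count both substitutions and contractions on $x$. The propagation rules are straightforwardly handled by an auxiliary interpretation $I(\cdot)$ that strictly decreases along outward movement of constructs; the equivalences $ACC$, $ACW$, $CS$ need to be checked to preserve the interpretation exactly.

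Once termination of $\Rew{E}$ modulo $\equiv$ is in hand, Soundness follows directly: an infinite reduction sequence would contain either infinitely many $B$-steps, contradicting strict decrease of $n$, or only finitely many $B$-steps, after which the remaining sequence would be an infinite $\Rew{E}$-sequence modulo $\equiv$, contradicting the termination lemma. Thus $M \in \SNLambdaLxr$.
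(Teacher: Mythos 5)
Your proof takes essentially the same route as the paper: a lexicographic argument combining Subject Reduction for $\LambdaLxr$ (each $B$-step strictly decreases the number of (App) rules in the typing derivation, while every other reduction step leaves it unchanged or decreases it) with termination of the $B$-free fragment of the rewrite system. The only divergence is that the paper simply cites~\cite{KLIaC06} for the termination of the reduction system without rule $B$, so the polynomial-interpretation argument you sketch for that sub-lemma --- which is the hardest and least worked-out part of your proposal, in particular around the duplicating rule $D$ and the equivalences $ACC$, $ACW$, $CS$ --- is not something the paper attempts to prove itself.
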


\begin{proof}
  Similarly to $\LambdaS$ we have $\Derilis[n] \Gamma M A$ for some $n$, and strong normalisation is provided by a lexicogrphic argument as follows:
  \begin{iteMize}{$\bullet$}
  \item $\Rew{B}$ strictly decreases $n$
  \item Any other reduction $\Rew{E}$ decreases $n$ or does not change it
  \item The reduction system without the $B$ rule terminates on its own~\cite{KLIaC06}.\qedhere
\end{iteMize}
\end{proof}

%%% Local Variables: 
%%% mode: latex
%%% TeX-master: "Main"
%%% End: 

\section{Denotational semantics for strong normalisation}

\label{sec:semantics}

%The typing system of intersection types we have presented in the previous
%section can be used to prove the strong normalisation in various typing
%systems. We have to find a way to transform a typing judgement of a source
%typing system (such as system F) to a typing judgement of the target system
%(the typing system of intersection types).
%We do not do this translation directly.
%First we build a semantics of the calculi construted with the intersection
%types.
%This semantics gives information about typability in the source system
%(so about strong normalisation).
%Then we interpret types of the source system as sets of values in this
%semantics.
%Finally, we can try to generalise this method with orthogonality models.
%As an example, we use this method to prove in a modular way the strong
%normalisation of system F (even in the case of $\LambdaS$ and $\LambdaLxr$).

In this section we show how to use non-idempotent intersection types to simplify the methodology of~\cite{CoquandSpiwack07}, which we briefly review here:

The goal is to produce modular proofs of strong normalisation for various {\em source typing systems}.  The problem is reduced to the strong normalisation of a unique {\em target system} of intersection types, chosen once and for all.  This is done by interpreting each term $t$ as the set $\SemTe t{}$ of the intersection types that can be assigned to $t$ in the target system.  Two facts then remain to be proved:

\begin{enumerate}[(1)]
\item if $t$ can be typed in the source system, then $\SemTe t{}$ is not empty
\item the target system is strongly normalising
\end{enumerate}
The first point is the only part that is specific to the source typing system: it amounts to turning the interpretation of terms into a filter model of the source typing system. The second point depends on the chosen target system: as~\cite{CoquandSpiwack07} uses a system of {\em idempotent} intersection types (extending the simply-typed $\l$-calculus), their proof involves the usual reducibility technique~\cite{Girard72,tait75}.  But this is somewhat redundant with point 1 which uses similar techniques to prove the correctness of the filter model with respect to the source system.\footnote{If reducibility techniques are needed for the latter, why not use them on the source system directly (besides formulating a modular methodology)?}

In this paper we propose to use {\em non-idempotent} intersection types for the target system, so that point 2 can be proved with simpler techniques than in~\cite{CoquandSpiwack07} while point 1 is not impacted by the move.  In practice we propose \InterLambdaSub\ as the target system (that of~\cite{bernadetleng11} would work just as well). The present section shows the details of this alternative.

Notice that \InterLambdaSub\ is not an extension of the simply-typed $\l$-calculus, in that a typing tree in the system of simple types is not a valid typing tree in system \InterLambdaSub, which uses non-idempotent intersections (while it is a valid typing tree in the system of~\cite{CoquandSpiwack07} which uses idempotent intersections).
But a nice application of our proposed methodology is that, by taking the simply-typed lambda-calculus as the source system, we can produce a typing tree in  \InterLambdaSub\ from a typing tree with simple types. We do not know of any more direct encoding.

\subsection{I-filters}

\label{sec:filters}

The following filter constructions only involve the syntax of types and are
independent from the chosen target system.

\begin{definition}[I-filter]\strut
\begin{iteMize}{$\bullet$}
\item
  An I-filter is a set $v$ of $A$-types such that:
  \begin{iteMize}{$-$}
  \item for all $A$ and $B$ in $v$ we have $A \cap B \in v$
  \item for all $A$ and $B$, if $A \in v$ and $A \subseteq B$, then $B \in v$
  \end{iteMize}
\item In particular the empty set and the sets of all $A$-types are I-filters
and we write them $\bottom$ and $\top$ respectively.
\item Let $\ValDom$ be the sets of all non-empty I-filters; we call such
I-filters {\em  values}.
\item Let $\ValDomE$ be the sets of all I-filters ($\ValDomE=\ValDom\cup\{\bottom\}$).
\end{iteMize}
\end{definition}

While our intersection types differ from those in~\cite{CoquandSpiwack07} (in 
that idempotency is dropped), the stability of a filter under type intersections
makes it validate idempotency (it contains $A$ \iff\ it contains $A\cap A$,
etc).
This makes our filters very similar to those in~\cite{CoquandSpiwack07}, so we
can plug-in the rest of the methodology with minimal change.

\begin{remark}[Basic properties of I-filters]\strut
\begin{enumerate}[(1)]
\item If $(v_i)_{i \in I}$ is an non empty family of $\ValDomE$, then
$\bigcap_{i \in I} v_i \in \ValDomE$.
\item If $v$ is a set of $A$-types, then there is a smallest $v' \in \ValDomE$
such that $v \subseteq v'$ and we write $\Gen v \eqdef v'$.
\item If $v$ is a set of $F$-types, then $\Gen v$ is the closure of $v$ under
finite intersections.
\item If $v \in \ValDomE$, then $v = \Gen {\{ F \mid F \in v \}}$.
\item If $u$ and $v$ are sets of $F$-types such that $\Gen u = \Gen v$, then
$u = v$.
\end{enumerate}
\end{remark}

Hence, in order to prove that two I-filters are equal we just have to prove that
they contain the same $F$-types. 

I-filters form an applicative structure:
\begin{definition}[Application of I-filters]
If $u$, $v$ are in $\ValDomE$, then define \[u\ap v \eqdef \Gen {\{ F \mid
\exists A \in v, (A \rightarrow F) \in u \}}\]
\end{definition}

\begin{remark}
For all $u \in \ValDomE$, $u\ap \bottom = \bottom\ap u = \bottom$, and for all
$u \in \ValDom$, $\top\ap u = \top$.
\end{remark}

% \begin{remark}[Behaviour of $\top$ and $\bottom$]
%  \begin{enumerate}[(1)]
%  \item For all $u \in \ValDomE$, $u\ap \bottom = \bottom\ap u = \bottom$
%  \item For all $u \in \ValDom$, $\top\ap u = \top$
%  \end{enumerate}
% \end{remark}

\begin{definition}[Environments and contexts]\strut

An {\em environment} is a map from term variables $x,y,\ldots$ to I-filters.

If $\rho$ is an environment and $\Gamma$ is a context, we say that
$\Gamma \in \rho$, or $\Gamma$ is {\em compatible with} $\rho$, if for all $x$,
$\Gamma(x) = \tempty$ or $\Gamma(x) \in \rho(x)$.

Assume $\rho$ is an environment, $u$ is an I-filter and $x$ is a variable.
Then, the environment $\rho, x \mapsto u$ is defined as follows:
\[\begin{array}{llll}
(\rho, x \mapsto u)(x) & \eqdef & u & \\
(\rho, x \mapsto u)(y) & \eqdef & \rho(y) & \forall y \neq x
\end{array}
\]
\end{definition}

% \begin{definition}[Environment]

% \begin{iteMize}{$\bullet$}

% \item
% An environment is a mapping $\rho$ that maps every variable $x$ to a value such that $dom(\rho) = \{ x \mid \rho(x) \neq \bot\}$ is finite.

% \item If $\rho$ is an environment, $x$ a variable and $u \in \ValDom$ then
% $(\rho, x = u)$ is the environment defined by $(\rho, x = u)(y) = u$ and
% $(\rho, x = u)(y) = \rho(y)$ for all $y \neq u$.

% \item If $\Gamma$ is a context then we say that $\Gamma \in \rho$ iff for
% all $x$, $\Gamma(x) = \tempty$ or $\Gamma(x) \in \rho(x)$.

% \end{iteMize}

% \end{definition}

\begin{remark}[Environments are I-filters of contexts]
\footnote{Conversely, if $E$ is an $I$-filter of contexts, then $\rho$, defined
by $\rho(x) = \{\Gamma(x) \neq \tempty \mid \Gamma \in E\}$ for all $x$, is an
environment.}
Let $\rho$ be an environment.
\begin{enumerate}[(1)]
\item
If $\Gamma \in \rho$ and $\Gamma' \in \rho$, then $\Gamma \cap \Gamma' \in \rho$.
\item
If $\Gamma \in\rho$ and $\Gamma'$ is a context such that $\Gamma \subseteq
\Gamma'$, then $\Gamma' \in \rho$.
\end{enumerate}
\end{remark}

\subsection{Semantics of terms as I-filters}

The remaining ingredients now involve the target system; we treat here
\InterLambdaSub.
\begin{definition}[Interpretation of terms]\label{def:intterms}\strut\\
  If $M$ is a term and $\rho$ is an environment we define
  \[\SemTe M \rho \eqdef \{ A \mid \exists \Gamma \in\rho, \Derilis \Gamma M A \}
  \]
\end{definition}

\begin{remark}
  $\SemTe M \rho \in \ValDomE$, and therefore $\SemTe M \rho=\Gen{\{ F \mid
    \exists \Gamma \in\rho, \Derilis \Gamma M F \}}$.
\end{remark}

% \begin{remark}
%   Notice that the following propositions are equivalent:
%   \begin{enumerate}[(1)]
% %   \item If $\FV M =\{x_1,\ldots,x_n\}$ then there exist non-empty subsets $X_1$,\ldots, $X_n$ of $\ValDom$ such that for all $u_1 \in X_1$, ... $u_n \in X_n$,
% %     $\SemTe M {(x_1 \mapsto u_1, ..., x_n \mapsto u_n)} \neq \bottom$
%   \item There exists $\rho$ such that $\SemTe M \rho \neq \bottom$.
%   \item There exist $\Gamma$ and $F$ such that $\Derilis \Gamma M F$.
% %   \item $M\in \SN{}$
%   \end{enumerate}
% \end{remark}

\begin{toappendix}
  \begin{theorem}[Inductive characterisation of the interpretation]
    \strut\label{th:SemTe}
    \begin{enumerate}[\em(1)]
    \item $\SemTe x \rho = \rho(x)$
    \item $\SemTe {M N} \rho = {\SemTe M \rho}\ap {\SemTe N \rho}$
    \item $\SemTe {\lambda x.M} \rho\ap u = \SemTe {M} {\rho, x \mapsto u}$
      if $u \neq \bot$.
    \item $\SemTe {\Subst M x N} \rho = \SemTe M {\rho, x \mapsto \SemTe N \rho}$
      if $\SemTe N \rho \neq \bot$
    \item $\SemTe {\weakening x M} \rho = \SemTe M \rho$ if $\rho(x) \neq \bot$.
    \item $\SemTe {\contraction x y z M} \rho = \SemTe M {\rho, y \mapsto \rho(x),
        z \mapsto \rho(x)}$.

    \end{enumerate}
  \end{theorem}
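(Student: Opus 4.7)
The plan is to prove each of the six equalities by showing the two I-filters have the same $F$-type elements, which suffices since any $v \in \ValDomE$ is generated by its $F$-elements via $\Gen{\cdot}$. In each case I will alternate between (i) inversion on the last rule of a typing derivation $\Derilis{\Gamma}{M}{F}$, which is uniquely determined by the shape of $M$ thanks to the syntax-directed design of \InterLambdaSub, and (ii) synthesis of a derivation in the converse direction. The compatibility of a context with an environment will be handled throughout by the two defining properties of an I-filter --- closure under $\cap$ and upward closure along $\subseteq$ --- lifted to contexts pointwise.

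Items (1) and (2) are direct. For (1), the only rule concluding with $x$ is (Var), so $\Derilis{\Gamma}{x}{F}$ forces $\Gamma = x\col F$, and the set of $F$-types in $\SemTe{x}{\rho}$ is literally that of $\rho(x)$. For (2), inverting (App) on $MN$ splits $\Gamma = \Gamma_1 \cap \Gamma_2$ with $\Derilis{\Gamma_1}{M}{A \rightarrow F}$ and $\Derilis{\Gamma_2}{N}{A}$; here I use that $\Gamma_1 \cap \Gamma_2 \in \rho$ entails $\Gamma_i \in \rho$ separately, because $\Gamma_1 \cap \Gamma_2 \subseteq \Gamma_i$ pointwise. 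This matches the $F$-types in ${\SemTe{M}{\rho}} \ap {\SemTe{N}{\rho}}$ by definition of $\ap$; the converse direction reassembles the splitting by closure of $\rho$ under $\cap$.

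Items (3) and (4) carry the main subtlety, which is where I expect the real work to be. For (3), inverting (Abs) on $\Derilis{\Gamma}{\l x.M}{A \rightarrow F}$ yields some $U$ with $A \subseteq U$ and $\Derilis{\Gamma, x\col U}{M}{F}$. The delicate point is that $U$ is a $U$-type and may legitimately be $\tempty$, since $A \subseteq \tempty$ always holds. Accordingly, in the synthesis direction, given $\Derilis{\Gamma, x\col U}{M}{F}$ with $\Gamma \in \rho$ and $U \in u \cup \{\tempty\}$: when $U \in u$ one picks $A := U$; when $U = \tempty$ one crucially uses the hypothesis $u \neq \bot$ to pick \emph{some} $A \in u$ and closes the (Abs) instance via $A \subseteq \tempty$. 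Item (4) follows the exact same $U = A$ vs.\ $U = \tempty$ case split at the (Subst) rule, with $\SemTe{N}{\rho} \neq \bot$ playing the role of $u \neq \bot$: it guarantees the existence of $\Gamma_1 \in \rho$ and $A \in \SemTe{N}{\rho}$ with $\Derilis{\Gamma_1}{N}{A}$, so that (Subst) can actually be fired in the $\supseteq$ direction.

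Items (5) and (6) reduce to tracking how the $\cap$ appearing in the conclusion of (Weakening) and (Contraction) matches the filter laws. The key manipulation is this: from $U \cap A \in \rho(x)$ one recovers $U \in \rho(x)$ whenever $U \neq \tempty$, by upward closure along $U \cap A \subseteq U$; and conversely $A, U \in \rho(x)$ yield $U \cap A \in \rho(x)$ by closure under $\cap$. The hypothesis $\rho(x) \neq \bot$ in (5) is used in the $\supseteq$ direction to produce the $A \in \rho(x)$ needed to re-introduce the weakening; no such hypothesis is needed in (6) because $\rho(x)$ already appears on both sides. The recurring technicality throughout is the bookkeeping between the $U$-type versus $A$-type distinction and the $\bot$ versus non-$\bot$ status of filters, so as to apply the correct filter law in each direction of each equivalence.
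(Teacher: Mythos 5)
Your proposal is correct and takes essentially the same route as the paper's own proof: equality of I-filters is checked on $F$-types, with inversion of the (syntax-directed) last rule in one direction and synthesis in the other, the same $U=\tempty$ versus $U\in u$ (resp.\ $U\in\rho(x)$) case splits, and the same use of the non-$\bot$ hypotheses to produce a witness $A$ in the synthesis direction of items (3), (4) and (5). No gaps worth noting.
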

\end{toappendix}

\begin{toappendix}
  [\begin{proof}See Appendix~\thisappendix.\end{proof}]
  \begin{proof}
    To prove equalities between I-filters, we only have to prove that they have
    the same $F$-types.
    \begin{enumerate}[(1)]
    \item If $F \in \SemTe x \rho$, then there exists $\Gamma \in\rho$ such that
      $\Derilis \Gamma x F$, so $\Gamma = (x \col F)$ so $F \in \rho(x)$.
      Conversely, if
      $F \in \rho(x)$, then $(x \col F) \in \rho$ and $\Derilis {x \col F} x F$. So
      $\SemTe x \rho = \rho(x)$.
    \item Let $F \in \SemTe{M N} \rho$. There exists $\Gamma \in \rho$ such that
      $\Derilis \Gamma {M N} F$. Hence, there exist $\Gamma_1$, $\Gamma_2$ and $A$
      such that $\Derilis {\Gamma_1} M {A \rightarrow F}$ and
      $\Derilis {\Gamma_2} N A$ and
      $\Gamma = \Gamma_1 \cap \Gamma_2$. So $\Gamma \subseteq \Gamma_1$, and
      $\Gamma_1 \in \rho$. Hence, $A \rightarrow F \in \SemTe M \rho$. We also have
      $A \in \SemTe N \rho$. So we have $F \in \SemTe M \rho \ap \SemTe N \rho$.

      Conversely, let $F \in \SemTe M \rho \ap \SemTe N \rho$.
      There exists $A$ such that $A \rightarrow F \in \SemTe M \rho$ and
      $A \in \SemTe N \rho$.
      So there exists $\Gamma \in \rho$ such that
      $\Derilis \Gamma M {A \rightarrow F}$ and there exists
      $\Delta\in\rho$ such that $\Derilis \Delta N A$.
      Hence, $\Gamma \cap \Delta \in \rho$ and
      $\Derilis {\Gamma \cap \Delta} {M N} F$.
      So $F \in \SemTe {M N} \rho$.

    \item Let $F \in \SemTe {\l x . M} \rho \ap u$.
      There exists $A \in u$ such that $A \rightarrow F \in \SemTe {\l x . M} \rho$.
      So there exists $\Gamma \in \rho$ such that $\Derilis \Gamma {\l x . M} {A \rightarrow F}$.
      Hence, there exists $U$ such that $A \subseteq U$ and
      $\Derilis {\Gamma, x \col U} M F$.

      \begin{iteMize}{$\bullet$}
      \item If $U = \tempty$, then $(\Gamma, x \col U) = \Gamma \in \rho$.
        Since $\rho \subseteq (\rho, x \mapsto u)$, we have
        $(\Gamma, x \col U) \in (\rho, x \mapsto u)$.
      \item If not, then $U \in u$, and we have
        $(\Gamma, x \col U) \in (\rho, x \mapsto u)$.
      \end{iteMize}
      So we have $F \in \SemTe M {\rho, x \mapsto u}$.

      Conversely, let $F \in \SemTe M {\rho, x \mapsto u}$. There
      exists $\Gamma \in (\rho, x \mapsto u)$ such that $\Derilis \Gamma M F$.
      \begin{iteMize}{$\bullet$}
      \item If $x \in \FV M$, then there exist
        $\Gamma_1 \in \rho$ and $A \in u$ such that $\Gamma = \Gamma_1, x \col A$
        (using Lemma~\ref{lem:basicprop}.1).
        So we have $\Derilis {\Gamma_1} {\l x . M} {A \rightarrow F}$, and then
        $A \rightarrow F \in \SemTe {\l x . M} \rho$. Hence, we have
        $F \in \SemTe {\l x . M} \rho \ap u$.
      \item If $x \notin \FV M$, then for all $y \in \dom\Gamma$, $y \neq x$
        (using Lemma~\ref{lem:basicprop}.1).
        So $\Gamma \in \rho$ and $\Derilis {\Gamma, x \col \tempty} M F$.
        Since $u$ is a value, there exist $A \in u$ and
        $\Derilis \Gamma {\l x . M} {A \rightarrow F}$.
        Hence, $A \rightarrow F \in \SemTe {\l x . M} \rho$ and finally
        $F \in \SemTe {\l x . M} \rho \ap u$.
      \end{iteMize}

    \item Let $F \in \SemTe {\Subst M x N} \rho$.
      So there exists $\Gamma \in \rho$ such that $\Derilis \Gamma {\Subst M x N} F$.
      Hence there exist $\Gamma_1$, $\Gamma_2$, $A$ and $U$ such that
      $\Gamma = \Gamma_1 \cap \Gamma_2$, $\Derilis {\Gamma_1} N A$,
      $\Derilis {\Gamma_2, x \col U} M F$ and $U = A$ or $U = \tempty$.
      Hence $\Gamma_1 \in \rho$ and then $A \in \SemTe N \rho$.
      Therefore because $\Gamma_2 \in \rho$ we also have
      $(\Gamma_2, x \col A) \in (\rho, x \mapsto \SemTe N \rho)$
      and $(\Gamma_2, x \col \tempty) \in (\rho, x \mapsto \SemTe N \rho)$.
      Hence $(\Gamma_2, x \col U) \in (\rho, x \mapsto \SemTe N \rho)$.
      So we have $F \in \SemTe M {\rho, x \mapsto \SemTe N \rho}$.

      Conversely, if $F \in \SemTe M {\rho, x \mapsto \SemTe N \rho}$ then there
      exists $\Gamma \in (\rho, x \mapsto \SemTe N \rho)$ such that
      $\Derilis \Gamma M F$. Hence there exist $\Gamma'$ and $U$ such that
      $\Gamma' \in \rho$ and $U \in \SemTe N \rho$ or $U = \tempty$.
      \begin{iteMize}{$bullet$}

      \item If $U \in \SemTe N \rho$ there exist $A$ and $\Delta$ such that
        $\Derilis \Delta N A$ and $A = U$.

      \item If $U = \tempty$, then because $\SemTe N \rho \neq \bot$ there exists
        $A \in \SemTe N \rho$. Therefore there exist $\Delta \in \rho$ such that
        $\Derilis \Delta N A$.

      \end{iteMize}
      Hence $\Derilis {\Delta \cap \Gamma'} {\Subst M x N} F$ with
      $(\Delta \cap \Gamma') \in \rho$.
      Therefore $F \in \SemTe {\Subst M x N} \rho$.

    \item Let $F \in \SemTe {\weakening x M} \rho$.
      So there exists $\Gamma \in \rho$ such that
      $\Derilis \Gamma {\weakening x M} F$.
      So there exist $\Gamma'$, $U$ and $A$ such that
      $\Gamma = (\Gamma', x \col U \cap A)$ and
      $\Derilis {\Gamma', x \col U} M F$.
      Therefore, $\Gamma \subseteq (\Gamma', x \col U)$.
      Hence $(\Gamma', x \col U) \in \rho$.
      Therefore $F \in \SemTe M \rho$.

      Conversely, if $F \in \SemTe M \rho$, then there exists $\Gamma \in \rho$ such
      that $\Derilis \Gamma M F$. $\rho(x) \neq \bot$, so there exists
      $A \in \rho(x)$.
      Also, there exist $\Gamma'$ and $U$ such that
      $\Gamma = (\Gamma', x \col U)$ and $U = \tempty$ or $U \in \rho(x)$.
      So we have $U \cap A \in \rho(x)$ and $\Gamma' \in \rho$.
      Hence $(\Gamma, x \col U \cap A) \in \rho$ and
      $\Derilis {\Gamma, x \col U \cap A} {\weakening x M} F$.
      Therefore $F \in \SemTe {\weakening x M} \rho$.

    \item Let $F \in \SemTe {\contraction x y z M} \rho$.
      So there exists $\Gamma \in \rho$ such that
      $\Derilis \Gamma {\contraction x y z M} F$.
      Hence there exist $\Gamma'$, $U$, $V_1$ and $V_2$ such that
      $\Gamma = (\Gamma', x \col U \cap (V_1 \cap V_2))$ and
      $\Derilis {\Gamma', x \col U, y \col V_1, z \col V_2} M F$.
      Therefore $\Gamma' \in \rho$ and $U \cap (V_1 \cap V_2) \in \rho(x)$
      or $U \cap (V_1 \cap V_2) = \tempty$.
      So $U$, $V_1$ and $V_2$ are either equal to $\tempty$ or are in $\rho(x)$.
      Hence $(\Gamma', x \col U, y \col V_1, z \col V_2) \in
      (\rho, y \mapsto \rho(x), z \mapsto \rho(x))$.
      Therefore $F \in \SemTe M {\rho, y \mapsto \rho(x), z \mapsto \rho(x)}$.

      Conversely, if $F \in \SemTe M {\rho, y \mapsto \rho(x), z \mapsto \rho(x)}$
      , then there exists $\Gamma \in (\rho, x \mapsto \rho(x), y \mapsto \rho(y))$.
      So there exist $\Gamma'$, $U$, $V_1$ and $V_2$ such that
      $\Gamma = (\Gamma', x \col U, y \col V_1, z \col V_2)$ and $U$, $V_1$ and
      $V_2$ are either equal to $\tempty$ or are in $\rho(x)$.
      Hence $(U \cap (V_1 \cap V_2)) \in \rho(x)$ or is equal to $\tempty$.
      So we have $\Derilis {\Gamma', x \col U \cap (V_1 \cap V_2)}
      {\contraction x y z M} F$ with
      $(\Gamma', x \col U \cap (V_1 \cap V_2)) \in \rho$.
      Therefore $F \in \SemTe {\contraction x y z M} \rho$.

    \end{enumerate}
  \end{proof}
\end{toappendix}

This theorem makes \InterLambdaSub\ a suitable alternative as a target system: the filter models of the source systems treated in~\cite{CoquandSpiwack07} can be done with a system of non-idempotent intersection types.  While we could develop those constructions, we prefer to cover a new range of source systems: those with second-order quantifiers such as System $F$.

\subsection{An example: System F and the likes}

% In this section for every term the weakening and contraction operators respect
% the following properties :

% \begin{iteMize}{$\bullet$}

% \item For every $\weakening x M$ we have $x \notin fv(M)$

% \item For every $\contraction x y z M$ we have $y \in fv(M)$, $z \in fv(M)$
%   and $x \notin fv(M)$

% \end{iteMize}

\begin{definition}[Types and Typing System]  Types are built by the following grammar:
  \[\gA,\gB,\ldots\recdef \alpha\mid \gA\arr \gB \mid \gA\cap \gB \mid \forall
  \alpha \gA\]
  where $\alpha$ denotes a type variable, $\forall \alpha \gA$ binds $\alpha$ in
  $\gA$, types are considered modulo $\alpha$-conversion, and $\TV\gA$ denotes the
  free (type) variables of $\gA$.

  Typing contexts, denoted $\gGamma$, $\gDelta,\ldots$ are partial maps from term
  variables to types, and $(x\col \gA)$ denotes the map from $x$ to $\gA$.

  Let $\SysS$ be the typing system consisting of the rules in Fig.~\ref{fig:TypMix}.

  Typability in system $\SysS$ will be expressed by judgements of the form $\DeriliS \gGamma M \gA$.

\end{definition}

\begin{figure}[!h]
  \[
  \begin{array}{|c|}
    \upline
    \infer{\Derili{\gGamma,x\col \gA} x \gA}{\strut}
    \qquad
    \infer{\Derili{\gGamma} {\lambda x.M} {\gA\arr \gB}}
    {\Derili{\gGamma,x\col \gA} M \gB}
    \qquad
    \infer{\Derili{\gGamma} {M\ N} \gB}{\Derili\gGamma M {\gA\arr \gB}\quad
      \Derili\gGamma N \gA}
    \\\\
    \infer{\Derili \gGamma M {\gA\cap \gB}}{\Derili \gGamma M {\gA}\quad
      \Derili \gGamma M {\gB}}
    \qquad
    \infer{\Derili \gGamma M {\gA}}{\Derili \gGamma M {\gA\cap \gB}}
    \qquad
    \infer{\Derili \gGamma M \gB}{\Derili \gGamma M {\gA\cap \gB}}
    \\\\
    \infer[\alpha\notin\TV\gGamma]{\Derili \gGamma M {\forall \alpha \gA}}
    {\Derili \gGamma M {\gA}}
    \qquad
    \infer{\Derili \gGamma M {\subst \gA \alpha \gB}}
    {\Derili \gGamma M {\forall \alpha \gA}}
    \\\\
    \infer{\Derili \gGamma {\Subst M x N} \gB}{\Derili \gGamma N \gA \quad
      \Derili {\gGamma, x \col \gA} M \gB}
    \qquad
    \infer{\Derili {\gGamma, x \col \gA} {\weakening x M} \gB}
    {\Derili \gGamma M \gB \quad x \notin dom(\gGamma)}
    \qquad
    \infer{\Derili {\gGamma, x \col \gA} {\contraction x y z M} \gB}
    {\Derili {\gGamma, y \col \gA, z \col \gA} M \gB}
    \downline
  \end{array}
  \]
  \caption{Miscellaneous Typing Rules}
  \label{fig:TypMix}
\end{figure}

\begin{remark}
  Note that the typing rules in System $\SysS$ do not necessarily follow the philosophy of the $\LambdaLxr$-calculus and the $\LambdaS$-calculus.
  For example, we would expect a typing system for $\LambdaS$ or $\LambdaLxr$ to be such that the domain of the typing context is exactly the set of free variables in the typed term (this leads to interesting properties and better encodings into proof-nets -see \eg\cite{KLIaC06}).
  
  However here, we are only interested in strong normalisation, and we therefore consider a typing system as general as possible (hence the accumulation of rules in Fig.~\ref{fig:TypMix}), \ie a typing system such that the terms that are typed in an appropriate typing system (such as that of~\cite{KLIaC06} for $\LambdaLxr$) can be typed here. This is the case of System $\SysS$. Alternatively, we could also adapt and tailor the proof of strong normalisation below to the specific typing system in which we are interested.
\end{remark}

\subsection{An intuitionistic realisability model}
\label{sec:ifiltermodel}

We now build the model \Mfiltersi\ as follows:
\begin{definition}[Realisability Predicate]
  A {\em realisability predicate} is a subset $X$ of $\ValDom$ containing $\top$. We define $\TP\ValDom$ as the set of realisability predicates.
\end{definition}

\begin{lemma}[Shape of realisability predicates]\strut
  \begin{enumerate}[\em(1)]
    % \item ${\top} \in \TP\ValDom$ so $\TP\ValDom \neq \emptyset$.
  \item If $(X_i)_{i \in I}$ is a non empty family of $\TP\ValDom$, then
    $\bigcap_{i \in I} X_i \in \TP\ValDom$.
  \item If $X$ and $Y$ in $\TP\ValDom$, then $X \rightarrow Y \in \TP\ValDom$ where
    $X \rightarrow Y$ is defined as
    \[X \rightarrow Y \eqdef \{ u \mid \forall v \in X, u\ap v \in Y \}\]
  \end{enumerate}
\end{lemma}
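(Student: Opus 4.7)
The plan is to unfold the definition of realisability predicate and check the two closure conditions (subset of $\ValDom$, and containing $\top$) for each construction in turn. There is no reducibility or induction at play here: it is a direct set-theoretic verification that exploits the two facts about application recorded in the earlier remark, namely $\bottom \ap v = \bottom$ for every $v$, and $\top \ap v = \top$ for every $v \in \ValDom$.

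For (1), I would first observe that $\bigcap_{i \in I} X_i$ is a subset of $\ValDom$ since each $X_i$ already is (and $I$ is non-empty, so there is at least one $X_i$ bounding the intersection). Then I would check that $\top$ belongs to the intersection: by assumption $\top \in X_i$ for each $i \in I$, hence $\top \in \bigcap_{i \in I} X_i$. This closes the first item.

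For (2), the small subtlety lies in verifying that $X \rightarrow Y$ is really contained in $\ValDom$ even though the defining comprehension quantifies over \emph{all} I-filters $u$, not only values. I would argue as follows: since $X$ is a realisability predicate, $\top \in X$, so $X$ is non-empty; pick any $v \in X \subseteq \ValDom$. If $u = \bottom$, then $u \ap v = \bottom \ap v = \bottom$, which is not in $Y \subseteq \ValDom$; hence any $u \in X \rightarrow Y$ must satisfy $u \neq \bottom$, i.e.\ $u \in \ValDom$. It remains to show $\top \in X \rightarrow Y$: for any $v \in X \subseteq \ValDom$ we have $\top \ap v = \top$, and $\top \in Y$ because $Y$ is a realisability predicate, so $\top \ap v \in Y$ for all such $v$, as required.

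There is no real obstacle to this lemma; the only place where one has to pay attention is the first half of item (2), where the non-emptiness of $X$ (guaranteed by $\top \in X$) is used to exclude $\bottom$ from $X \rightarrow Y$ and thereby ensure $X \rightarrow Y \subseteq \ValDom$. Both items are then immediate.
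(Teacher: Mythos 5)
Your proof is correct and follows essentially the same route as the paper's: the paper also dismisses item (1) as immediate and concentrates on item (2), showing $\top \in X \rightarrow Y$ via $\top \ap v = \top$ for $v \in X \subseteq \ValDom$, and excluding $\bottom$ from $X \rightarrow Y$ by picking some $u \in X$ (non-empty since $\top \in X$) and observing $\bottom \ap u = \bottom \notin Y$. Your writeup is if anything slightly more explicit about why these checks suffice, but the substance is identical.
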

\begin{proof}
  The only subtle point is the second one: First, for all $v \in X$, $v \neq \bot$
  and thus $\top\ap v = \top \in Y$. So $\top \in X \rightarrow Y$.
  Second, suppose that $\bot \in X \rightarrow Y$. As $X\neq \emptyset$, there is 
  $u\in X$, for which $\bot\ap u = \bot \in Y$, which contradicts the fact that
  $Y\in \TP\ValDom$.
\end{proof}

We can now interpret types:
\begin{definition}[Interpretation of types]\strut\\
  {\em Valuations} are mappings from type variables to elements of $\TP\ValDom$.

  Given such a valuation $\sigma$, the interpretation of types is defined as
  follows:
  \[\begin{array}{c}
    \begin{array}{lll}
      \SemTyP \alpha \sigma &\eqdef \sigma(\alpha)\\
      \SemTyP {\gA \arr \gB} \sigma &\eqdef {\SemTyP \gA \sigma}\arr{\SemTyP \gB
        \sigma}
    \end{array}
    \qquad
    \begin{array}{lll}
      \SemTyP {\gA \cap \gB} \sigma &\eqdef \SemTyP \gA \sigma \cap \SemTyP \gB
      \sigma\\
      \SemTyP {\forall \alpha \gA}\sigma &\eqdef \bigcap_{X\in \TP\ValDom} \SemTyP \gA
      {\sigma, \alpha \mapsto X}
    \end{array}
  \end{array}
  \]
  The interpretation of typing contexts is defined as follows:
  \[
  \SemTyP \gGamma\sigma\eqdef \{ \rho\mid \forall (x\col \gA)\in\gGamma,\rho(x)\in
  \SemTyP \gA\sigma\}
  \]
\end{definition}
Finally we get Adequacy:
\begin{lemma}[Adequacy Lemma]
  If $\DeriliS \gGamma M \gA$, then for all valuations $\sigma$ and for all
  mappings $\rho\in\SemTyP\gGamma\sigma$ we have $\SemTe M\rho\in\SemTyP
  \gA\sigma$.
\end{lemma}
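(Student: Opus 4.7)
The plan is to proceed by structural induction on the derivation $\DeriliS \gGamma M \gA$ in System $\SysS$, handling each typing rule by combining the inductive characterisation of $\SemTe{\cdot}\rho$ (Theorem~\ref{th:SemTe}) with the corresponding operation on realisability predicates. Throughout, we fix an arbitrary valuation $\sigma$ and an $\rho\in\SemTyP\gGamma\sigma$; it is useful to note up front that, for any type $\gA$, $\SemTyP \gA\sigma\subseteq \ValDom$ so in particular every element of $\SemTyP \gA\sigma$ is distinct from $\bottom$. This observation will be used repeatedly to discharge the non-emptiness side-conditions in parts 3--5 of Theorem~\ref{th:SemTe}.

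First I would dispatch the ``propositional'' cases. For the variable rule, Theorem~\ref{th:SemTe}.1 gives $\SemTe x \rho=\rho(x)\in \SemTyP \gA\sigma$ directly from $\rho\in\SemTyP\gGamma\sigma$. For abstraction, to show $\SemTe{\lambda x.M}\rho\in \SemTyP\gA\sigma\arr \SemTyP\gB\sigma$, I pick $v\in \SemTyP\gA\sigma$, observe that $v\neq\bottom$ so Theorem~\ref{th:SemTe}.3 gives $\SemTe{\lambda x.M}\rho\ap v=\SemTe M{\rho,x\mapsto v}$, note $(\rho,x\mapsto v)\in\SemTyP{\gGamma,x\col \gA}\sigma$, and invoke the induction hypothesis. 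Application uses Theorem~\ref{th:SemTe}.2 and the definition of $X\arr Y$. Intersection introduction and elimination follow immediately from $\SemTyP{\gA\cap \gB}\sigma=\SemTyP \gA\sigma\cap\SemTyP \gB\sigma$. The explicit-substitution rule uses Theorem~\ref{th:SemTe}.4: by IH $\SemTe N\rho\in\SemTyP\gA\sigma\subseteq\ValDom$, so the side-condition $\SemTe N\rho\neq\bottom$ holds and $\SemTe{\Subst M x N}\rho=\SemTe M{\rho,x\mapsto\SemTe N\rho}$, which lies in $\SemTyP \gB\sigma$ by IH on $M$. The weakening rule uses Theorem~\ref{th:SemTe}.5 (the condition $\rho(x)\neq\bottom$ follows from $\rho(x)\in \SemTyP \gA\sigma$), and the contraction rule uses Theorem~\ref{th:SemTe}.6 together with $(\rho,y\mapsto \rho(x),z\mapsto \rho(x))\in\SemTyP{\gGamma,y\col \gA,z\col \gA}\sigma$.

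The second-order cases are handled as follows. For $\forall$-introduction, the side-condition $\alpha\notin \TV\gGamma$ ensures, by a routine lemma on the freeness of $\alpha$ in $\SemTyP\cdot\sigma$, that $\rho\in\SemTyP\gGamma{\sigma,\alpha\mapsto X}$ for every $X\in \TP\ValDom$; the induction hypothesis then gives $\SemTe M\rho\in\SemTyP \gA{\sigma,\alpha\mapsto X}$ for every such $X$, hence $\SemTe M\rho\in \SemTyP{\forall\alpha \gA}\sigma$ by definition. For $\forall$-elimination, I would first establish a substitution lemma for the semantics of types, namely
\[
\SemTyP{\subst \gA\alpha \gB}\sigma = \SemTyP \gA{\sigma,\alpha\mapsto \SemTyP \gB\sigma},
\]
proved by induction on $\gA$. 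The induction hypothesis then yields $\SemTe M\rho\in \SemTyP{\forall\alpha \gA}\sigma\subseteq \SemTyP \gA{\sigma,\alpha\mapsto \SemTyP \gB\sigma}= \SemTyP{\subst \gA\alpha \gB}\sigma$.

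The main obstacle, as usual in adequacy proofs involving second-order quantification, is the treatment of $\forall$: it requires both the freeness lemma for type interpretations (to handle $\forall$-introduction cleanly in the presence of arbitrary $\sigma$) and the substitution lemma above (for $\forall$-elimination), both established by simple inductions on the type structure but carefully respecting that valuations range over $\TP\ValDom$. Apart from that, the only subtle points in the rest of the proof are the vacuous side-conditions of Theorem~\ref{th:SemTe}.3--5, all of which are disposed of by the single observation that realisability predicates sit inside $\ValDom$ and hence contain no empty filter.
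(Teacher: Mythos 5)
Your proposal is correct and takes essentially the same route as the paper, whose proof is just the one-line statement ``by induction on the derivation of $\DeriliS \gGamma M \gA$, using Theorem~\ref{th:SemTe} and the substitution property $\SemTyP{\subst \gA\alpha\gB}\sigma=\SemTyP{\gA}{\sigma,\alpha\mapsto \SemTyP{\gB}\sigma}$''. You merely make explicit the details the paper leaves implicit: the discharge of the non-emptiness side-conditions of Theorem~\ref{th:SemTe}.3--5 via $\SemTyP \gA\sigma\subseteq\ValDom$, and the freeness lemma needed for $\forall$-introduction.
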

\begin{proof}
  By induction on the derivation of $\DeriliS \gGamma M \gA$, using
  Theorem~\ref{th:SemTe} (and the fact that $\SemTyP {\subst \gA\alpha\gB}
  \sigma=\SemTyP {\gA} {\sigma,\alpha\mapsto \SemTyP {\gA} \sigma}$, which is
  proved by induction on $\gA$).
\end{proof}

\begin{corollary}[Strong normalisation of $\SysS$]
  If $\DeriliS \gGamma M \gA$, then  $M\in\SN{}$.
\end{corollary}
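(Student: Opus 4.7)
The strategy is to turn a derivation in $\SysS$ into a derivation in \InterLambdaSub\ via the Adequacy Lemma and then invoke Soundness from Section~\ref{sec:soundness}. The crucial observation is that realisability predicates are subsets of $\ValDom$, so their elements are \emph{values}, i.e.\ non-empty I-filters; once Adequacy places $\SemTe M \rho$ into some $\SemTyP \gA \sigma$, the filter $\SemTe M \rho$ is automatically non-empty, which by Definition~\ref{def:intterms} yields a typing derivation for $M$ in \InterLambdaSub.

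First I would verify, by induction on $\gA$, that $\SemTyP \gA \sigma \in \TP\ValDom$ for every valuation $\sigma$. The variable case is immediate; the arrow and binary intersection cases follow from the two points of the lemma on the shape of realisability predicates; the $\forall \alpha\gA$ case uses the intersection-closure point of that lemma applied to the family $(\SemTyP \gA {\sigma,\alpha\mapsto X})_{X\in\TP\ValDom}$, which is non-empty because $\TP\ValDom$ itself contains $\ValDom$.

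Next, I would pick any valuation $\sigma$ and define the environment $\rho_\top(x)\eqdef \top$ for every variable $x$. Since every realisability predicate contains $\top$ by definition, and each $\SemTyP \gA \sigma$ is a realisability predicate by the previous step, we have $\rho_\top \in \SemTyP \gGamma \sigma$. Applying the Adequacy Lemma to $\DeriliS \gGamma M \gA$ gives $\SemTe M {\rho_\top} \in \SemTyP \gA \sigma \subseteq \ValDom$, so $\SemTe M {\rho_\top}$ is a non-empty I-filter and in particular contains some $A$-type $A$. Unfolding Definition~\ref{def:intterms}, there exists a context $\Gamma$ compatible with $\rho_\top$ such that $\Derilis \Gamma M A$ holds in \InterLambdaSub.

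Finally, depending on which fragment of the common grammar $M$ lies in, I would invoke Theorem~\ref{th:soundness} (for the pure $\l$-calculus), Theorem~\ref{th:LSsoundness} (for $\LambdaS$), or Theorem~\ref{th:Llxrsoundness} (for $\LambdaLxr$), each of which states that typability in \InterLambdaSub\ entails strong normalisation in the corresponding calculus, yielding $M\in\SN{}$. The only mildly delicate step is the auxiliary induction showing that $\SemTyP \gA \sigma$ is a realisability predicate, and specifically the $\forall$ clause where one must remember that the indexing family $\TP\ValDom$ is non-empty so that point~(1) of the shape lemma applies; everything else is a direct assembly of previously established results.
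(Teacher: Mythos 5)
Your proposal is correct and follows essentially the same route as the paper: instantiate the Adequacy Lemma with the environment sending every term variable to $\top$ (the paper takes $\sigma$ mapping every type variable to $\{\top\}$), conclude that $\SemTe M\rho\neq\bottom$ so that $M$ is typable in \InterLambdaSub, and then invoke Soundness. You are merely more explicit than the paper about the auxiliary induction showing each $\SemTyP \gA \sigma$ is a realisability predicate and about dispatching to the appropriate soundness theorem per calculus, which is a fair elaboration rather than a different argument.
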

\begin{proof}
  Applying the previous lemma with $\sigma$ mapping every type variable to
  $\{\top\}$ and $\rho$ mapping all term variables to $\top$, we get
  $\SemTe M\rho\in\SemTyP \gA\sigma$, so $\SemTe M\rho\neq\bottom$.
  Hence, $M$ can be typed in \InterLambdaSub, so $M\in\SN{}$. 
\end{proof}
The advantage of non-idempotent intersection types (over idempotent ones) lies
in the very last step of the above proof:
here the typing trees of \InterLambdaSub\ get smaller with every
$\beta$-reduction (proof of Theorem~\ref{th:soundness}), while a reducibility
technique as in~\cite{CoquandSpiwack07} combines yet again an induction on types
with an induction on typing trees similar to that in the Adequacy Lemma.

\subsection{Orthogonality models}

In this section we show how the above methodology can be integrated to the theory of {\em orthogonality}, \ie how this kind of filter model construction can be captured by orthogonality techniques~\cite{girard-ll,DanosKrivine00, Krivine01,MunchCSL09}.  These techniques are particularly suitable to prove that typed terms satisfy some property~\cite{Parigot97,mellies05recursive,LM:APAL07}, the most well-known of which being Strong Normalisation.

For this we define an abstract notion of orthogonality model for the system $\SysS$ defined in Fig.~\ref{fig:TypMix}.  In particular our definition also applies to sub-systems such as the simply-typed $\l$-calculus, the idempotent intersection type system, System~$F$, etc. We could also adapt it with no difficulty to accommodate System~$F_\omega$.

Orthogonality techniques and the filter model construction from Section~\ref{sec:filters} (with the sets $\ValDom$ and $\ValDomE$) inspire the notion of orthogonality model below. First we need the following notations:

\begin{notation}
Given a set $\mathcal D$, let $\mathcal D^*$ be the set of lists of elements of $\mathcal D$, with $\el$ representing the empty list and $\cons u {\vec v}$ representing the list of head $u$ and tail $\vec{v}$.
\end{notation}

\begin{definition}[Orthogonality model]\strut\\
  \label{prop}An {\em orthogonality model} is a 4-tuple
  $(\ValDomE,\ValDom,\orth{}{},\SemTe {\_ }{\_} )$ where
  \begin{iteMize}{$\bullet$}
  \item $\ValDomE$ is a set, called the {\em support}
  \item $\ValDom\subseteq\ValDomE$ is a set of elements called {\em values}
  \item $\orth{}{}\subseteq\ValDom\times\ValDomL$ is called the
    \emph{orthogonality relation}
  \item $\SemTe {\_ }{\_}$ is a function mapping every term $M$ (typed or untyped)
    to an element $\SemTe {M}{\rho}$ of the support, where $\rho$ is a parameter
    called {\em environment} mapping term variables to values.
  \item the following axioms are satisfied:
    \begin{enumerate}[({A}1)]
    \item[(A1)]
      For all $\rho$, $\vec v$, $x$,\hfill
      if $\orth{\rho(x)} {\vec v}$, then $\orth {\SemTe x\rho} {\vec v}$.
    \item[(A2)]
      For all $\rho$, $\vec v$, $M_1$, $M_2$,\hfill
      if $\orth {\SemTe {M_1}\rho}{(\cons {\SemTe {M_2}\rho}{\vec v})}$, then
      $\orth {\SemTe {M_1\ M_2}\rho}{\vec v}$. 
    \item[(A3)]
      For all $\rho$, $\vec v$, $x$, $M$ and for all values $u$,\\\strut\hfill
      if $\orth {\SemTe {M}{\rho,x\mapsto u}}{\vec v}$, then
      $\orth {\SemTe {\lambda x.M}\rho}{(\cons u{\vec v})}$. 
    \item[(A4)]
      For all $\rho$, $\vec v$, $x$, $M_1$, $M_2$,\\\strut\hfill if 
      $\SemTe {M_2}\rho$ is a value and
      $\orth {\SemTe {M_1}{\rho,x\mapsto \SemTe {M_2}\rho}}
      {\vec v}$, then $\orth {\SemTe {\Subst {M_1}x{M_2}}{\rho}}{\vec v}$. 
    \item[(A5)]
      For all $\rho$, $\vec v$, $x$, $M$ (such that $x\notin fv(M)$) and for all values $u$,\\\strut\hfill    
      if $\orth {\SemTe {M}\rho}{\vec v}$, then $\orth {\SemTe {\weakening x M}
        {\rho,x\mapsto u}}{\vec v}$. 
    \item[(A6)]
      For all $\rho$, $\vec v$, $x,y,z$ (distinct variables), $M$ (such that $x\notin fv(M)$) and for all values $u$,\\\strut\hfill    
      if $\orth {\SemTe {M}{\rho,y\mapsto u,z\mapsto u}}{\vec v}$, then
      $\orth {\SemTe {\contraction x y z M}{\rho,x\mapsto u}}{\vec v}$. 
    \end{enumerate}
  \end{iteMize}
\end{definition}

In fact, $\ValDom$ and $\orth{}{}$ are already sufficient to interpret any type $A$ as a set $\SemTyP{A}{}$ of values: if types are seen as logical formulae, we can see this construction as a way of building some of their realisability / set-theoretical models.

There is no notion of computation pertaining to values, but the interplay between the interpretation of terms and the orthogonality relation is imposed by the axioms so that the Adequacy Lemma (which relates typing to semantics) holds:\\
\strut\hfill
If $\Deri[\SysS] {} M \gA$, then $\SemTe M{}\in\SemTyP \gA{}$
\hfill\strut

\subsubsection{Semantics of types and Adequacy Lemma}

\begin{definition}[Orthogonal]\strut
  \begin{iteMize}{$\bullet$}
  \item
    If $X\subseteq\ValDom$, then let %$\uniorth X$ be the following subset of $\ValDomL$:
    \(\uniorth X \eqdef \{ \vec{v}\in \ValDomL \mid \forall u \in X, \orth u
    {\vec{v}} \}\)
  \item
    If $Y\subseteq\ValDomL$, then let
    % $\uniorth X$ be the following subset of $\ValDom$:
    \(\uniorth Y \eqdef \{ u\in \ValDom \mid \forall \vec{v} \in Y,
    \orth u {\vec{v}} \}\)
  \end{iteMize}
\end{definition}

\begin{remark}
  If $X\subseteq\ValDom$ or $X\subseteq\ValDomL$, then $X \subseteq \biorth X$
  and $\triorth X= \uniorth X$.
\end{remark}

\begin{definition}[Lists and Cons construct]\strut\\
  If $X\subseteq\ValDom$ and $Y\subseteq\ValDomL$, then define \(\cons X Y
  \eqdef \{\cons u {\vec v}\mid u\in X,\vec v \in Y \}\).
\end{definition}

\begin{definition}[Interpretation of types]\strut\\
  Mappings from type variables to subsets of $\ValDomL$ are called
  {\em valuations}.

  Given such a valuation $\sigma$, the interpretation of types is defined as
  follows:
  \[ \begin{array}{c}
    \begin{array}{lll}
      \SemTyN \alpha \sigma &\eqdef \sigma(\alpha)\\
      \SemTyN {\gA \arr \gB} \sigma &\eqdef \cons{\SemTyP \gA \sigma}{\SemTyN \gB \sigma}\\
    \end{array}
    \qquad
    \begin{array}{lll}
      \SemTyN {\gA \cap \gB} \sigma &\eqdef \SemTyN \gA \sigma \cup \SemTyN \gB
      \sigma\\
      \SemTyN {\forall \alpha \gA}\sigma &\eqdef \bigcup_{Y\subseteq\ValDomL}
      \SemTyN \gA {\sigma, \alpha \mapsto Y}\\
    \end{array}\\[10pt]
    \SemTyP \gA \sigma \eqdef \uniorth {\SemTyN \gA\sigma}
  \end{array}
  \]
  The interpretation of typing contexts is defined as follows:
  \[
  \SemTyP \gGamma\sigma\eqdef \{ \rho\mid \forall (x\col \gA)\in\gGamma,\rho(x)
  \in\SemTyP \gA\sigma\}
  \]
\end{definition}

\begin{remark}
  Note that $\SemTyN{\subst\gA\alpha\gB}\sigma=\SemTyN{\gA}{\sigma,\alpha\mapsto
    \SemTyN{\gB}\sigma}$ and $\SemTyP{\subst\gA\alpha\gB}\sigma=\SemTyP{\gA}
  {\sigma,\alpha\mapsto\SemTyN{\gB}\sigma}$.\\
  Also note that $\SemTyP {\gA \cap \gB} \sigma = \SemTyP \gA \sigma \cap \SemTyP
  \gB \sigma$ and $\SemTyP {\forall \alpha \gA}\sigma = \bigcap_{Y\subseteq
    \ValDomL} \SemTyP \gA {\sigma, \alpha \mapsto Y}$.
\end{remark}

An orthogonality model is a sufficiently rich structure for Adequacy to hold:
\begin{toappendix}
  \begin{lemma}[Adequacy Lemma]\label{lem:Adequacy}\strut

    If $\DeriliS \gGamma M \gA$, then for all valuations $\sigma$ and for all
    mappings $\rho\in\SemTyP\gGamma\sigma$ we have
    $\SemTe M\rho\in\SemTyP \gA\sigma$.
  \end{lemma}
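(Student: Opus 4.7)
The plan is to prove the Adequacy Lemma by induction on the derivation of $\DeriliS \gGamma M \gA$. Since $\SemTyP \gA \sigma$ is defined as $\uniorth{\SemTyN \gA \sigma}$, the goal at each step reduces to showing $\orth{\SemTe M\rho}{\vec v}$ for every $\vec v \in \SemTyN\gA\sigma$. The six axioms (A1)--(A6) are each tailored to one of the constructs of the term language, so each inductive case amounts to one axiom application once the inductive hypothesis has supplied the required orthogonality or value-membership facts.

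For the variable rule, the assumption $\rho \in \SemTyP{\gGamma,x\col\gA}\sigma$ gives $\rho(x)\in\SemTyP\gA\sigma$, hence $\orth{\rho(x)}{\vec v}$ for every $\vec v\in\SemTyN\gA\sigma$, and (A1) lifts this to $\SemTe x\rho$. For application with premises $\Derili\gGamma M{\gA\arr\gB}$ and $\Derili\gGamma N\gA$, the inductive hypotheses yield $\SemTe M\rho\in\SemTyP{\gA\arr\gB}\sigma$ and $\SemTe N\rho\in\SemTyP\gA\sigma$; given $\vec v\in\SemTyN\gB\sigma$, the list $\cons{\SemTe N\rho}{\vec v}$ lies in $\SemTyN{\gA\arr\gB}\sigma$ by definition, so $\orth{\SemTe M\rho}{\cons{\SemTe N\rho}{\vec v}}$, and (A2) delivers the conclusion. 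The abstraction case is dual: a typical element of $\SemTyN{\gA\arr\gB}\sigma$ has the form $\cons u{\vec v}$ with $u\in\SemTyP\gA\sigma\subseteq\ValDom$ (so $u$ is genuinely a value); extending $\rho$ by $x\mapsto u$ keeps us in $\SemTyP{\gGamma,x\col\gA}\sigma$, the inductive hypothesis gives $\SemTe M{\rho,x\mapsto u}\in\SemTyP\gB\sigma$, and (A3) concludes.

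The intersection introduction and the two intersection eliminations are discharged by $\SemTyP{\gA\cap\gB}\sigma = \SemTyP\gA\sigma\cap\SemTyP\gB\sigma$. For $\forall$-introduction, the side condition $\alpha\notin\TV\gGamma$ entails $\SemTyP\gGamma{\sigma,\alpha\mapsto Y} = \SemTyP\gGamma\sigma$ for every $Y$ (by a routine induction on the types in $\gGamma$), so the inductive hypothesis applies uniformly and one obtains membership in $\bigcap_{Y}\SemTyP\gA{\sigma,\alpha\mapsto Y}$. For $\forall$-elimination, the substitution identity recalled in the remark makes the case a direct instantiation with $Y = \SemTyN\gB\sigma$. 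The explicit substitution, weakening, and contraction rules are handled respectively by (A4), (A5), (A6): in each case one writes $\rho$ in the form demanded by the axiom, observes that the relevant element of the environment ($\SemTe N\rho$ for substitution, $\rho(x)$ for weakening and contraction) is a value because it belongs to some $\SemTyP\gA\sigma\subseteq\ValDom$, applies the inductive hypothesis under the modified environment, and then invokes the axiom.

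The proof is not technically deep but requires careful bookkeeping: the main subtlety is systematically checking that the value-preconditions of (A3)--(A6) are met, and proving the substitution lemma $\SemTyN{\subst\gA\alpha\gB}\sigma = \SemTyN\gA{\sigma,\alpha\mapsto\SemTyN\gB\sigma}$ in the right shape so that both $\forall$-rules reduce cleanly. Otherwise the induction is a mechanical unfolding whose structure precisely mirrors the design of the six axioms.
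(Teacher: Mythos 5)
Your proposal is correct and follows essentially the same route as the paper's own proof: induction on the derivation of $\DeriliS \gGamma M \gA$, discharging each rule by the matching axiom (A1)--(A6), handling the intersection rules via $\SemTyP{\gA\cap\gB}\sigma=\SemTyP\gA\sigma\cap\SemTyP\gB\sigma$ and the quantifier rules via the substitution identity for $\SemTyN{\subst\gA\alpha\gB}\sigma$. Your explicit remark that $\alpha\notin\TV\gGamma$ forces $\SemTyP\gGamma{\sigma,\alpha\mapsto Y}=\SemTyP\gGamma\sigma$ is a detail the paper leaves implicit, but the argument is the same.
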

\end{toappendix}

\begin{toappendix}
  [
  \begin{proof}
    By induction on $\DeriliS \gGamma M \gA$, using the axioms (A1),\ldots,(A6) from Definition~\ref{prop}.
    See Appendix~\thisappendix.
  \end{proof}
  ]
  \begin{proof}
    By induction on the derivation of $\DeriliS \gGamma M \gA$, using the axioms (A1),\ldots,(A6) from Definition~\ref{prop}. Let $\sigma$ be a valuation.
    \begin{iteMize}{$\bullet$}
    \item $\infer{\Derili{\gGamma,x\col \gA} x \gA}{\strut}$\\
      Let $\rho\in\SemTyP {\gGamma,x\col\gA}\sigma$ and let $\v v\in\SemTyN \gA\sigma$.
      By definition, $\rho(x)\in\SemTyP \gA\sigma$ so $\orth {\rho(x)}{\v v}$, and by axiom (A1) we have $\orth{\SemTe x\rho}{\v v}$. Hence, ${\SemTe x\rho}\in \SemTyP \gA\sigma$.\bigskip

    \item $\infer{\Derili{\gGamma} {\lambda x.M} {\gA\arr \gB}}{\Derili{\gGamma,x\col \gA} M \gB}$\\
      Let $\rho\in\SemTyP \gGamma\sigma$ and let $\cons w{\v v}\in\SemTyN {\gA\arr\gB}\sigma=\cons{\SemTyP {\gA}\sigma}{\SemTyN {\gB}\sigma}$.
      As $w\in\SemTyP {\gA}\sigma$, we have $(\rho,x\mapsto w)\in\SemTyP {\gGamma,x\col\gA}\sigma$, so by induction hypothesis we have $\SemTe M{\rho,x\mapsto w}\in\SemTyP {\gB}\sigma$. From this we get $\orth {\SemTe{M}{\rho,x\mapsto w}}{\v v}$ and by axiom (A3) we have 
      $\orth {\SemTe{\lambda x.M}{\rho}}{\cons{w}{\v v}}$. Hence, ${\SemTe {\lambda x.M}\rho}\in \SemTyP {\gA\arr\gB}\sigma$.\bigskip

    \item $\infer{\Derili{\gGamma} {M\ N} \gB}{\Derili\gGamma M {\gA\arr \gB}\quad\Derili\gGamma N \gA}$\\
      Let $\rho\in\SemTyP \gGamma\sigma$ and let $\v v\in\SemTyN {\gB}\sigma$.
      By induction hypothesis we have $\SemTe N{\rho}\in\SemTyP {\gA}\sigma$ and $\SemTe M{\rho}\in\SemTyP {\gA\arr\gB}\sigma$. We thus get $\cons {\SemTe N{\rho}}{v}\in\cons {\SemTyP {\gA}\sigma}{\SemTyN {\gB}\sigma}=\SemTyN {\gA\arr\gB}\sigma$. So $\orth{\SemTe M{\rho}}{\cons{\SemTe N{\rho}}{v}}$ and by axiom (A2) we have
      $\orth{\SemTe {M\ N}{\rho}}{{v}}$. Hence $\SemTe {M\ N}{\rho}\in\SemTyP {\gB}\sigma$.\bigskip

    \item $\infer{\Derili \gGamma M {\gA\cap \gB}}{\Derili \gGamma M {\gA}\quad \Derili \gGamma M {\gB}}$\\
      Let $\rho\in\SemTyP \gGamma\sigma$ and let $\v v\in\SemTyN {\gA\cap\gB}\sigma=\SemTyN {\gA}\sigma\cup\SemTyN {\gB}\sigma$.
      By induction hypothesis we have $\SemTe M{\rho}\in\SemTyP {\gA}\sigma$ and $\SemTe M{\rho}\in\SemTyP {\gB}\sigma$ so in any case $\orth{\SemTe M{\rho}}{v}$. Hence $\SemTe M{\rho}\in\SemTyP {\gA\cap\gB}\sigma$.\bigskip

    \item $\infer{\Derili \gGamma M {\gA}}{\Derili \gGamma M {\gA\cap \gB}}$\\
      Let $\rho\in\SemTyP \gGamma\sigma$ and let $\v v\in\SemTyN {\gA}\sigma\subseteq\SemTyN {\gA\cap\gB}\sigma$.
      By induction hypothesis we have $\SemTe M{\rho}\in\SemTyP {\gA\cap\gB}\sigma$ so $\orth{\SemTe M{\rho}}{v}$. Hence $\SemTe M{\rho}\in\SemTyP {\gA}\sigma$.\bigskip

    \item $\infer{\Derili \gGamma M \gB}{\Derili \gGamma M {\gA\cap \gB}}$\\
      Let $\rho\in\SemTyP \gGamma\sigma$ and let $\v v\in\SemTyN {\gB}\sigma\subseteq\SemTyN {\gA\cap\gB}\sigma$.
      By induction hypothesis we have $\SemTe M{\rho}\in\SemTyP {\gA\cap\gB}\sigma$ so $\orth{\SemTe M{\rho}}{v}$. Hence $\SemTe M{\rho}\in\SemTyP {\gB}\sigma$.\bigskip

    \item {{$\infer[\alpha\notin\TV\gGamma]{\Derili \gGamma M {\forall \alpha \gA}}{\Derili \gGamma M {\gA}}$}}\\
      Let $\rho\in\SemTyP \gGamma\sigma$ and let $\v v\in\SemTyN {\forall\alpha\gA}\sigma=\bigcup_{Y\subseteq\ValDomL}\SemTyN {\gA}{\sigma,\alpha\mapsto Y}$.
      By induction hypothesis we have $\SemTe M{\rho}\in\SemTyP {\gA}{\sigma,\alpha\mapsto Y'}$ for all $Y'\subseteq\ValDomL$, so in any case $\orth{\SemTe M{\rho}}{v}$. Hence $\SemTe M{\rho}\in\SemTyP {\forall\alpha\gA}\sigma$.\bigskip

    \item $\infer{\Derili \gGamma M {\subst \gA \alpha \gB}}{\Derili \gGamma M {\forall \alpha \gA}}$\\
      Let $\rho\in\SemTyP \gGamma\sigma$ and let $\v v\in\SemTyN {\subst \gA\alpha\gB}\sigma=\SemTyN {\gA}{\sigma,\alpha\mapsto\SemTyN {\gB}{\sigma}}\subseteq\SemTyN {\forall\alpha\gA}\sigma$.
      By induction hypothesis we have $\SemTe M{\rho}\in\SemTyP {\forall\alpha\gA}\sigma$ so $\orth{\SemTe M{\rho}}{v}$. Hence $\SemTe M{\rho}\in\SemTyP {\subst \gA\alpha\gB}\sigma$.\bigskip

    \item     $\infer{\Derili \gGamma {\Subst M x N} \gB}{\Derili \gGamma N \gA \quad \Derili {\gGamma, x \col \gA} M \gB}$\\
      Let $\rho\in\SemTyP \gGamma\sigma$ and let $\v v\in\SemTyN \gB\sigma$. By induction hypothesis we have $\SemTe N{\rho}\in\SemTyP {\gA}\sigma$; therefore $\SemTe N{\rho}$ is a value and $(\rho,x\mapsto \SemTe N{\rho})\in\SemTyP {\gGamma, x\col \gA}\sigma$. By induction hypothesis again we have $\SemTe M{\rho,x\mapsto \SemTe N{\rho}}\in\SemTyP {\gB}\sigma$. So $\orth{\SemTe M{\rho,x\mapsto \SemTe N{\rho}}}{\v v}$ and by axiom (A4) we have  $\orth{\SemTe {\Subst M x N}{\rho}}{\v v}$.\bigskip

    \item $\infer{\Derili {\gGamma, x \col \gA} {\weakening x M} \gB} {\Derili \gGamma M \gB \quad x \notin dom(\gGamma)}$\\
      Let $(\rho,x\mapsto u)\in\SemTyP {\gGamma, x \col \gA}\sigma$ and let $\v v\in\SemTyN \gB\sigma$. We have $\rho\in\SemTyP {\gGamma}\sigma$ and by induction hypothesis we have $\SemTe M{\rho}\in\SemTyP {\gB}\sigma$. So $\orth{\SemTe M{\rho}}{\v v}$ and by axiom (A5) we have $\orth{\SemTe {\weakening x M}{\rho,x\mapsto u}}{\v v}$.\bigskip

    \item $\infer{\Derili {\gGamma, x \col \gA} {\contraction x y z M} \gB}
      {\Derili {\gGamma, y \col \gA, z \col \gA} M \gB}$\\
      Let $(\rho,x\mapsto u)\in\SemTyP {\gGamma, x \col \gA}\sigma$ and let $\v v\in\SemTyN \gB\sigma$. We have $(\rho,y\mapsto u,z\mapsto u)\in\SemTyP {\gGamma}\sigma$ and by induction hypothesis we have $\SemTe M{\rho,y\mapsto u,z\mapsto u}\in\SemTyP {\gB}\sigma$. So $\orth{\SemTe M{\rho,y\mapsto u,z\mapsto u}}{\v v}$ and by axiom (A6) we have $\orth{\SemTe {\weakening x M}{\rho,x\mapsto u}}{\v v}$.\qedhere
    \end{iteMize}
  \end{proof}
\end{toappendix}

\subsubsection{The special case of applicative structures}
\label{sec:appstructures}

In the next section we present instances of orthogonality models.  They will have in common that $\ValDomE$ is an applicative structure, as we have seen with I-filters. This motivates the following notion:
\begin{definition}[Applicative orthogonality model]\strut\\
  An {\em applicative orthogonality model} is a 4-tuple
  $(\ValDomE,\ValDom,\ap,\SemTe {\_ }{\_} )$ where:
  \begin{iteMize}{$\bullet$}
  \item $\ValDomE$ is a set, $\ValDom$ is a subset of $\ValDomE$, $\ap$ is a
    (total) function from $\ValDomE\times\ValDomE$ to $\ValDomE$, and
    $\SemTe {\_ }{\_}$ is a function (parametrised by an environment) from
    $\lambda$-terms to the support.
  \item $(\ValDomE,\ValDom,\orth{}{},\SemTe {\_ }{\_} )$ is an orthogonality
    model,\\
    where the relation $\orth u {\vec v}$ is defined as $(u\ap \vec{v})\in \ValDom$
    \\
    (writing $u\ap\vec v$ for $(\ldots(u\ap v_1)\ap\ldots \ap v_n)$ if
    $\vec v=\cons{v_1}{\ldots\cons{v_n}\el}$).
  \end{iteMize}
\end{definition}

\begin{remark} 
Axioms (A1) and (A2) are ensured provided that $\SemTe {M\ N} \rho=\SemTe {M}\rho\ap \SemTe {N}\rho$ and $\SemTe {x}\rho=\rho(x)$.  These conditions can hold by definition (as in term models, \cf the next Section), or can be proved (as in Theorem~\ref{th:SemTe}, which also proves axioms (A3)-(A6)).
\end{remark}

\subsubsection{Instances of orthogonality models}
\label{sec:instances}

We now give instances of (applicative) orthogonality models with well-chosen sets of values, applications, and interpretations of terms, with the aim of deriving the strong normalisation of a term $M$ of type $\gA$ in $\SysS$ from the property $\SemTe M{}\in\SemTyP {\gA}{}$.

The first two instances are term models: terms are interpreted as pure $\lambda$-terms (see Definition~\ref{def:termint}), so the support of those term models is the set of all pure $\l$-terms seen as an applicative structure (using term application: ${M_1}\aptermmodel {M_2}\eqdef M_1\ M_2$).

\begin{definition}[Interpretation of terms in a term model]\strut
  \label{def:termint}
  \[\begin{array}{lll}
    \SemTe[\termmodel] x \rho&\eqdef \rho(x)\\
    \SemTe[\termmodel] {M_1\ M_2} \rho&\eqdef \SemTe[\termmodel] {M_1} \rho\
    \SemTe[\termmodel] {M_2} \rho\\
    \SemTe[\termmodel] {\lambda x.M} \rho&\eqdef \lambda x.\SemTe[\termmodel] {M} {\rho,x\mapsto x}\\
    \SemTe[\termmodel] {\Subst M x N} \rho&\eqdef {\SemTe[\termmodel] {M} {\rho,x\mapsto \SemTe[\termmodel] {N}\rho}}\\
    \SemTe[\termmodel] {\weakening x M} \rho&\eqdef \SemTe[\termmodel] {M} {\rho}\\
    \SemTe[\termmodel] {\contraction x y z M} \rho&\eqdef \SemTe[\termmodel] {M} {\rho,y\mapsto \rho(x),z\mapsto \rho(x)}
  \end{array}
  \]
\end{definition}

\begin{remark}
  $\SemTe[\termmodel]{\subst M x N}\rho=\SemTe[\termmodel] M{\rho,x\mapsto
    \SemTe[\termmodel]{N}\rho}$ 
\end{remark}

In the first instance, values are those pure $\lambda$-terms that are strongly normalising (for $\beta$).  If we concentrate on the interpretation (as themselves) of the pure $\l$-terms that are typed in $\SysS$, we have an orthogonality model that rephrases standard proofs of strong normalisation by orthogonality or reducibility candidates~\cite{Parigot97,LM:APAL07}.

In the second instance, values are those pure $\lambda$-terms that can be typed with intersection types, for instance in system~\InterLambdaSub.

\begin{theorem}
  The structures
  \begin{iteMize}{$\bullet$}
  \item $\MSN\eqdef(\LambdaTerms,\SNLambda{},\aptermmodel,\SemTe[\termmodel]\_\_)$
  \item $\Mcap\eqdef(\LambdaTerms,\TypableTerms,\aptermmodel,\SemTe[\termmodel]\_\_)$
    \hfill (where $\TypableTerms$ is the set of pure $\l$-terms typable in
    \InterLambdaSub)
  \end{iteMize}
  are applicative orthogonality models.
\end{theorem}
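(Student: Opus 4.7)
The plan is to check that each of $\MSN$ and $\Mcap$ satisfies the conditions of an applicative orthogonality model. In both cases, $\ValDomE$ is the set $\LambdaTerms$ of pure $\lambda$-terms, $\aptermmodel$ is ordinary term application, and the interpretation is the one from Definition~\ref{def:termint}; the only difference between the two instances is the set $\ValDom$ of values, which is $\SNLambda$ for $\MSN$ and $\TypableTerms$ for $\Mcap$. So the underlying applicative structure is well-defined in both cases, and it suffices to verify axioms (A1)--(A6).

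Since $\SemTe[\termmodel]{x}{\rho} = \rho(x)$ and $\SemTe[\termmodel]{M_1 M_2}{\rho} = \SemTe[\termmodel]{M_1}{\rho} \aptermmodel \SemTe[\termmodel]{M_2}{\rho}$ hold by definition, the remark at the end of Section~\ref{sec:appstructures} immediately yields axioms (A1) and (A2). Axioms (A4), (A5) and (A6) follow by unfolding Definition~\ref{def:termint}: (A4) holds because $\SemTe[\termmodel]{\Subst{M_1}{x}{M_2}}{\rho} = \SemTe[\termmodel]{M_1}{\rho, x \mapsto \SemTe[\termmodel]{M_2}{\rho}}$; and (A5) and (A6) follow from the corresponding definitional equations together with the standard fact that $\SemTe[\termmodel]{M}{\rho}$ only depends on the values $\rho$ assigns to the free variables of $M$ (so when $x \notin fv(M)$, one may freely add or ignore a binding for $x$ in the environment, and for (A6) one may substitute $u$ for $y$ and $z$ directly in the environment).

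The real content is axiom (A3). A routine substitution lemma for the term interpretation yields the identity $\SemTe[\termmodel]{M}{\rho, x \mapsto u} = \subst{\SemTe[\termmodel]{M}{\rho, x \mapsto x}}{x}{u}$. Writing $Q$ for $\SemTe[\termmodel]{M}{\rho, x \mapsto x}$, axiom (A3) then reduces to showing that, whenever $u \in \ValDom$, $v_1, \ldots, v_n \in \ValDom$, and $\subst{Q}{x}{u}\ v_1 \cdots v_n \in \ValDom$, we also have $(\lambda x.Q)\ u\ v_1 \cdots v_n \in \ValDom$. For $\MSN$ this is the classical head-expansion lemma for strong normalisation, proved by a standard lexicographic induction combining the maximal reduction lengths of $\subst{Q}{x}{u}\ v_1 \cdots v_n$ and $u$ with the size of $Q$. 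For $\Mcap$ this is exactly the subject expansion statement for a $\beta$-redex in \InterLambdaSub, which will be established in Section~\ref{sec:completeness}.

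The main obstacle is therefore axiom (A3): for $\MSN$ it rests on the usual head-expansion lemma underpinning reducibility-style proofs of strong normalisation, while for $\Mcap$ it depends on the subject expansion theorem which is the central technical result of Section~\ref{sec:completeness}. All the remaining verifications are either immediate from the definitions or straightforward bookkeeping on the term interpretation.
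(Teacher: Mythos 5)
Your proposal is correct and follows essentially the same route as the paper: the paper also disposes of (A1), (A2), (A4)--(A6) as immediate consequences of Definition~\ref{def:termint}, and isolates (A3) as the one substantive axiom, reducing it to exactly the expansion property you identify (the classical head-expansion lemma for $\SNLambda{}$, and an instance of Subject Expansion for $\TypableTerms$), stated in the paper as Lemma~\ref{lem:expansion}.
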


Indeed, the applicative structures $\MSN$ and $\Mcap$ already satisfy axioms (A1), (A2), and (A4) to (A6), because of Definition~\ref{def:termint}.
Axiom (A3) holds in $\MSN$ and $\Mcap$ because of their respective expansion properties:
\begin{lemma}[Expansion]\label{lem:expansion}\strut
  \begin{enumerate}[\em(1)]
  \item
    If $\subst M x P\ \vec N \in \SNLambda{}$ and $P\in \SNLambda{}$, then $(\lambda x.M)\ P\ \vec N \in \SNLambda{}$.
  \item
    If $\subst M x P\ \vec N \in \TypableTerms$ and $P\in \TypableTerms$, then $(\lambda x.M)\ P\ \vec N \in \TypableTerms$.
  \end{enumerate}
\end{lemma}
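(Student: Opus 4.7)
Both parts can be handled in parallel by distinct techniques: (1) exploits the well-foundedness of reduction on strongly normalising terms, while (2) relies on inverting the substitution lemma for typing.

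For (1), my plan is to proceed by well-founded induction on the natural number $\mu \eqdef \nu(\subst M x P\ \vec N) + \nu(P) + \sum_i \nu(N_i)$, where $\nu(t)$ is the length of the longest $\beta$-reduction issuing from $t$; this is well defined since $\subst M x P\ \vec N \in \SNLambda$ and $P \in \SNLambda$ by hypothesis, and each $N_i \in \SNLambda$ as a subterm of the former. To establish $(\lambda x.M)\ P\ \vec N \in \SNLambda$ it suffices to show every one-step $\beta$-reduct lies in $\SNLambda$: the head reduct $\subst M x P\ \vec N$ is in $\SNLambda$ by hypothesis, and for each internal reduct $(\lambda x.M')\ P\ \vec N$, $(\lambda x.M)\ P'\ \vec N$, or $(\lambda x.M)\ P\ \vec{N'}$, the corresponding substituted term is a (possibly trivial) reduct of $\subst M x P\ \vec N$ and at least one summand of $\mu$ strictly decreases, so the induction hypothesis applies.

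For (2), my plan is to first establish an \emph{inverse substitution lemma}: whenever $\Derilis\Gamma{\subst M x P}{A}$, there exist $\Gamma_0$ and a $U$-type $U$ such that $\Derilis{\Gamma_0, x\col U}{M}{A}$ and either $U = \tempty$ with $\Gamma \approx \Gamma_0$, or there is $\Gamma_1$ with $\Derilis{\Gamma_1}{P}{U}$ and $\Gamma \approx \Gamma_0 \cap \Gamma_1$. This is proved by induction on $A$ (splitting intersections via Lemma~\ref{lem:basicprop}) and on the typing derivation, following the structure of $M$; the application case $M = M_1\ M_2$ is the delicate one, as both subterms may consume copies of $P$, so their typings must be inverted independently and the resulting types $U_a, U_b$ for $x$ recombined as $U_a \cap U_b$ by a fresh use of (Inter) on the typings of $P$. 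Given this, the typing of $(\lambda x.M)\ P\ \vec N$ is then built by decomposing the hypothesis into a typing of $\subst M x P$ at $A_1 \arr \cdots \arr A_n \arr F$ with companion typings $N_i\col A_i$, inverting the former, using $P \in \TypableTerms$ to supply a typing of $P$ when the degenerate case $U = \tempty$ arises, and finally combining via (Abs) and (App).

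The principal obstacle is to formulate the inverse substitution lemma correctly: non-idempotency forces each occurrence of $x$ in $M$ to carry its own $F$-type, so the context split must precisely track how many copies of the typing of $P$ are consumed in the typing of $\subst M x P$. This is essentially a local instance of subject expansion for a head $\beta$-step, whose general form is established independently in Section~\ref{sec:completeness}.
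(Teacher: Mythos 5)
Your proof is correct and follows essentially the route the paper intends: part (1) is the standard well-founded induction on reduction lengths that the paper attributes to the literature, and part (2) unfolds the paper's appeal to Subject Expansion into its head-redex instance, with $P\in\TypableTerms$ covering the erasing case $U=\tempty$. Note that the ``inverse substitution lemma'' you propose to establish is precisely the Anti-substitution lemma already proved in Section~\ref{lambda-compl} (your statement, which records $x\col U$ in the context, is the usable form of it), so you may invoke it rather than reprove it.
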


Admittedly, once \InterLambdaSub\ has been proved to characterise $\SNLambda{}$ (Theorems~\ref{th:soundness} and~\ref{th:lambda-compl}), the two points are identical and so are the two models $\MSN$ and $\Mcap$.  But involving the bridge of this characterisation goes much beyond what is needed for either point: point 1 is a known fact of the literature; point 2 is a simple instance of Subject Expansion (Theorem~\ref{th:SubjectExpansion} in the next section) not requiring Subject Reduction (Theorem~\ref{th:SubjectReduction}) while both are involved at some point in the more advanced property $\SNLambda{}=\TypableTerms$.  In brief, as we are interested in comparing proof {\em techniques} for the strong normalisation of System~$\SysS$, the question of which properties are used and in which order matters.

Now using the Adequacy Lemma (Lemma~\ref{lem:Adequacy}), we finally get: 
\begin{corollary}
  If $\DeriliS \gGamma M \gA$, then:
  \begin{enumerate}[$\MSN$]
  \item[$\MSN$]
    For all valuations $\sigma$ and all mappings $\rho\in\SemTyP\gGamma\sigma$ we
    have $\SemTe[\termmodel] M\rho\in \SNLambda{}$.
  \item[$\Mcap$]
    For all valuations $\sigma$ and all mappings $\rho\in\SemTyP\gGamma\sigma$\\
    there exist $\Gamma$ and $A$ such that
    $\Derilis \Gamma {\SemTe[\termmodel] M\rho} A$, and therefore $\SemTe[\termmodel] M\rho\in\SNLambda$.
  \end{enumerate}
\end{corollary}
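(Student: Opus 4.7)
The plan is to apply the Adequacy Lemma (Lemma~\ref{lem:Adequacy}) to each of the two applicative orthogonality models. The theorem preceding the corollary establishes that $\MSN$ and $\Mcap$ satisfy the axioms of Definition~\ref{prop}: axioms (A1), (A2), (A4)--(A6) follow directly from the clauses of Definition~\ref{def:termint} of $\SemTe[\termmodel]\_\_$, while axiom (A3) is precisely what the corresponding expansion property from Lemma~\ref{lem:expansion} provides in each model. Hence Adequacy applies, and for any valuation $\sigma$ and any $\rho\in\SemTyP\gGamma\sigma$ it delivers $\SemTe[\termmodel] M \rho \in \SemTyP \gA \sigma$ in each of the two models.

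The key structural observation is that, by the very definition $\SemTyP \gA \sigma \eqdef \uniorth{\SemTyN \gA \sigma}$ and the definition of the orthogonal $\uniorth{\cdot}$, we have $\SemTyP \gA \sigma \subseteq \ValDom$ unconditionally, so $\SemTe[\termmodel] M \rho$ is necessarily a value of the model in which we are working. For $\MSN$, values are by definition the strongly normalising pure $\l$-terms, which yields the first conclusion $\SemTe[\termmodel] M \rho \in \SNLambda$ immediately. For $\Mcap$, values are the pure $\l$-terms typable in~\InterLambdaSub, which produces the typing derivation $\Derilis \Gamma {\SemTe[\termmodel] M \rho} A$ asserted in the second conclusion; the clause ``therefore $\SemTe[\termmodel] M \rho \in \SNLambda$'' then follows by one further appeal to Soundness for the pure $\l$-calculus (Theorem~\ref{th:soundness}).

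No step presents a real obstacle, as the substantive content has already been packaged into the Adequacy Lemma and the verification that $\MSN$ and $\Mcap$ are applicative orthogonality models; the corollary is essentially the final assembly step. The only point worth being deliberate about is that for $\Mcap$ we route the strong normalisation conclusion through Soundness rather than invoking the stronger equality $\SNLambda=\TypableTerms$ alluded to in the discussion preceding the corollary, in keeping with the paper's aim of tracking exactly which previously established results are consumed by each of the two orthogonality proof techniques being compared.
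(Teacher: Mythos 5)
Your proposal is correct and follows exactly the paper's route: the preceding theorem certifies that $\MSN$ and $\Mcap$ are applicative orthogonality models, the Adequacy Lemma then places $\SemTe[\termmodel] M\rho$ in $\SemTyP\gA\sigma\subseteq\ValDom$, and reading off what a value is in each model gives the two conclusions, with Theorem~\ref{th:soundness} supplying the final step for $\Mcap$ just as the paper does. Nothing to add.
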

\noindent For $\Mcap$ we conclude of course by using Theorem~\ref{th:soundness}.

Now notice that, if $M$ is a pure $\lambda$-term, this entails $M\in \SNLambda{}$ by choosing, in both models, $\sigma$ to map every type variable to the empty set, and $\rho$ to map every term variable to itself. Indeed we have:
\begin{remark}
    In both structures $\MSN$ and $\Mcap$ we can check that:\\
    For all lists $\vec N$ of values, and any term variable $x$, $\orth x {\vec N}$.\\
    Hence, for all valuations $\sigma$ and all types $\gA$, $x\in\SemTyP \gA
    \sigma$.
\end{remark}

However if $M$ is not a pure $\lambda$-term, it is not obvious to derive an interesting normalisation result for $M$, given that the explicit substitutions / weakenings / contractions in $M$ have disappeared in $\SemTe[\termmodel] M\rho$ (and in the case of $\MSN$, relating $\SNLambda$ to $\SNLambdaS$ or $\SNLambdaLxr$ is another task to do).

An idea would be to tweak the interpretation of terms so that every term is interpreted as itself, even if it has explicit substitutions / weakenings / contractions:
  \[\begin{array}{lll}
    \SemTe[\termmodel] {\Subst M x N} \rho&\eqdef \Subst {\SemTe[\termmodel] {M} {\rho,x\mapsto x}} x {\SemTe[\termmodel] {N}\rho}\\
    \SemTe[\termmodel] {\weakening x M} \rho&\eqdef \weakening x {\SemTe[\termmodel] {M} {\rho}}\\
    \SemTe[\termmodel] {\contraction x y z M} \rho&\eqdef \contraction x y z {\SemTe[\termmodel] {M} {\rho,y\mapsto y,z\mapsto z}}
  \end{array}
  \]
But proving axioms (A1) to (A6) then becomes much more difficult.
This is however the direction taken by~\cite{Kesner09lmcs} for the explicit substitution calculus $\lambda_{\textsf{ex}}$, where the methodological cornerstone is a property called {\bf IE}, which is nothing else but axiom (A4) in $\MSN$. For $\Mcap$, it might be possible to prove the axioms by inspecting typing derivations and/or using Subject Expansion (Theorem~\ref{th:SubjectExpansion} in the next section).

A quicker way is to depart from term models and turn the filter model \Mfiltersi\ from Section~\ref{sec:ifiltermodel} into an orthogonality model: a term is interpreted as the filter of the intersection types that it can be assigned (\eg in \InterLambdaSub, see Definition~\ref{def:intterms}), and orthogonality is defined in terms of filters being non-empty.

Strong Normalisation will then follow, in a very uniform way for the three calculi $\lambda$, $\LambdaS$, and $\LambdaLxr$, from the fact that terms typable with intersection types are themselves strongly normalising (Theorems~\ref{th:soundness},~\ref{th:LSsoundness},~\ref{th:Llxrsoundness} for \InterLambdaSub).
\begin{theorem}
  The structure $\Mfilters\eqdef(\ValDomE,\ValDom,\ap,\SemTe\_\_)$ (with the four components as defined in Section~\ref{sec:filters}) is an applicative orthogonality model.
\end{theorem}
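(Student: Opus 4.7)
The plan is to verify the six axioms (A1)--(A6) of Definition~\ref{prop} for the applicative structure $(\ValDomE,\ValDom,\ap,\SemTe\_\_)$, where orthogonality is $\orth u{\vec v} \iff (u \ap \vec v) \in \ValDom$. I would start by invoking the remark closing Section~\ref{sec:appstructures}: axioms (A1) and (A2) reduce to the equations $\SemTe x\rho = \rho(x)$ and $\SemTe{M\ N}\rho = \SemTe M\rho \ap \SemTe N\rho$, which are precisely clauses~(1) and~(2) of Theorem~\ref{th:SemTe}, so these come for free.

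For (A3), clause~(3) of Theorem~\ref{th:SemTe} gives $\SemTe{\lambda x.M}\rho \ap u = \SemTe M{\rho, x\mapsto u}$, whose side condition $u \neq \bot$ is guaranteed by the hypothesis that $u$ is a value. Unfolding orthogonality, this turns $\orth{\SemTe M{\rho, x\mapsto u}}{\vec v}$ directly into $\orth{\SemTe{\lambda x.M}\rho}{\cons u{\vec v}}$. Axiom (A4) is handled identically using clause~(4), the side condition $\SemTe{M_2}\rho \neq \bot$ being discharged by the assumption that $\SemTe{M_2}\rho$ is a value.

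For (A5), the hypothesis that $u$ is a value guarantees $(\rho, x\mapsto u)(x) = u \neq \bot$, so clause~(5) applies and rewrites $\SemTe{\weakening x M}{\rho, x\mapsto u}$ as $\SemTe M{\rho, x\mapsto u}$; since $x \notin fv(M)$, this further simplifies to $\SemTe M\rho$, yielding the axiom. For (A6), clause~(6) gives $\SemTe{\contraction x y z M}{\rho, x\mapsto u} = \SemTe M{\rho, x\mapsto u, y\mapsto u, z\mapsto u}$ (since $(\rho, x\mapsto u)(x) = u$), and the condition $x \notin fv(M)$ lets us drop the $x\mapsto u$ binding to recover the desired $\SemTe M{\rho, y\mapsto u, z\mapsto u}$.

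I do not expect any real obstacle: Theorem~\ref{th:SemTe} has already done the syntactic heavy lifting by exposing the filter interpretation as an essentially applicative and substitutive structure, so what remains is just to pair each axiom with its matching clause while discharging the $\neq \bot$ side conditions using the ``value'' hypotheses of the axioms.
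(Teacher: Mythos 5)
Your proposal is correct and follows exactly the paper's route: the paper's proof is the one-liner that (A1)--(A6) are immediate consequences of Theorem~\ref{th:SemTe}, and you have simply spelled out the clause-by-clause matching, including the correct discharge of the $\neq\bot$ side conditions via the value hypotheses and the (implicitly needed, and justified by Lemma~\ref{lem:basicprop}.2) fact that $\SemTe M\rho$ depends only on $\rho$ restricted to $fv(M)$.
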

\begin{proof}
  Indeed, $\Mfilters$ satisfies axioms (A1) to (A6) as immediate consequences of Theorem~\ref{th:SemTe}.
\end{proof}

\begin{remark}
    For $\Mfilters$ we now have:
    For all list of values $\vec v$, $\orth \top {\vec v}$.\\ Hence, for all
    valuations $\sigma$ and all types $\gA$, $\top\in\SemTyP \gA \sigma$.
\end{remark}

Now using the Adequacy Lemma (Lemma~\ref{lem:Adequacy}), we finally get: 
\begin{corollary}
  If $\DeriliS \gGamma M \gA$, then:\\
    For all valuations $\sigma$ and all mappings $\rho\in\SemTyP\gGamma\sigma$ we
    have $\SemTe M\rho\neq \bottom$.\\
    Hence, there exist $\Gamma$ and $A$ such that $\Derilis \Gamma M A$.\\
    Finally, $M\in \SNLambda{}$ (\resp $M\in \SNLambdaS{}$, $M\in \SNLambdaLxr{}$, according to the calculus considered).
\end{corollary}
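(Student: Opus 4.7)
The plan is to chain together the Adequacy Lemma for $\Mfilters$, the non\-emptiness observation about values, and the previously established soundness of system $\InterLambdaSub$ for each calculus. First, the Adequacy Lemma (Lemma~\ref{lem:Adequacy}) applied to the hypothesis $\DeriliS \gGamma M \gA$ immediately gives $\SemTe M \rho \in \SemTyP \gA \sigma$ for every valuation $\sigma$ and every $\rho \in \SemTyP \gGamma \sigma$. Since $\SemTyP \gA \sigma = \uniorth{\SemTyN \gA \sigma} \subseteq \ValDom$ by construction, and since $\ValDom$ consists, by definition, only of non-empty I-filters, the first conclusion $\SemTe M \rho \neq \bottom$ follows with no further work.

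For the second conclusion I must instantiate the universal quantifier in order to extract an actual typing derivation. I will pick any valuation $\sigma$ and let $\rho$ send every term variable to the top filter $\top$. The preceding Remark observes that $\top \in \SemTyP \gB \sigma$ for every type $\gB$ and every $\sigma$; this reduces to the easy computation that $\top \ap v = \top$ whenever $v$ is a value, from which $\orth \top {\vec v}$ follows by iteration on $\vec v$. Consequently $\rho \in \SemTyP \gGamma \sigma$, and the first step yields $\SemTe M \rho \neq \bottom$. Unfolding Definition~\ref{def:intterms}, this produces a context $\Gamma$ compatible with $\rho$ and an $A$-type $A$ such that $\Derilis \Gamma M A$.

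For the last conclusion, the appropriate soundness theorem does the remaining work: Theorem~\ref{th:soundness} for the pure $\lambda$-calculus, Theorem~\ref{th:LSsoundness} for $\LambdaS$, and Theorem~\ref{th:Llxrsoundness} for $\LambdaLxr$. Each, applied to the derivation $\Derilis \Gamma M A$ obtained above, yields strong normalisation in its respective calculus and completes the proof.

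There is essentially no technical obstacle here: the substantial content sits in the Adequacy Lemma (already proved in full generality for any orthogonality model, of which $\Mfilters$ is an instance) and in the three soundness theorems (already established via the quantitative Subject Reduction results). The corollary is then a matter of assembling these pieces and supplying one explicit environment, with $\top$ playing the role of a universal witness allowing every term variable to be interpreted consistently regardless of the type assigned to it in $\gGamma$.
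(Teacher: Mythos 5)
Your proof is correct and follows essentially the same route as the paper: Adequacy plus $\bot\notin\ValDom$ gives the first claim, the environment sending every term variable to $\top$ (justified by the preceding Remark) witnesses the second, and the three soundness theorems finish. The only cosmetic difference is that the paper fixes the valuation $\sigma$ mapping every type variable to the empty set, whereas you observe that any valuation works — which the Remark indeed licenses.
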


\begin{proof}\strut
    The first statement holds because $\bottom\notin\ValDom$ and
    $\SemTyP \gA\sigma\subseteq\ValDom$.
    To prove the second, we need to show that there exist such a $\sigma$ and such
    a $\rho$; take $\sigma$ to map every type variable to the empty set and take
    $\rho$ to map every term variable to $\top$.
    The final result comes from Theorem~\ref{th:soundness} (\resp Theorem~\ref{th:LSsoundness}, Theorem~\ref{th:Llxrsoundness}).
\end{proof}

\section{Completeness}

\label{sec:completeness}

In Section~\ref{sec:soundness} we have shown that for the three calculi, if a term is typable with intersection types, then it is strongly normalising.  We have briefly mentioned that the converse is true.  In this section we give a proof for the three calculi.  Moreover, the typing trees obtained by these completeness theorems satisfy some interesting properties that will be used in the next sections (for the complexity results): they are \emph{optimal} and \emph{principal}.

The proof of completeness for $\LambdaS$ is simpler than the one for the pure $\lambda$-calculus. This is why, in this section, that we will treat the $\LambdaS$ calculus first.

\subsection{Two properties of typing trees: Optimality and Principality}

In the next sub-section we will notice that the typing trees produced by the
proof of completeness all satisfy a particular properties. 
In this section we define these properties.
The first of these is {\em optimality}.

This property involves the following notions:
\begin{definition}[Subsumption and forgotten types]\strut
  \begin{iteMize}{$\bullet$}
  \item If $\pi$ is a typing tree, we say that $\pi$ {\em does not use subsumption} if
    it features an occurrence of the abstraction rule where the condition
    $A \subseteq U$ is either $A \approx U$ or $A \subseteq \tempty$.
  \item We say that a type $A$ is {\em forgotten} in an instance of rule
$(\mbox{Abs})$ or rule $(\mbox{Subst})$ if in the side-condition of the rule we
have $U=\omega$.
  \item If a typing tree $\pi$ uses no subsumption, we collect the list of its
forgotten types, written $\forg\pi$, by a standard prefix and depth-first search
of the typing tree $\pi$.
  \end{iteMize}

\end{definition}

\noindent The optimal property also involves refining the grammar of types: 
\begin{definition}[Refined intersection types]
  $A^+$, $A^-$, $A^{--}$ and $U^{--}$ are defined by the following grammar:
  $$\begin{array}{ll}
    A^+, B^+ &\recdef \tau \sep A^{--} \rightarrow B^+\\
    A^{--}, B^{--} &\recdef A^- \sep A^{--} \cap B^{--}\\
    A^-, B^- &\recdef \tau \sep A^+ \rightarrow B^- \\
		U^{--}, V^{--} & \recdef A^{--} \sep \tempty
  \end{array}$$
We say that $\Gamma$ is of the form $\Gamma^{--}$ if for all $x$, $\Gamma(x)$ is
of the form $U^{--}$.
  The {\em degree} of a type of the form $A^+$ is the number of arrows in
negative positions:
  $$\begin{array}{|ll|}
    \hline
    \degr[+]{\tau}&\eqdef 0\Astrut \\
    \degr[+]{A^{--} \rightarrow B^+}&\eqdef \degr[-]{A^{--}}+\degr[+]{B^+}+1
    \Bstrut\\
    \hline
    \degr[-]{A^{--} \cap B^{--}}&\eqdef \degr[-]{A^{--}}+\degr[-]{B^{--}}
    \Cstrut\\
    \hline
    \degr[-]{\tau}&\eqdef 0\Astrut \\
    \degr[-]{A^+ \rightarrow B^-}&\eqdef \degr[+]{A^+}+\degr[-]{B^-}\Bstrut \\
    \hline
  \end{array}
  $$
\end{definition}

\begin{remark}
  If $A^+\approx B^+$, then $\degr[+]A=\degr[+]B$, and if $A^{--}\approx B^{--}$, then $\degr[-]{A^{--}}=\degr[-]{B^{--}}$.
\end{remark}

We can finally define the optimal property:
\begin{definition}[Optimal typing]
  A typing tree $\pi$ concluding $\Deri \Gamma M A$ is optimal if 
  \begin{iteMize}{$\bullet$}
  \item There is no subsumption in $\pi$
  \item $A$ is of the form $A^+$
  \item For every $(x : B) \in \Gamma$, $B$ is of the form $B^{--}$
  \item For every forgotten type $B$ in $\pi$, $B$ is of the form $B^+$.
  \end{iteMize}
  We write $\Deriliopt \Gamma M {A^+}$ if there exists such $\pi$.

  The {\em degree} of such a typing tree is defined as
  $$\degr\pi=\degr[+]{A^+}+\Sigma_{x~\col~ {B^{--}}\in\Gamma}\degr[-]{B^{--}}+
\Sigma_{C^+\in\forg\pi}\degr[+]{C^+}$$
\end{definition}

In this definition, $A^+$ is an output type, $A^-$ is a basic input type (\ie
for a variable to be used once), and $A^{--}$ is the type of a variable that
can be used several times. The intuition behind this asymmetric grammar can be
found in linear logic:

\begin{remark}
	Intersection in a typing tree means duplication of resource.  So intersections can be compared to exponentials in linear logic~\cite{girard-ll}.  Having an optimal typing tree means that duplications are not needed in certain parts of the optimal typing tree.  In the same way, in linear logic, we do not need to have exponentials everywhere: A simple type $T$ can be translated as a type $T^*$ of linear logic as follows:
  $$\begin{array}{|ll|}
    \hline
    \tau^* &\eqdef \tau\Astrut\\
    (T \rightarrow S)^* &\eqdef ! T^* \multimap S^*\Bstrut\\
    \hline
  \end{array}
  $$
We can find a more refined translation; it can also be translated as $T^+$ and $T^-$ as follow :
  $$\begin{array}{|ll@{\quad}|@{\quad}ll|}
    \hline
    \tau^+ &\eqdef \tau&\tau^- &\eqdef \tau\Astrut\\
    (T \rightarrow S)^+ &\eqdef ! T^- \multimap S^+&
    (T \rightarrow S)^- &\eqdef T^+ \multimap S^-\Bstrut\\
    \hline
  \end{array}$$
  And we have in linear logic : $T^- \vdash T^*$ and $T^* \vdash T^+$. 
So the translation $T^+$ is sound and uses less exponentials that the usual
and naive translation. In some way, it is more ``optimal''.
The main drawback is that we cannot compose proofs of optimal translations
easily.

\end{remark}

We now introduce the second of these properties: the notion of principal typing.

%\begin{definition}[AC of the intersection rule]
%Let $\acinter$ be the smallest congruence on typing trees containing the two equations
%$$
%\tiny
%\begin{array}{c}
%\infer{\Deri {\Gamma \cap \Delta
%} M   {A \cap B}}{\Deri \Gamma  M  A \quad
%\Deri \Delta M  B}
%\quad\acinter\quad 
%\infer{\Deri {\Gamma \cap \Delta
%} M   {A \cap B}}{\Deri \Delta M  B\quad\Deri \Gamma  M  A}\\\\
%\infer{\Deri {\Gamma_1 \cap \Gamma_2 \cap \Gamma_3} M   {A_1 \cap A_2\cap A_3}}
%{\infer{\Deri {\Gamma_1\cap \Gamma_2}  M  {A_1\cap A_2}}{\Deri {\Gamma_1}M{A_1}\quad \Deri {\Gamma_2}M{A_2}} \quad
%\Deri {\Gamma_3} M  {A_3}}
%\quad\acinter\quad 
%\infer{\Deri {\Gamma_1 \cap \Gamma_2 \cap \Gamma_3} M   {A_1 \cap A_2\cap A_3}}
%{\Deri {\Gamma_1}  M  {A_1}\quad
%\infer{\Deri {\Gamma_2\cap\Gamma_3} M  {A_2\cap A_3}}{\Deri {\Gamma_2}M{A_2}\quad \Deri {\Gamma_3}M{A_3}}} 
%\end{array}
%$$
%\end{definition}

\begin{definition}[Principal typing]\strut

A typing tree $\pi$ of $M$ is \emph{principal}\footnote{
  In the literature, \emph{principality} is often used (see \eg\cite{Wells02theessence}) for a typing judgement that characterises, for a given term, all of its derivable typing judgements:
they are all obtained from the principal one by instantiation of its type variables.
This notion was our guiding intuition (and we therefore kept the terminology) and does relate to a notion of minimality in the size of types (a type is smaller in size than its instances). But for our purpose, we needed a notion that also says something about typing trees rather than the typing judgements that they derive, so optimality of typing trees led to a notion of principality on trees rather than judgements. Saying that all (optimal) typing trees can be obtained by instantiation of the principal one would require quotienting the trees to identify irrelevant differences (integrating AC-equivalence of types and contexts), so it was simpler to express a minimality condition on degrees, both to define the notion and to use it in Section ~\ref{sec:complexity} for Complexity results.
}
 if it is optimal and of minimal degree: For every optimal typing tree $\pi'$ of $M$, $\degr \pi \leq \degr {\pi'}$.

  % \begin{iteMize}{$\bullet$}
  % \item
  %   A substitution $\sigma$ mapping the atomic types $\tau_1,\ldots,\tau_n$ to the
  %   types $F_1,\ldots,F_n$ acts on types, contexts, judgements and typing trees, so
  %   we can write $A\sigma$, $\Gamma\sigma$, $\pi\sigma$,\ldots
  % \item
  %   We write $\pi\leq\pi'$ if there exists a substitution $\sigma$ such that
  %   $\pi' = \pi\sigma$.\\
  %   (In that case if $\pi$ concludes $\Deri\Gamma M A$ then $\pi'$ must conclude
  %   $\Deri{\Gamma\sigma} M {A\sigma}$.)
  % \item
  %   A typing tree $\pi$ concluding $\Deriliopt \Gamma M {A^+}$ is said to be
  %   {\em principal} if for any typing tree $\pi'$ concluding
  %   $\Deriliopt {\Gamma'} M {A'^+}$ we have $\pi\leq\pi'$.
  % \end{iteMize}
\end{definition}

\subsection{$\LambdaS$}

In order to prove the completeness of the typing system with respect to $\SNLambdaS$, we first show that terms in normal form (for some adequate notion of normal form) can be typed, and then we prove Subject Expansion for a notion of reduction that can reduce any term in $\SNLambdaS$ to a normal form (which we know to be typed).
In $\LambdaS$, Subject Expansion is true only for $\Rew{B, S}$ (not for $\Rew{W}$).  We will prove that it is enough for completeness. The main reason is that $\Rew{W}$ can be postponed \wrt $\Rew{B,S}$:

\begin{lemma}[Erasure postponement]\strut%Let $i \in \{B, S\}$.
  \label{lem:erasurepostponed}

\begin{iteMize}{$\bullet$}

\item If $M \Rew{W} \Rew{B} M'$, then $M \Rew{B} \Rew{W} M'$.

\item If $M \Rew{W} \Rew{S} M'$, then $M \Rew[+]{S} \Rew[+]{W} M'$.

\item If $M \Rewn{S,W} M'$, then $M \Rewn{S} \Rewn{W} M'$

\end{iteMize}

 % If $M \Rew{W} \Rew{i} M'$ then $M \Rew{i} \Rewn{W} M'$.

 % Therefore, if $M \Rewn{W} \Rew[m]{i} M'$ then $M \Rew[m]{i} \Rew{W} M'$.
\end{lemma}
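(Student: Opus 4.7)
The plan is to establish the three parts separately, with part 3 following from part 2 combined with the termination of $\Rew{S,W}$ proved in Lemma~\ref{lem:SNForSW}. The key observation driving parts 1 and 2 is that a $\Rew{W}$-step $\Subst y x N \to y$ (with $y \neq x$) is purely erasing: it produces a variable, which is neither a $\lambda$-abstraction nor the body of any $S$-redex (the $S$-propagation rules all require a non-variable body, while the $SR$-rule would require $y = x$). Hence a $\Rew{W}$-step never creates a new $B$- or $S$-redex, and any redex that sat strictly inside the deleted $N$ disappears with it. Consequently, in any sequence $M \Rew{W} M_1 \Rew{X} M'$ (with $X \in \{B,S\}$), the two redexes are either at disjoint positions in $M$, or the subsequent $X$-redex strictly contains the $W$-redex.

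For part 1, both cases are immediate. Disjoint redexes commute. If the $B$-redex $(\l z.P)\, Q$ contains the $W$-redex (inside $P$ or $Q$), then because the $B$-rule only produces $\Subst P z Q$ without duplication, doing $B$ first preserves a single residual of the $W$-redex, which one subsequent $\Rew{W}$-step removes. The $\equiv$-rearrangements absorbed into the reduction relations merely permute nested explicit substitutions and can be reshuffled on either side of the single step without affecting the analysis.

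For part 2, the same dichotomy applies, but the $S$-rules behave differently: several of them duplicate the substitution body $N$, for example $\Subst{(M_1\,M_2)} x N \to (\Subst{M_1} x N)(\Subst{M_2} x N)$ when $x$ occurs in both factors, and similarly for the propagation rule through a nested explicit substitution. When the $W$-redex lies inside such a duplicated $N$, doing $S$ first produces two residuals of the $W$-redex, erased by two subsequent $\Rew{W}$-steps; this accounts for the $\Rew[+]{W}$ in the statement. The $\Rew[+]{S}$ likewise provides a uniform formulation covering the various $S$-propagation rules, which may in some subcases require a short chain of propagations in order to match exactly the term reached by $\Rew{W}\,\Rew{S}$.

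For part 3, I proceed by strong induction on the well-founded measure of Lemma~\ref{lem:SNForSW}, which strictly decreases with every $\Rew{S}$- or $\Rew{W}$-step. Given $M \Rewn{S,W} M'$, inspect the first step. If it is $\Rew{S}$, apply the induction hypothesis to the tail and prepend. If it is $\Rew{W}$, so that $M \Rew{W} M_1 \Rewn{S,W} M'$, apply the induction hypothesis to $M_1$ to obtain $M \Rew{W} M_1 \Rewn{S} M_3 \Rewn{W} M'$; if the middle $\Rewn{S}$ is empty, concatenating yields $M \Rewn{W} M'$. Otherwise, apply part 2 to the leading $\Rew{W}\,\Rew{S}$-pair, yielding $M \Rew[+]{S} M_2 \Rew[+]{W} \Rewn{S} \Rewn{W} M'$; since $M$ reaches $M_2$ through at least one $\Rew{S}$-step, the measure of $M_2$ is strictly smaller than that of $M$, so the induction hypothesis applies to $M_2 \Rewn{S,W} M'$ and gives the desired form after concatenation. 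The main obstacle will be the thorough case analysis in parts 1 and 2 across the many $S$-propagation rules and their free-variable side-conditions, which determine how many residuals of the $W$-redex survive each $S$-step.
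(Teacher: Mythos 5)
Your proof is correct, and for the first two points it is essentially the paper's approach (the paper dispatches them in one sentence, ``by inspection of the rules: a substitution never blocks computation''; your case analysis by relative position of the two redexes is the intended inspection, carried slightly further). Where you genuinely diverge is in part 3. The paper rewrites the reduction \emph{sequence} itself: it repeatedly replaces an adjacent $\Rew{W}\Rew{S}$ pair by $\Rew[+]{S}\Rew[+]{W}$ and argues that this process terminates because the lengths of all $S,W$-sequences from $M$ are bounded (Lemma~\ref{lem:SNForSW}), so the sequence length eventually stabilises and the remaining swaps are mere transpositions. You instead do a well-founded induction on the termination measure of Lemma~\ref{lem:SNForSW}, normalising the tail by the induction hypothesis to create the adjacency and then re-applying the hypothesis to the strictly smaller $M_2$ reached after at least one $S$-step. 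Your induction is cleaner and avoids the paper's somewhat informal ``the size eventually stops changing'' step; both arguments consume exactly the same ingredients (the local swap of part 2 plus termination of $\Rew{S,W}$), so nothing is gained or lost in generality. One caution on part 2: the hard case is not only when the $W$-redex sits inside a duplicated substitution body, but also when it sits inside the term being substituted \emph{into} and its erasure changes which $S$-rule's free-variable side-condition holds (e.g.\ $\Subst{(x\,(\Subst y z P))} x N$ with $x\in fv(P)$: after the $W$-step the non-duplicating application rule fires, whereas before it the duplicating rule must fire, and one then needs further $S$-steps to push the substitution through the doomed subterm before erasing it). Your phrase ``a short chain of propagations'' does cover this, but that is where the real case analysis lives; since the paper supplies no detail at all there, this is not a gap relative to its own standard of rigour.
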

\begin{proof}
 % By inspection of the rules, a substitution never blocks computation.

The first two points are proved by inspection of the rules:
a substitution never blocks computation.

The third point: Let $L$ a $S,W$ reduction sequence from $M$ to $M'$.

If $L$ is not a of the form $\Rewn{S} \Rewn{W}$, then there exists
$\Rew{W} \Rew{S}$ in $L$ and by using the second point we can replace it
by $\Rew[+]{S} \Rew[+]{W}$ to obtain a reduction sequence $L'$.
Therefore we have a non-deterministic rewriting of $L$.

This rewriting increases or does not change the size of $L$.
According to Lemma~\ref{lem:SNForSW}, $M$ is strongly normalising for $S,W$.
Therefore, after a certain number of steps, the size of $L$ does not change.
So, after a certain number of steps, the rewriting is just replacing
$\Rew{W} \Rew{S}$ by $\Rew{S} \Rew{W}$ and this terminates.
Hence, this rewriting terminates.

By taking a normal form of this rewriting we have $M \Rewn{S} \Rewn{W} M'$.
\end{proof}

Therefore, the normal forms for $\Rew{B, S}$ are ``normal enough'' to be easily typed:
\begin{lemma}[Typability of $B,S$-normal term]\strut

  If $M$ cannot be reduced by $\Rew{B, S}$, then there exist $\Gamma$ and $A$ such
  that $\Deriliopt \Gamma M A$.
\end{lemma}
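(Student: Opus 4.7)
The plan is to proceed by structural induction on $M$. First I would characterise $B,S$-normal forms by inspecting the rules of Figure~\ref{fig:LambdaSRew}: such an $M$ falls into one of four cases --- (i) a variable $x$; (ii) an abstraction $\lambda x.M'$ with $M'$ normal; (iii) an application $M_1\,M_2$ with both subterms normal and $M_1$ not a $\lambda$-abstraction (otherwise a $B$-redex is present); or (iv) an explicit substitution $\Subst{z}{x}{N}$ with $z\neq x$ (so $x\notin\FV{z}$) and $N$ normal, since any other shape for the body of the substitution would expose some $S$-rule, possibly after using $\equiv$.

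For each case I would exhibit an optimal typing tree explicitly. The base case $M=x$ is handled by the axiom $\Derili{x\col\tau}{x}{\tau}$ for a fresh atomic type $\tau$; atomic types lie simultaneously in $A^+$ and $A^{--}$, so the optimality conditions are satisfied, with degree zero. For $M=\lambda x.M'$ the IH yields $\Deriliopt{\Gamma}{M'}{F}$ with $F$ of form $A^+$; if $x\in\dom{\Gamma}$, then $\Gamma$ has the shape $\Gamma', x\col A^{--}$, and the Abs rule applied without subsumption gives $\Derili{\Gamma'}{\lambda x.M'}{A^{--}\arr F}$, which has the shape $A^+$; if $x\notin\dom{\Gamma}$, I use Abs with $U=\tempty$ and a fresh atomic type $\tau$ in the $A$ position, which fulfils simultaneously the $B^+$ requirement on forgotten types and the $A^{--}$ requirement for the argument position in $\tau\arr F$. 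For $M=\Subst{z}{x}{N}$ with $z\neq x$, the IH supplies an optimal typing of $N$ with root type $A$ of form $A^+$, and I apply Subst with $U=\tempty$: the type $A$ becomes the forgotten type and satisfies the $B^+$ condition, while the head variable $z$ is typed by a freshly chosen entry in the context.

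For the application case I would write $M$ as a spine $h\,N_1\cdots N_k$ with $k\geq 1$, where by the normal-form analysis the head $h$ is either a variable or an explicit substitution of the form $\Subst{z}{y}{P}$ from case (iv). The outer IH applied to each $N_i$ gives an optimal typing with root type $A_i$ of form $A^+$; I then assign $h$ the arrow type $A_1\arr\cdots\arr A_k\arr\tau$ for a fresh atomic $\tau$, and combine via $k$ uses of App. The subtle point is that each $A_i$ plays a double role: it is the $A^+$-conclusion of the subderivation for $N_i$, and also the $A^{--}$-argument inside the arrow type of $h$. This is reconciled by observing that arrow-only types built from atomic leaves lie simultaneously in $A^+$ and $A^-$ (hence $A^{--}$), and by using the freedom in the IH to choose atomic placeholders; when genuine intersections arise (as for self-applications like $\lambda x.x\,x$), they are introduced by the Inter rule precisely at the argument positions where the $A^{--}$ grammar admits them.

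The main obstacle is exactly this reconciliation of the $A^+$ and $A^{--}$ shape requirements when building spines. It is best addressed by a mild strengthening of the inductive statement carrying an external ``target type'' that can be propagated through a spine and consistently materialised at the leaves, so that the same type is both read as the output $A^+$ of a subderivation and as the $A^{--}$ argument of the head's type. Degree-minimality, needed for the later principality statements, will follow by construction, since the procedure never introduces redundant intersections or forgotten types.
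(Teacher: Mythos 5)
Your overall strategy---structural induction on $M$, a classification of the $B,S$-normal forms, and a direct construction of an optimal typing tree for each shape, with the head of an application spine receiving an arrow type assembled from the arguments' types and a fresh atomic codomain---is the same as the paper's. Two points, however, need repair.

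First, your classification in case (iv) is incomplete, and the justification you give for it (``any other shape for the body of the substitution would expose some $S$-rule, possibly after using $\equiv$'') is false. Both substitution-composition rules of Fig.~\ref{fig:LambdaSRew} require the outer substituted variable to occur free in the \emph{argument} of the inner substitution, and $\equiv$ only permutes independent substitutions; so a stack of substitutions over a variable, e.g.\ $\Subst{\Subst{z}{y_1}{w_1}}{y_2}{w_2}$ with all variables distinct, is a $B,S$-normal form covered by none of your cases (i)--(iv). The correct shape---the one the paper uses---is $\Subst{\Subst{x}{y_1}{M_1}\cdots}{y_n}{M_n}\,N_1\cdots N_m$. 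The repair is routine (iterate the (Subst) rule with $U=\tempty$, each $M_j$'s type becoming a forgotten type of the form $B^+$), but as written your induction does not cover all normal forms, and your spine analysis inherits the same omission for heads.

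Second, the ``main obstacle'' you identify rests on a misreading of the refined grammar, and the machinery you propose to overcome it would not work---fortunately it is not needed. The head variable's type is a context entry, so optimality requires it to be of the form $A^{--}$, hence (being a non-intersection arrow) of the form $A^-$; and the argument positions of an $A^-$-arrow are $A^+$, not $A^{--}$. Since the induction hypothesis types each argument $N_i$ with a type of the form $A^+$, these types slot directly into $A_1\arr\cdots\arr A_k\arr\tau$ with nothing to reconcile. Your proposed reconciliation via ``arrow-only types built from atomic leaves'' and atomic placeholders would in any case fail: the optimal type of an argument such as $\l x.x\,x$ is $((\tau\arr\tau')\cap\tau)\arr\tau'$, which is of the form $A^+$ but not $A^-$, and the induction hypothesis gives you no freedom to replace it by an atom. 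The only strengthening genuinely required is the one you gesture at for a different reason: the head of a spine must be typable with a prescribed target type $A_1\arr\cdots\arr A_k\arr\tau$ (the analogue, for $\LambdaS$, of the paper's typability-of-accumulators lemma), which is immediate when the head is a variable or a substitution stack over one.
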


\begin{proof}
  By induction on $M$.

  We use the fact that if $M$ cannot be reduced by $\Rew{B,S}$, then
  $M$ is of one of the following form:
  \begin{iteMize}{$\bullet$}
  \item $\l x . M_1$
  \item $\Subst {\Subst x {y_1} {M_1} \cdots} {y_n} {M_n} N_1 \cdots N_n$
  \end{iteMize}
  Each of them can easily be typed by a principal typing tree using the induction
  hypothesis.
\end{proof}

\begin{remark}
  The algorithm given by the previous proof gives us a principal typing tree.
\end{remark}

\begin{theorem}[Subject Expansion]
  \label{th:SELambdaS}\strut

  If $M \Rew{B, S} M'$ and $\Derilis {\Gamma'} {M'} A$, then there exist $\Gamma\approx \Gamma'$ such that $\Derilis \Gamma M A$.

Moreover, the optimality property, the degree, and the principality property are all preserved.
\end{theorem}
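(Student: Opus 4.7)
The plan is to proceed by a double induction: an outer induction on the derivation of $M \Rew{B,S} M'$ (which handles the congruent closure and the equivalence rule alongside the base rewrite rules), and an inner induction on the type $A$. The inner induction handles $A = A_1 \cap A_2$ uniformly: by Lemma~\ref{lem:basicprop}.1 we decompose $\Derilis{\Gamma'}{M'}{A_1 \cap A_2}$ into $\Derilis{\Gamma'_1}{M'}{A_1}$ and $\Derilis{\Gamma'_2}{M'}{A_2}$ with $\Gamma' = \Gamma'_1 \cap \Gamma'_2$; two applications of the outer IH followed by the (Inter) rule yield the desired judgement with context equivalent to $\Gamma'$. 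It therefore suffices to treat the case where $A$ is an $F$-type.

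The pivotal base case is the $B$-rule $(\l x . M_1)\,M_2 \Rew{B} \Subst{M_1}{x}{M_2}$. Inversion of the bottom (Subst) rule in the derivation of $\Subst{M_1}{x}{M_2} \col F$ yields $\Gamma_1$, $\Gamma_2$, $A_0$, $U$ with $\Gamma' = \Gamma_1 \cap \Gamma_2$, $\Derilis{\Gamma_1}{M_2}{A_0}$, $\Derilis{\Gamma_2, x\col U}{M_1}{F}$, and $U = A_0$ or $U = \tempty$. One application of (Abs) produces $\Derilis{\Gamma_2}{\l x . M_1}{A_0 \arr F}$ (the side condition $A_0 \subseteq U$ holds in either sub-case), and then (App) gives $\Derilis{\Gamma_2 \cap \Gamma_1}{(\l x . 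M_1)\,M_2}{F}$ with context $\approx \Gamma'$. The $SR$-rule $\Subst{x}{x}{N} \Rew{S} N$ is handled dually by wrapping a derivation of $N \col F$ into a (Subst) using the (Var) axiom $\Derilis{x \col F}{x}{F}$ and taking $U = A_0 = F$. The remaining $S$-rules, which propagate an explicit substitution through an application, abstraction, or another substitution, follow the same template: invert the outer constructor in the derivation of $M'$, use the side condition to identify which branch(es) received the substitution, and reverse the dispatch; Lemma~\ref{lemma:TypingSubst} (or its variant when $x$ is absent from $M$) packages the re-assembly. The $\equiv$-rule swapping two substitutions is settled by a direct re-association of the derivation, mirroring the corresponding subject-reduction case.

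The congruent closure adds one case per term constructor: we invert the relevant rule in the derivation of $M'$, apply the outer IH to the sub-reduction, and re-apply the rule, transporting the context equivalence. For the secondary claim that optimality, degree and principality are preserved, the key observation is that in each expansion step above, the types appearing in $\pi$ are literally those appearing in $\pi'$, and forgotten types are transported one-for-one: in the $B$ case for instance, a forgotten $A_0$ at the (Subst) in $\pi'$ becomes a forgotten $A_0$ at the (Abs) in $\pi$; conversely, no fresh genuine subsumption is introduced. Hence $\degr\pi = \degr{\pi'}$ by direct inspection of the degree formula, and optimality is preserved; together with the analogous degree-preservation for $\Rew{B,S}$-steps in Subject Reduction (Theorem~\ref{th:SubjectReduction-lS}), this sets up a degree-preserving correspondence between optimal typings of $M$ and of $M'$, from which principality transfers.

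The main obstacle will be the bookkeeping for the various $S$-rules: the side conditions on the occurrences of $x$ in $M_1$ and $M_2$ dictate several distinct inversion patterns, and in each one the context decomposition must be tracked carefully modulo $\approx$. A secondary subtlety is verifying in each sub-case that the polarity discipline required by optimality is respected, in particular that no $A^+$-type slips into a position demanding $A^-$ when an explicit substitution is pulled back out.
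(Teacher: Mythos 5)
Your proposal is correct and follows the same route as the paper, which likewise proves this by a double induction (first on the derivation of $M \Rew{B,S} M'$, then on $A$) obtained by reversing each case of Subject Reduction for $\LambdaS$, and justifies the preservation of optimality, degree and principality exactly as you do: once $\Rew{W}$ is excluded, the interface (typing context, type, forgotten types) is unchanged and no subsumption is introduced. One nit: in the case $A = A_1 \cap A_2$ you say you apply the \emph{outer} IH twice, but since the reduction derivation is unchanged there it is the \emph{inner} induction hypothesis (on the smaller types) that is being invoked; the lexicographic induction you set up at the start covers this anyway.
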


\begin{proof}
  First by induction on $\Rew{B, S}$ and $\approx$, then by induction on $A$.

  We adapt the proof of Subject Reduction.
  The optimality property, the degree, and the principality property are preserved in both directions (Subject Expansion and Subject Reduction): indeed, since we are considering $\Rew{B,S}$ and not $\Rew{W}$, the interface (typing context, type of the term and forgotten types) is not changed and we do not add any subsumption.
\end{proof}

\begin{theorem}[Completeness]  If $M\in\SNLambdaS$, then there exist $\Gamma$ and $A$ such that $\Deriliopt \Gamma M A$.
\end{theorem}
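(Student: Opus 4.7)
The plan is to combine the three tools already assembled in this subsection: strong normalisation of $M$, typability of $B,S$-normal terms, and Subject Expansion for $\Rew{B,S}$. The key observation is that, although $W$-reduction is not covered by Subject Expansion, we do not need it: $B,S$-normal forms are rich enough to be optimally typed, and every SN term reduces to one of them through $\Rew{B,S}$ alone.

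More precisely, I would argue as follows. Since $M \in \SNLambdaS$, no infinite reduction sequence from $M$ exists using the full relation $\Rew{B,S,W}$; in particular the sub-reduction $\Rew{B,S}$ terminates on $M$. Pick any $\Rew{B,S}$-normal form $M'$ with $M \Rewn{B,S} M'$. By the Typability of $B,S$-normal terms lemma, there exist $\Gamma'$ and $A$ such that $\Deriliopt {\Gamma'} {M'} A$ (and in fact the typing tree produced is principal).

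Next, I would proceed by induction on the length of the reduction $M \Rewn{B,S} M'$. The base case is immediate: $M = M'$ is already typed optimally. For the inductive step, suppose $M \Rew{B,S} M_1 \Rewn{B,S} M'$; by the induction hypothesis there exist $\Gamma_1$ and $A$ such that $\Deriliopt {\Gamma_1} {M_1} A$. Applying Theorem~\ref{th:SELambdaS} (Subject Expansion) to the single step $M \Rew{B,S} M_1$, we obtain $\Gamma \approx \Gamma_1$ such that $\Derilis \Gamma M A$, and the theorem's explicit preservation clause ensures that this typing tree remains optimal (and in fact preserves the degree and principality).

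The main potential obstacle is the choice not to include $W$ in Subject Expansion, which forces us to hit a $B,S$-normal form rather than a fully normal one; but this is precisely why the typability lemma for $B,S$-normal terms was stated in that generality, and why the erasure postponement lemma is not even needed for the completeness statement itself. Consequently, the proof reduces to citing the three prior results in sequence, with no technical work beyond the straightforward induction on the reduction length.
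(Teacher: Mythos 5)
Your proposal is correct and matches the paper's argument: the paper also proves completeness by induction on (the length of) a $\Rew{B,S}$-reduction sequence, typing the $B,S$-normal form via the dedicated typability lemma and pulling the optimal typing back through Subject Expansion (Theorem~\ref{th:SELambdaS}). The only cosmetic difference is that the paper phrases the induction as being on the longest reduction sequence of $M$ rather than on a chosen reduction to normal form.
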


\begin{proof}
  By induction on the longest reduction sequence of $M$.
  If $M$ can be reduced by $\Rew{B,S}$ we can use the induction hypothesis.
  Otherwise, $M$ is typable.
\end{proof}

\begin{remark}
  The algorithm given by the previous proof gives us a principal typing tree.
\end{remark}

\begin{corollary}
  If $M \Rew{B, S} M'$ and $M'\in\SNLambdaS$, then $M\in\SNLambdaS$.
\end{corollary}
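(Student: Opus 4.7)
The plan is to chain together three results already established in the paper: Completeness, Subject Expansion (for $\Rew{B,S}$), and Soundness for $\LambdaS$. All the heavy lifting has been done; this corollary is essentially a bookkeeping step.

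First, from the hypothesis $M' \in \SNLambdaS$, I would apply the Completeness theorem just proved to obtain a typing context $\Gamma'$ and a type $A$ such that $\Derilis{\Gamma'}{M'}{A}$ (in fact, we even get an optimal derivation, but we do not need the optimality here). Next, since $M \Rew{B,S} M'$, I would apply Subject Expansion (Theorem~\ref{th:SELambdaS}) to this derivation; this yields a context $\Gamma \approx \Gamma'$ with $\Derilis{\Gamma}{M}{A}$. Finally, Soundness for $\LambdaS$ (Theorem~\ref{th:LSsoundness}) immediately gives $M \in \SNLambdaS$.

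There is no real obstacle: the only subtlety worth noting is why we insist on $\Rew{B,S}$ rather than the full reduction $\Rew{B,S,W}$. The reason is precisely that Subject Expansion is stated (and holds) for $\Rew{B,S}$ but would fail for $\Rew{W}$, since an erasure step can discard a sub-term that is not typable on its own. This is also why the earlier erasure-postponement lemma (Lemma~\ref{lem:erasurepostponed}) was needed to make Completeness go through: it allows us to first normalise $\Rew{B,S}$-redexes and deal with erasures later. For this corollary, however, we are given a single $\Rew{B,S}$-step, so no postponement is required and the three-line argument above suffices.
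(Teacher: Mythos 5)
Your proof is correct and is exactly the argument the paper intends: the corollary is placed immediately after Completeness precisely so that the chain Completeness $\to$ Subject Expansion (Theorem~\ref{th:SELambdaS}) $\to$ Soundness (Theorem~\ref{th:LSsoundness}) can be read off. Your side remark about why the statement is restricted to $\Rew{B,S}$ rather than the full reduction also matches the paper's discussion of the failure of Subject Expansion for $\Rew{W}$.
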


\subsection{Pure $\l$-calculus}

\label{lambda-compl}

As in the case of $\LambdaS$, proving completeness of the typing system with respect to $\SNLambda$ relies on the typability of (some notion of) normal forms and on some property of Subject Expansion.

For the $\lambda$-calculus, the normal forms that we consider are simply the $\beta$-normal forms. Typing them with the optimal typing trees of System~\InterLambdaSub\ is therefore very reminiscent of~\cite{DHM05} that applies a similar technique to type $\beta$-normal forms with the principal types of a system with idempotent intersections.

\begin{definition}[Accumulators]

  The fact that a $\l$-term $M$ is a $\l$-free head-normal form with head-variable $x$ (\ie $M$ is of the form $x M_1 \ldots M_n$) is abbreviated as $\accu M x$. Equivalently, the judgement $\accu M x$ can be defined with the following rules:
  \[
  \begin{array}{c}
    \infer{\accu x x}{\strut} \quad
    \infer{\accu {M N} x}{\accu M x}\\
  \end{array}
  \]

\end{definition}

\begin{remark}[Shape of a normal term]\strut
  If $M$ is a $\beta$-normal form, then either $M$ is of the form $\l x . N$ (for some normal form $N$) or there
  exists $x$ such that $\accu M x$ (induction on $M$ -see \eg \cite{Bohm68}).
\end{remark}

% \begin{proof}
%   By induction on $M$.
% It can also be seen as a corollary of a characterization of normal forms found
% in .
%  % \begin{iteMize}{$\bullet$}
%   %\item If $M$ is a variable or an abstraction it is trivial.
%   %\item If $M$ is of the form $M_1 M_2$, by induction hypothesis
%   %  $M_1$ is of the form $\l y . M_3$ or there exists $x$ such that $\accu {M_1} x$.
%   %  \begin{iteMize}{$-$}
%   %  \item If $M_1$ is of the form $\l y . M_3$, then $M$ can be reduced so it is
%   %    not a normal form.
%   %  \item If there exists $x$ such as $\accu {M_1} x$, then we have
%   %    $\accu M x$.
%   %  \end{iteMize}
%   %\end{iteMize}
% \end{proof}

\begin{lemma}[Typability of accumulators]\strut\\
If $\accu M x$ and $\pi$ is a derivation of $\Derilis {\Gamma,x\col U^{--}} M F$, then
  \begin{enumerate}[\em(1)]
  \item $F$ is of the form $F^{-}$;
  \item for all $G^-$, there exists $V^{--}$ and a derivation $\pi'$ of $\Derilis {\Gamma, x \col V^{--}} M {G^-}$;\\
    moreover, $\forg{\pi'}=\forg \pi$ and if $\pi$ does not use subsumption, neither does $\pi'$.
  \end{enumerate}
\end{lemma}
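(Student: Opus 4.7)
The plan is to proceed by induction on the derivation of $\accu M x$, inverting $\pi$ at each step. Inversion is straightforward because \InterLambdaSub{} is syntax-directed at the top: the last rule of $\pi$ is determined by the shape of $M$.

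In the base case $\accu x x$, $\pi$ must end with (Var), which forces $\Gamma=()$ and $U^{--}=F$. Since $F$ is an $F$-type (neither $\tempty$ nor a proper intersection), the only shape of $U^{--}$ compatible with this is $U^{--}=A^-$; hence $F$ is of form $F^-$, giving (1). For (2), given any $G^-$, set $V^{--}\eqdef G^-$ and apply (Var) to get $\Derilis {x\col G^-} x {G^-}$. This derivation has no forgotten types and uses no subsumption, so it inherits $\pi$'s profile.

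In the inductive step $M=M_1\ N$ with $\accu{M_1} x$, the last rule of $\pi$ is (App), with sub-derivations $\pi_1$ of $\Derilis{\Gamma_1}{M_1}{A\rightarrow F}$ and $\pi_2$ of $\Derilis{\Gamma_2} N A$ such that $\Gamma_1\cap\Gamma_2=\Gamma,x\col U^{--}$. Writing $\Gamma_i=\Gamma_i',x\col U_i$ with $\Gamma_1'\cap\Gamma_2'=\Gamma$ and $U_1\cap U_2=U^{--}$, a case analysis on the outermost shape of $U^{--}$ shows that both $U_1$ and $U_2$ are themselves of form $U^{--}$: an $A^-$, being an $F$-type, admits only trivial splits with $\tempty$; an $A_1^{--}\cap A_2^{--}$ either splits into its two components or into $\tempty$ and itself. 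By the IH applied to $\pi_1$, $A\rightarrow F$ is of form $F^-$, forcing $A=A^+$ and $F=B^-$, which proves (1). For (2), given $G^-$, the type $A^+\rightarrow G^-$ is again a legal $F^-$, so the IH yields some $V_1^{--}$ and a derivation $\pi_1'$ of $\Derilis{\Gamma_1',x\col V_1^{--}}{M_1}{A^+\rightarrow G^-}$ with $\forg{\pi_1'}=\forg{\pi_1}$ and no new subsumption. Applying (App) to $\pi_1'$ and $\pi_2$ produces the desired $\pi'$ with context $\Gamma, x\col V_1^{--}\cap U_2$; set $V^{--}\eqdef V_1^{--}\cap U_2$, which remains of form $U^{--}$. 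The forgotten-type list of $\pi'$ coincides with that of $\pi$, and no subsumption is introduced.

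The main technical obstacle is the case analysis of how $U^{--}$ splits across the two premises of (App), since one must check that the refined grammar is stable under the decomposition imposed by $\pi$. This rests on two observations: $A^-$ is an $F$-type and therefore cannot appear as a proper intersection $U_1\cap U_2$ with both sides different from $\tempty$; and $A^{--}$ is precisely the closure of $A^-$ under $\cap$, hence stable under arbitrary binary decompositions into $U$-types.
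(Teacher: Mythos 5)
Your proposal is correct and follows essentially the same route as the paper's proof: induction on $\accu M x$, inverting (Var) in the base case and (App) in the step, using point (1) to see that $A\rightarrow F$ is an $F^-$ so that $A=A^+$ and $A^+\rightarrow G^-$ is again an $F^-$, then reapplying (App) with $V^{--}\eqdef V_1^{--}\cap U_2^{--}$. The only difference is that you spell out why the split of $U^{--}$ across the two premises of (App) stays within the refined grammar, a point the paper's proof leaves implicit.
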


\begin{proof}

  \begin{enumerate}[(1)]

  \item By induction on $\accu M x$.

  \item
    By induction on $\accu M x$.
    \begin{iteMize}{$\bullet$}

    \item For \infer{\accu x x}{}:
      Then $\Gamma^{--} = ()$ and $\Derilis {x \col G^-} x {G^{-}}$.

    \item For \infer{\accu {M N} x}{\accu M x}:
      Then, there exist $\Gamma_1^{--}$, $\Gamma_2^{--}$, $U_1^{--}$, $U_2^{--}$ and
      $A$ such that
      $\Gamma^{--} = \Gamma_1^{--} \cap \Gamma_2^{--}$,
      $U^{--} = U_1^{--} \cap U_2^{--}$,
      $\Derilis {\Gamma_1^{--}, x \col U_1^{--}} M {A \rightarrow F}$ and
      $\Derilis {\Gamma_2^{--}, x \col U_2^{--}} N A$.
      By the first point, $A \rightarrow F$ is of the form $B^-$.
      Therefore, $A$ is of the form $A^+$.
      Hence, $A^+ \rightarrow G^-$ is of the form $C^-$.
      By induction hypothesis, there exist $V_1^{--}$ such that
      $\Derilis {\Gamma_1^{--}, x \col V_1^{--}} M {A^+ \rightarrow G^-}$.
      Therefore, $\Derilis{\Gamma, x \col V_1^{--} \cap U_2^{--}} {M N} {G^-}$.\qedhere

      % $M N$ is typable, therefore $M$ and $N$ are typable.  By induction hypothesis we can give any type to $M$, therefore we can give any type to $M N$.  This is done only by changing the type of $x$ in the context.
    \end{iteMize}
  \end{enumerate}
\end{proof}

\begin{lemma}[Typability of a normal term]\strut

  If $M$ is a normal form, then there exists $\Gamma$ and $F$ such that
  $\Deriliopt \Gamma M F$.

\end{lemma}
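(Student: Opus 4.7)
I would proceed by induction on the structure of the normal form $M$, using the shape-of-normal-term remark: either $M=\l x.N$ with $N$ a normal form, or $\accu M x$ for some variable $x$.

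For the abstraction case $M=\l x.N$, the induction hypothesis yields an optimal derivation $\Deriliopt \Gamma N {F^+}$. If $x\in\FV N$, then $\Gamma$ contains some entry $(x\col V^{--})$, and I apply the $(\textsf{Abs})$ rule with $A:=V^{--}$ and $U:=V^{--}$; the side condition reduces to $V^{--}\approx V^{--}$, so no subsumption is used, and the conclusion $V^{--}\arr F^+$ matches the production $A^{--}\arr B^+$ defining $A^+$. If $x\notin\FV N$, then $\Gamma(x)=\tempty$, and I apply the $(\textsf{Abs})$ rule with $U:=\tempty$ and $A:=\tau$ for any atomic $\tau$; the side condition $\tau\subseteq\tempty$ holds by Lemma~3.8, no subsumption is used, and the type $\tau$ becomes forgotten at that step, which is licit since $\tau$ is of the form $A^+$. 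The conclusion $\tau\arr F^+$ is again of the form $A^+$.

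For the accumulator case I would first establish the following sub-lemma by auxiliary induction on $\accu M x$: for every $F$-type of the form $F^-$, there exist a context $\Gamma$ and a type $V^{--}$ and a derivation $\Derilis {\Gamma,x\col V^{--}} M {F^-}$ that uses no subsumption, whose context entries are all of the form $U^{--}$, and whose forgotten types are all of the form $A^+$. The base case $\accu x x$ is handled directly by the axiom $\Derilis {x\col F^-} x {F^-}$, noting that any $F^-$ is an $A^-$ and hence a $V^{--}$. For the inductive case $\accu {NP} x$ with $P$ a normal form, I apply the outer induction hypothesis to $P$ to obtain $\Deriliopt \Delta P {G^+}$ for some $G^+$, then observe that $G^+\arr F^-$ matches the production $A^+\arr B^-$ and is therefore of the form $F^-$; the inner induction hypothesis applied to $N$ with this target yields $\Derilis {\Gamma',x\col V^{--}} N {G^+\arr F^-}$, after which a single application of $(\textsf{App})$ closes the case. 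The conclusion of the main theorem in the accumulator case then follows by instantiating the sub-lemma with $F^-:=\tau$ for some atom $\tau$, which is simultaneously of the form $F^+$ and $F^-$, so the sub-lemma's weaker invariant actually delivers full optimality.

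The main obstacle is choosing the right inductive invariant: the accumulator case cannot be proved by direct appeal to optimality because the output type of an accumulator is forced to be of the form $F^-$ rather than $F^+$, and these two shapes coincide only at atomic types. This is why the sub-lemma must be parametrised by a quantified $F^-$ target (dictated, at each application, by the $F^+$ type produced for the argument). Apart from carefully tracking the polarities $+,-,--$ through this construction, the rest is essentially mechanical bookkeeping, resting on the structural fact that atomic types belong simultaneously to $A^+$, $A^-$, $A^{--}$, and $U^{--}$, and that $A^-$ is a sub-production of $A^{--}$.
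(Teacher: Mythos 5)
Your proof is correct and follows essentially the same route as the paper: the paper also proceeds by induction on $M$, splitting into the abstraction and accumulator cases, and your sub-lemma for accumulators is precisely the paper's ``Typability of accumulators'' lemma (its point 2 is the statement that an accumulator can be retyped at any $G^-$, with the application case using the $A^+$ type of the argument to form $A^+\arr G^-$, which is again of the form $F^-$). The only cosmetic difference is that you fold existence and retyping into one quantified sub-lemma, whereas the paper states the retyping lemma separately and invokes it after typing both subterms by the induction hypothesis; your write-up of the abstraction case and of the final instantiation at an atom $\tau$ is a faithful (and more detailed) elaboration of what the paper leaves implicit.
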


\begin{proof}
  By induction on $M$. Since $M$ is a normal form, we are in one of the following cases:
  \begin{iteMize}{$\bullet$}
  \item $M$ is of the form $\l x . N$ ; by the induction hypothesis we can type $N$ so we can type $\l x . N$;
  \item There exists $x$ such as $\accu M x$ and
    \begin{iteMize}{$-$}
    \item either $M = x$, which can trivially be typed;
    \item or $M = M_1 M_2$ with $\accu {M_1} x$, and by the induction hypothesis we can type $M_1$ and $M_2$;
      therefore we can give any type to $M_1$ so we can type $M_1 M_2$.\qedhere
    \end{iteMize}
  \end{iteMize}
\end{proof}

\begin{lemma}[Anti-substitution lemma]
  If $\Derilis \Gamma {\subst M x N} A$, then there exist $\Gamma_1$, $\Gamma_2$ and $U$ such that $\Derilis {\Gamma_1} N U$, $\Derilis {\Gamma_2} M A$ and $\Gamma_1 \cap \Gamma_2 \approx \Gamma$.
\end{lemma}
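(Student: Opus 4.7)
The plan is to proceed by structural induction on $M$, with an inner case analysis according to whether $A$ is an $F$-type or an intersection. Throughout, by Barendregt's convention, $x$ will be fresh in all bound positions encountered and $x \notin fv(N)$ is not assumed (it is actually irrelevant), while bound variables of $M$ are chosen distinct from $x$ and from $fv(N)$. The statement is read with the convention that $\Gamma_2$ already contains the assignment $x\col U$ (so that the concluding equation $\Gamma_1 \cap \Gamma_2 \approx \Gamma$ holds after removing $x$ from $\Gamma_2$, $x$ being absent from $\Gamma$ and $\Gamma_1$).

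If $A = A_1 \cap A_2$, I first apply Lemma~\ref{lem:basicprop}.1 to split the derivation as $\Derilis {\Delta^a} {\subst M x N} {A_1}$ and $\Derilis {\Delta^b} {\subst M x N} {A_2}$ with $\Gamma \approx \Delta^a \cap \Delta^b$. The induction hypothesis applied to each part yields $\Derilis {\Gamma_1^a} N {U^a}$, $\Derilis {\Gamma_2^a} M {A_1}$ and $\Derilis {\Gamma_1^b} N {U^b}$, $\Derilis {\Gamma_2^b} M {A_2}$, with the respective intersection equations. I then close the case by two uses of (Inter): one producing $\Derilis {\Gamma_1^a \cap \Gamma_1^b} N {U^a \cap U^b}$ and one producing $\Derilis {\Gamma_2^a \cap \Gamma_2^b} M {A_1 \cap A_2}$, using associativity and commutativity of $\cap$ on contexts (Lemma~\ref{lem:capprop}) to re-bracket so that the final equation matches $\Gamma$.

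If $A$ is an $F$-type, I case on $M$. For $M = x$, the substitution equals $N$, so I take $\Gamma_1 \eqdef \Gamma$, $U \eqdef F$ and $\Gamma_2 \eqdef (x\col F)$, using (Var) for $M$. For $M = y \neq x$, the substitution is $y$ and the given derivation must be (Var) with $\Gamma = (y\col F)$; I take $U \eqdef \tempty$ with (Omega) deriving $\Derilis{}{N}{\tempty}$, and $\Gamma_2 \eqdef (y\col F)$. For $M = \l y.M_1$, the substitution is $\l y.\subst {M_1} x N$, and the derivation must end by (Abs), so $F = B \rightarrow G$ with $B \subseteq V$ and premise $\Derilis {\Gamma, y\col V} {\subst {M_1} x N} G$; I apply the outer IH to this premise, then extract the $y$-component from the resulting context for $M_1$ and reuse (Abs) to rebuild $\l y.M_1$ with the same arrow type. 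For $M = M_1\,M_2$, the substitution is $(\subst {M_1} x N)(\subst {M_2} x N)$ and the derivation ends in (App), giving $\Derilis {\Delta^1} {\subst {M_1} x N} {B \rightarrow F}$ and $\Derilis {\Delta^2} {\subst {M_2} x N} B$ with $\Gamma \approx \Delta^1 \cap \Delta^2$; the IH applied to each yields typings for $N$ and for $M_1$, $M_2$ respectively, which I recombine with (Inter) on the two typings of $N$ and (App) on $M_1\,M_2$.

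The main obstacle is the application case: the IH must be invoked on $M_2$ with type $B$, and $B$ may well be an intersection rather than an $F$-type. This is exactly what forces the statement to cover all $A$-types and forces the inner case analysis on $A$ (which reduces the intersection case to the $F$-type case). A secondary source of technicality is bookkeeping with the contexts: one must keep track of the side $x\col U$ living inside the context of $M$ while $x$ stays absent from the context of $N$, and use Lemma~\ref{lem:capprop} consistently to justify the reassociations $(\Gamma_1^a \cap \Gamma_2^a) \cap (\Gamma_1^b \cap \Gamma_2^b) \approx (\Gamma_1^a \cap \Gamma_1^b) \cap (\Gamma_2^a \cap \Gamma_2^b)$ involved in merging the intersection and application cases.
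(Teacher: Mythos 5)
Your proposal is correct and follows essentially the same route as the paper, whose proof is only a one-line sketch: induction on $M$ with an inner induction on $A$, adapting the proof of the substitution lemma (Lemma~\ref{lem:typsubSR}), and taking $U = \tempty$ when $x \notin fv(M)$. You correctly identify the two points the paper leaves implicit — that the application case forces the statement to cover intersection types for $B$ (hence the outer induction on $M$), and that the stated context equation only makes sense once $x\col U$ is understood to live in $\Gamma_2$.
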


\begin{proof}
  First by induction on $M$ then by induction on $A$.  We adapt the proof of Lemma~\ref{lem:typsubSR}.  Notice that if $x \notin fv(M)$ we take $U = \tempty$. Therefore, $N$ might not be typable by an $A$-type.
\end{proof}

As we have seen for $\LambdaS$, proving completeness relies on the Subject Expansion property for a notion of reduction that can reduce any term in $\SNLambda$ to a normal form (which we know to be typed).

In the pure $\lambda$-calculus, not all $\beta$-reductions satisfy Subject Expansion. For example, in the following reduction:
\[ (\l z . (\l y . a)(z z))(\l y . y y) \Rew{\beta} (\l z . a) (\l y . y y)\]
the second term is typable, but not the first one (because it is not strongly normalising).

As in $\LambdaS$, it is erasure that breaks the Subject Expansion ($(\l x . M) N \Rew{} M$ with $x \notin fv(M)$).

The problem here is that we cannot just study Subject Expansion for $\beta$-reductions that do not erase, because forbidding erasure can block a reduction sequence (for example, $(\l x . (\l y . y)) a b$).

So we have to define a restricted version of $\beta$-reduction that satisfies Subject Expansion, but that is still general enough to reach $\beta$-normal forms (which can be easily typed).

If $M$ and $N$ are $\l$-terms and $E$ a finite set of variables then we define
the judgements $M \per_E N$  and $M \Per_E N$ with the rules of Fig.~\ref{fig:restrictbeta}.

\begin{figure}[!h]
  $$\begin{array}{|c|}
    \hline\\
    \infer{(\l x . M) N \per_{\o} \subst M x N}{x \in fv(M)}\quad
    \infer{(\l x . M) N \per_{fv(N)} M}{x \notin fv(M) \quad N \mbox{ is a $\beta$-normal form}}
			\quad
    \infer{(\l x . M) N \per_E (\l x . M) N'}{x \notin fv(M) \quad N \Per_E N'}\\\\
    \infer{M N \per_E M'N}{M \per_E M'}\quad
    \infer{M N \per_E MN'}{N \per_E N'}\quad
    \infer{\l x . M \per_E \l x . M'}{M \per_E M' \quad x \notin E}\\\\
    \infer{M \Per_E M'}{M \per_E M'}\quad
    \infer{\l x . M \Per_{E - \{x\}} \l x . M'}{M \Per_E M'}\quad
    \infer{M N \per_{E \cup \{x\}} M N'}{\accu M x \quad N \Per_{E} N'}\\\\
    \hline
  \end{array}
  $$
  \caption{Restricted $\beta$-reduction}
  \label{fig:restrictbeta}
\end{figure}

These may not seem natural on a syntactic point of view.
However, they are quite intuitive if you consider that they satisfy the
following lemma:

\newpage
\begin{theorem}[Subject Expansion]\strut

  \label{th:SubjectExpansion}
  Assume $E = \{x_1, \ldots, x_n\}$ and
  $\Derilis {\Gamma, x_1 \col  U_1, \ldots , x_n \col  U_n} {M'} A$.

  \begin{iteMize}{$\bullet$}

  \item If  $M \Per_E M'$ then there exist $B$, $\Gamma'$, $V_1$, \ldots, $V_n$ such
    that $\Derilis {\Gamma', x_1 \col  V_1, \ldots , x_n \col  V_n} M B$ and
    $\Gamma \approx \Gamma'$.

  \item If $M \per_E M'$ then there exists $\Gamma'$, $V_1$, \ldots, $V_n$, such that
    $\Derilis {\Gamma', x_1 \col  V_1, \ldots, x_n \col  V_n} M A$ and
    $\Gamma \approx \Gamma'$.

  \end{iteMize}

  Moreover, if the typing of $M'$ is optimal, then the typing of $M'$ can be required to be optimal.
\end{theorem}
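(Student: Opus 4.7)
My plan is to proceed by simultaneous induction on the derivations of $M \per_E M'$ and $M \Per_E M'$, combined with an inner case analysis on the type $A$ to dispatch the intersection construct. Whenever $A = A_1 \cap A_2$, Lemma~\ref{lem:basicprop}.1 splits the derivation of $M'$ into derivations at types $A_1$ and $A_2$; applying the induction hypothesis to each and recombining with (Inter) suffices. The work therefore concentrates on the case where $A$ is an $F$-type, with the Anti-substitution Lemma, the Typability of $\beta$-normal forms, and the Typability of accumulators as the main tools.

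For the base rules of $\per_E$: in $(\lambda x.M)N \per_\varnothing \subst M x N$ with $x \in fv(M)$, the Anti-substitution Lemma yields $\Derilis{\Gamma_1}{N}{U}$ and $\Derilis{\Gamma_2, x\col U}{M}{A}$ with $\Gamma_1 \cap \Gamma_2 \approx \Gamma$; because $x \in fv(M)$, $U$ is a genuine $A$-type, so (Abs) with side condition $U \approx U$ (no subsumption) followed by (App) reconstructs a typing of $(\lambda x.M)N$ at $A$. In $(\lambda x.M)N \per_{fv(N)} M$ with $x \notin fv(M)$ and $N$ a $\beta$-normal form, I invoke the Typability lemma for normal forms to obtain an optimal typing $\Deriliopt{\vec x \col \vec W}{N}{B^+}$ (its context is over $fv(N) = E$ by Lemma~\ref{lem:basicprop}.2); starting from $\Derilis{\Gamma, \vec x \col \vec U}{M}{A}$, an application of (Abs) with $x\col \tempty$ produces $\Derilis{\Gamma, \vec x \col \vec U}{\lambda x.M}{B \rightarrow A}$, where the side condition $B \subseteq \tempty$ is the ``forgotten'' case (not subsumption) and contributes exactly $B^+$ as a forgotten type; (App) then attaches the optimal typing of $N$, yielding $\Derilis{\Gamma, \vec x \col \vec U \cap \vec W}{(\lambda x.M)N}{A}$. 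The rule $(\lambda x.M)N \per_{fv(N)} (\lambda x.M)N'$ with $N \Per_{fv(N)} N'$ is treated by inverting (App), invoking the induction hypothesis on $N$, and re-applying (App). The accumulator-right rule $MN \per_{E \cup \{x\}} MN'$ with $\accu M x$ and $N \Per_E N'$ is the subtler one: inverting (App) yields a typing of $N'$ at some $B'$, the induction hypothesis on $N \Per_E N'$ produces a typing of $N$ at a possibly different $B$, and the Typability of accumulators lemma retypes $M$ at $B \rightarrow A$ by modifying only the entry for the head variable $x$, which is permitted because $x$ lies in $E \cup \{x\}$. The remaining congruence rules of $\per_E$ follow the Subject Reduction pattern unchanged.

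For $\Per_E$, the inclusion $\per_E \subseteq \Per_E$ is immediate, and the rule $\lambda x.M \Per_{E \setminus \{x\}} \lambda x.M'$ is handled by inverting the (Abs) that concludes the typing of $\lambda x.M'$ at $B' \rightarrow F$, applying the $\Per_E$ induction hypothesis to $M \Per_E M'$ with $x$ added to $E$, and re-applying (Abs); the new conclusion type may differ from the original, which is exactly what the statement allows for $\Per_E$. Optimality is maintained because the only newly introduced subtrees (for $N$ in the erasure case) come from the Typability of normal forms in optimal form, because the Typability of accumulators lemma preserves the ``no subsumption'' property and the shape of forgotten types, and because no (Abs) application in the reconstruction uses a non-trivial subsumption. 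The principal obstacle is the bookkeeping around the set $E$ in the erasure case: reconciling the context entries $\vec x \col \vec U$ carried over from the typing of $M$ (where some $\vec x \in fv(N)$ may already appear free in $M$, for unrelated reasons) with the fresh context $\vec W$ from the optimal typing of $N$, merging them via intersection while preserving the $B^{--}$ form required for optimality of every context entry.
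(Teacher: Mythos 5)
Your proposal is correct and follows essentially the same route as the paper, which proves the theorem by induction on the derivations of $M \per_E M'$ and $M \Per_E M'$ with an inner induction on $A$, adapting the Subject Reduction argument; the auxiliary tools you invoke (the Anti-substitution Lemma, Typability of a normal term, and Typability of accumulators) are exactly the lemmas the paper sets up immediately beforehand for the non-erasing $\beta$, erasing $\beta$, and accumulator cases respectively. Your treatment of the erasure case and of optimality preservation fills in details the paper leaves implicit, but introduces no deviation in method.
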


\begin{proof}
  First by induction on $M \per_E M'$ and $M \Per_E M'$ then by induction on $A$.  We adapt the proof of Subject Reduction (Theorem~\ref{th:SubjectReduction}).
\end{proof}

\begin{lemma}[Safe execution of a term]  If $M$ can be reduced by $\Rew{\beta}$ then there exist $M'$ and $E$ such that $M \Per_E M'$ and if $M$ is not of the form $\l x . M$ then $M \per_E M'$.
\end{lemma}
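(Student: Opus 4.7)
The plan is to do induction on the structure of $M$. The base case $M = x$ is vacuous, since a variable cannot be $\beta$-reduced.

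For $M = \lambda x . M_1$, the reducibility of $M$ forces that of $M_1$; the induction hypothesis supplies some $M_1 \Per_{E_1} M_1'$, and I conclude by $\lambda x . M_1 \Per_{E_1 \setminus \{x\}} \lambda x . M_1'$ via the abstraction rule for $\Per$. Only the $\Per$-statement is required in this case, since $M$ is itself an abstraction.

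The core case is $M = M_1 M_2$, where I must produce an actual $\per$-reduction (which then lifts to $\Per$ via the first rule for $\Per$). I first split on whether $M_1$ is an abstraction $\lambda x . N$. If it is, then $M$ is a root redex and I split further to match the three head-redex rules of Fig.~\ref{fig:restrictbeta}: if $x \in fv(N)$ I use $(\lambda x . N) M_2 \per_\emptyset \subst N x {M_2}$; if $x \notin fv(N)$ and $M_2$ is already a $\beta$-normal form I use $(\lambda x . N) M_2 \per_{fv(M_2)} N$; and if $x \notin fv(N)$ but $M_2$ is reducible I invoke the induction hypothesis on $M_2$ to obtain $M_2 \Per_{E_2} M_2'$ and apply the deferred-erasure rule to get $(\lambda x . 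N) M_2 \per_{E_2} (\lambda x . N) M_2'$. If instead $M_1$ is not an abstraction, I rely on the standard classification of $\beta$-normal forms already used in the Typability of Normal Terms lemma: either $M_1$ is reducible, in which case the induction hypothesis yields $M_1 \per_{E_1} M_1'$ (the $\per$-statement being available because $M_1$ is not an abstraction) and the congruence rule gives $M_1 M_2 \per_{E_1} M_1' M_2$; or $M_1$ is a normal form that is not an abstraction, hence an accumulator $\accu{M_1}{x}$, in which case the reducibility of $M$ forces that of $M_2$, the induction hypothesis yields $M_2 \Per_{E_2} M_2'$, and the accumulator-congruence rule concludes with $M_1 M_2 \per_{E_2 \cup \{x\}} M_1 M_2'$.

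The main subtlety is aligning $\per$ and $\Per$ with what each congruence rule in Fig.~\ref{fig:restrictbeta} expects as a premise: the rules are crafted so that $\Per$ is demanded exactly in those positions where an abstraction might legitimately appear (the body of a $\lambda$, or the argument of a head $\lambda$ or of an accumulator), while $\per$ is asked only on the left-hand side of a generic application, which in our proof is precisely the case in which $M_1$ is known not to be an abstraction. Once this matching is verified, no deeper obstacle arises and the induction goes through mechanically.
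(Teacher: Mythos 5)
Your proof is correct and follows essentially the same induction on the structure of $M$ as the paper's, with the same case split on whether $M_1$ is an abstraction, is reducible, or is an accumulator. In fact you are slightly more careful than the paper in the head-redex case: the paper's proof only mentions the subcases $E=\emptyset$ and $E=fv(M_2)$, whereas you explicitly handle the third subcase ($x\notin fv(N)$ with $M_2$ reducible) via the deferred-erasure rule and the induction hypothesis on $M_2$.
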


\begin{proof}
  By induction on $M$.
  \begin{iteMize}{$\bullet$}
  \item $M$ cannot be a variable.
  \item If $M$ is of the form $\l x . N$.
    Then $N$ reduced by $\Rew{\beta}$.
    By induction hypothesis there exist $N'$ and $E$ such that $N \Per_E N'$.
    Therefore $\l x . N \Per_{E - \{ x \}} \l x . N'$.
  \item If $M$ is of the form $M_1 M_2$.
    We are in one of the following cases:
    \begin{iteMize}{$-$}
    \item $M_1$ is of the form $\l x . M_3$.
      Then we have $(\l x . M_3) M_2 \per_E \subst {M_3} x {M_2}$ with
      $E = \emptyset$ or $E = fv(M_2)$.
    \item $M_1$ is not of the form $\l x . M_3$ and $M_1$ reduced by $\Rew{\beta}$.
      By induction hypothesis, there exist $M_1'$ and $E$ such that
      $M_1 \per_E M_1'$.
      Therefore $M_1 M_2 \per_E M_1' M_2$.
    \item $M_1$ is not of the form $\l x . M_3$ and $M_1$ is a $\beta$-normal form.
      Therefore there exists $x$ such that $\accu {M_1} x$ and $M_2$ reduced by $\Rew{\beta}$.
      By induction hypothesis, there exist $M_2'$ and $E$ such that
      $M_2 \Per_E M_2'$.
      Hence $M_1 M_2 \per_{E \cup \{ x \}} M_1 M_2'$.\qedhere
    \end{iteMize}
  \end{iteMize}
\end{proof}

\begin{theorem}[Completeness]
  \label{th:lambda-compl}
  If $M\in\SNLambda$ then there exists $\Gamma$ and $A$ such that
  $\Deriliopt \Gamma M A$.
\end{theorem}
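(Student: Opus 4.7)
The plan is to mirror the completeness proof for $\LambdaS$, adapting it to the subtlety that standard $\beta$-reduction does not enjoy Subject Expansion. I would argue by induction on the length of the longest $\beta$-reduction sequence issuing from $M$, which is a well-defined natural number since $M\in\SNLambda$.

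In the base case, $M$ is a $\beta$-normal form, and the result follows directly from the Typability of a normal term lemma, which produces an optimal derivation $\Deriliopt \Gamma M F$. In the inductive step, $M$ is not a $\beta$-normal form, so it admits at least one $\beta$-redex. I would then invoke the Safe execution lemma to obtain $M'$ and $E$ such that $M \Per_E M'$. Since $\Per_E$ is included in $\Rewplus{\beta}$ (each step realises exactly one ordinary $\beta$-reduction, only restricting \emph{which} redexes may be fired), $M'$ is strongly normalising and has a strictly shorter longest reduction sequence than $M$. Hence the induction hypothesis provides $\Gamma'$ and $A$ with $\Deriliopt {\Gamma'} {M'} A$.

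At this point the Subject Expansion property (Theorem~\ref{th:SubjectExpansion}) applied to $M \Per_E M'$ yields $\Gamma$ and $B$ with $\Derilis \Gamma M B$, and, by the ``moreover'' part of that statement, this derivation can be required to be optimal, giving $\Deriliopt \Gamma M B$ as desired. Note that the typing context and the output type may genuinely change when Subject Expansion is performed under a $\lambda$-binder (as reflected by the use of $\Per_E$ rather than $\per_E$ in that case), but this only affects the particular $\Gamma$ and $B$ recovered, not the existence of an optimal derivation, which is all the theorem claims.

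The main conceptual obstacle is not in this final assembly—which, like the $\LambdaS$ case, is essentially a two-line argument once the supporting lemmas are in place—but in the preparatory design of the restricted reductions $\per_E$ and $\Per_E$: they must be permissive enough that every non-normal term admits such a step (Safe execution lemma) while being restrictive enough to avoid the pathological erasing reductions that break Subject Expansion (for instance, erasing a non-typable subterm like $(\lambda y.yy)$). The side-condition ``$N$ is a $\beta$-normal form'' in the erasing rule for $\per_E$, together with the tracking of variable sets $E$ under binders, is precisely what reconciles the two requirements, and that is already carried out in the lemmas we rely on.
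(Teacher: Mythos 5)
Your proposal is correct and follows essentially the same route as the paper: the paper's own proof is the one-line ``by induction on the longest reduction sequence of $M$'', relying exactly on the Typability-of-a-normal-term lemma, the Safe-execution lemma, and Subject Expansion for $\per_E$/$\Per_E$ as you assemble them. Your additional observations (that each $\Per_E$ step realises a single $\beta$-step, so the induction measure strictly decreases, and that only the \emph{existence} of an optimal derivation is needed, so the change of context and type under $\Per_E$ is harmless) are precisely the details the paper leaves implicit.
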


\begin{proof}
  By induction on the longest reduction sequence of $M$.
\end{proof}

We can notice that to prove the completeness of the pure $\l$-calculus we only
need a fragment of $\per$ and $\Per$ but by dealing with all $\per$ and $\Per$
we have the following result without extra difficulties:

\begin{corollary}
  If $M \Per_E M'$ and $M'\in\SNLambda$ then $M\in\SNLambda$.
\end{corollary}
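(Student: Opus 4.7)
The plan is to chain together the three main results about \InterLambdaSub{} that have just been established for the pure $\l$-calculus: Completeness (Theorem~\ref{th:lambda-compl}), Subject Expansion for the restricted reduction $\Per_E$ (Theorem~\ref{th:SubjectExpansion}), and Soundness (Theorem~\ref{th:soundness}). Each of these is available and the hypotheses fit together essentially as given.

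First, since $M' \in \SNLambda$, I would apply Theorem~\ref{th:lambda-compl} to obtain a (principal, in fact optimal) typing of $M'$: there exist a context $\Gamma$ and an $A$-type $A$ such that $\Derilis \Gamma {M'} A$. I would then extend $\Gamma$ with arbitrary assignments $x_1\col\tempty,\ldots,x_n\col\tempty$ for the variables in $E=\{x_1,\ldots,x_n\}$ (using the $(Omega)$ rule to combine, or simply viewing $\Gamma$ as a map whose value on the $x_i$ is $\tempty$). This gives the hypothesis needed to apply Subject Expansion in the form stated in Theorem~\ref{th:SubjectExpansion}.

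Next, from $M \Per_E M'$ and the typing of $M'$, Theorem~\ref{th:SubjectExpansion} provides a type $B$, a context $\Gamma'\approx\Gamma$, and types $V_1,\ldots,V_n$ such that $\Derilis {\Gamma', x_1\col V_1, \ldots, x_n\col V_n} M B$. In particular, $M$ is typable in \InterLambdaSub{} with an $A$-type. Finally, Theorem~\ref{th:soundness} immediately yields $M \in \SNLambda$, which is what we wanted.

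There is no real obstacle here; the whole point of this statement is to package the interplay between completeness, subject expansion, and soundness into a single closure property of $\SNLambda$ under $\Per_E$-expansion. The only minor bookkeeping is that Theorem~\ref{th:SubjectExpansion} is phrased with an extra list of variable declarations $x_i\col U_i$ on the right-hand side, so when invoking it from a bare typing of $M'$ one has to specify trivial assignments $U_i=\tempty$ to match the statement; apart from that, the proof is a direct concatenation of the three theorems.
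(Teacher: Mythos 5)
Your proof is correct and is exactly the argument the paper intends: the corollary is stated without an explicit proof, as the direct concatenation of Completeness (Theorem~\ref{th:lambda-compl}), Subject Expansion (Theorem~\ref{th:SubjectExpansion}) and Soundness (Theorem~\ref{th:soundness}). One cosmetic nitpick: rather than forcing $U_i=\tempty$, you should read the $U_i$ off as $\Gamma(x_i)$ from the context produced by Completeness (they will be non-$\tempty$ whenever $x_i\in fv(M')$, since the domain of a typing context equals the set of free variables), but Theorem~\ref{th:SubjectExpansion} admits arbitrary $U_i$, so this changes nothing.
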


This result is purely syntactic. However, it is very hard to prove without
intersection types (if we consider all $\per$ and $\Per$ and not just the head
reduction fragment).

\subsection{$\LambdaLxr$}

In this section we provide the guidelines to obtain a similar completeness theorem for $\LambdaLxr$, leaving the details for further work.
The methodology is similar to the cases of the $\lambda$-calculus and $\LambdaS$:
we identify a notion of reduction for which Subject Expansion holds, and whose notion of normal forms can be easily typed.

As in the $\lambda$-calculus, some of the rules do not satisfy Subject Expansion:
\[
\begin{array}{lll}
  \Subst {\weakening x M} x N& \Rew{}& \weakening {fv(N)} M \\
  M {\weakening x N} &\Rew{}& \weakening x {M N} \\
  \Subst M x {\weakening y N} &\Rew{}& \weakening y {\Subst M x N} \\
  \contraction x y z {\weakening y M} &\Rew{}& \subst M z x 
\end{array}
\]
Subject Expansion for the other reduction rules should be straightforward.

The fact that the last 3 rules above do not satisfy Subject Expansion is not problematic for the completeness theorem: like rule $W$ in $\LambdaS$, we should prove that they can be postponed after the other rules, and that removing them from the system defines a new notion of normal forms that can still be typed.

On the other hand, the first rule above is more problematic: if we forbid it, the reduction can be blocked (like forbidding erasure can block a reduction in the pure $\lambda$-calculus). So a Subject Expansion result without that rule is not enough to prove the completeness of $\LambdaLxr$.
Hence, we have two possibilities to achieve that goal:
\begin{iteMize}{$\bullet$}
\item We adapt the proof of the pure $\l$-calculus. We have to define
  $\per_E$ and $\Per_E$ in $\LambdaLxr$.
\item We adapt the proof of $\LambdaS$. We cannot do this in the usual
  $\LambdaLxr$: if we forbid erasure, reduction can be blocked. So we have
  to add the rules that move $\Subst {\weakening x M} x N$
  (like $\Subst M x N$ with $x \notin fv(M)$ in $\LambdaS$).
  These added rules respect Subject Reduction, so we still have soundness.
\end{iteMize}

%\begin{lemma}

%If $M$ cannot be reduced by $\Rew{B, S}$ then there exist $\Gamma$ and $A$ such
%that $\Deriliopt \Gamma M A$

%\end{lemma}

%\begin{proof}

%By induction on $M$

%\end{proof}

%\begin{lemma}

%If $M \Rew{B, S} M'$ and $\Derili {\Gamma'} {M'} A$ then there exist $\Gamma$
%such that $\Derili \Gamma M A$, and the optimal property is preserved.

%\end{lemma}

%\begin{proof}

%First by induction on $\Rew{B, S}$ and $\approx$, then by induction on $A$.

%\end{proof}

\noindent Both approaches would provide Completeness (strong normalisation implies
typability).
Moreover, as in the pure $\l$-calculus and $\LambdaS$,
the proof would provide an algorithm that constructs an optimal typing tree from a
strongly normlising term in $\LambdaLxr$.

%\begin{theorem}[Completeness]
%  If $M\in\SNLambdaLxr$, then there exist $\Gamma$ and $A$ such that $\Deriliopt \Gamma M A$.
%\end{theorem}

%\begin{remark}
%  The algorithm given by the proof of the previous theorem gives us a
%  principal typing tree.
%\end{remark}

%\begin{proof}

%By induction on the longest reduction sequence of $M$.

%\end{proof}

%\begin{corollary}

%If $M \Rew{B, S} M'$ and $M'$ is strongly normalising then $M$ is strongly
%normalising.

%\end{corollary}

\section{Complexity results}

\label{sec:complexity}

With the Subject Reduction theorems of the different calculi, we have proved that for every $\beta$- (or $B$-) reduction, the measure of the typing tree strictly decreases.  Hence, more than strong normalisation, this gives us a bound on the number of $\beta$- (or $B$-) reductions in a reduction sequence.  So we have a complexity result which is an inequality.  We would like to refine this result and have an equality instead.  The main idea is to only perform $\beta$- (or $B$-) reductions that decreases the measure of the typing tree by exactly one.  Given a term $M$ and any typing tree for it, it is not always possible to find such a reduction step. But it is always possible provided the typing tree is optimal.  Fortunately, every term $M$ that is typable is typable with an optimal typing tree: with soundness we can prove that $M$ is strongly normalising, and then, with completeness, we can prove that $M$ is typable with an optimal typing tree.  This is the main reason for introducing the notion of optimality. As in Section~\ref{sec:completeness}, the case of $\LambdaS$ is simpler than the case of pure $\l$-calculus, so we will deal with it first.

\subsection{$\LambdaS$}

In $\LambdaS$, we take advantage of the fact that $\Rew{W}$ can be postponed \wrt to $\Rew{B,S}$ steps.  This allows us to concentrate on $\Rew{B,S}$ and the normal forms for it.

\begin{lemma}[Refined Subject Reduction]
  If $\Deriliopt[n] \Gamma M A$ then:
  \begin{iteMize}{$\bullet$}
  \item If $M \Rew{B} M'$, then there exist $\Gamma'$ and $m$ such that $\Gamma \approx \Gamma'$, $m<n$ and $\Deriliopt[m] {\Gamma'} {M'} A$
  \item If $M \Rew{S} M'$, then there exists $\Gamma'$ such that $\Gamma \approx \Gamma'$ and $\Deriliopt[n] {\Gamma'} {M'} A$
  \end{iteMize}
  Moreover the degree of the typing tree does not change, and the principality property is preserved.
\end{lemma}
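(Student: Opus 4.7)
The plan is to proceed by induction on the reduction step, following the structure of the proof of Theorem~\ref{th:SubjectReduction-lS}, but carrying through three additional invariants at every case: contexts are related by $\approx$ rather than $\subseteq$; the resulting derivation is still optimal; and its degree equals that of the input derivation.

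For the key $B$-reduction case $(\lambda x.M_1)\, M_2 \Rew{B} \Subst{M_1}{x}{M_2}$, optimality forbids strict subsumption, so in the side-condition $A \subseteq U$ of the (Abs) rule typing $\lambda x.M_1$, we have either $U \approx A$ or $U = \tempty$. In the first case, $M_2$ is already typed with exactly $A$ in the input tree, so the (Subst) rule fires directly without invoking Lemma~\ref{lem:basicprop}.4 (the very step that introduces strict inclusion in the standard proof), which yields the sharpened $\Gamma \approx \Gamma'$. In the second case, the type $A^+$ of $M_2$ transfers from being the argument of an (App) node to being a forgotten type of the new (Subst) node, preserving polarity and leaving the sum of degree contributions unchanged.

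For the $S$-reduction cases and the $\equiv$ rule, I would observe that these are structural rearrangements of the typing tree: the output type, the context, and the multiset of forgotten types are preserved up to $\approx$; no new subsumption is introduced; the polarity discipline on $A^+,A^-,A^{--},U^{--}$ propagates unchanged through the restructuring; and the degree contribution of any duplicated or recombined sub-derivation is invariant under the rewiring. A representative sub-case to check carefully is $\Subst{(\Subst{M_1}{y}{M_2})}{x}{N} \Rew{S} \Subst{(\Subst{M_1}{x}{N})}{y}{\Subst{M_2}{x}{N}}$ when $x$ occurs in both $M_1$ and $M_2$: the derivation of $N$ is split along the decomposition $A \approx A_1 \cap A_2$ coming from Lemma~\ref{lem:basicprop}.1 applied to the typing of the redex, and the two resulting derivations contribute the same total degree as the single original one.

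Finally, for principality preservation, I would invoke Theorem~\ref{th:SELambdaS} (Subject Expansion), which itself preserves optimality and degree: any competing optimal tree for $M'$ of strictly smaller degree could be expanded back along the same reduction into an optimal tree for $M$ of strictly smaller degree, contradicting principality of the input. The main obstacle will be the case-by-case verification across the many $S$ rules that no (Abs) side-condition degenerates into strict subsumption and that polarities remain consistent; this is especially delicate for the substitution-propagation rules where the typing of the substituted term gets duplicated and then recombined, since one must track that the sum over forgotten types and the polarity annotations in each sub-derivation match those of the original on both sides of the equation.
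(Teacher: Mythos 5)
Your proposal is correct and follows essentially the same route as the paper: the paper's proof is a one-line remark that one re-inspects the proof of Theorem~\ref{th:SubjectReduction-lS} for the $B$ and $S$ rules only, observing that the interface (context, conclusion type, forgotten types) is unchanged and no subsumption is introduced — which is exactly your case analysis, including the key observation that optimality eliminates the appeal to Lemma~\ref{lem:basicprop}.4 in the $B$-case. Your back-expansion argument for principality via Theorem~\ref{th:SELambdaS} is precisely what the paper gestures at when it says the properties are preserved ``as already mentioned in the proof of Subject Expansion''.
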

\begin{proof}
  We simply check that, in the proof of of Subject Reduction (Theorem~\ref{th:SubjectReduction-lS}), the optimality property, the degree and the principality property are preserved, as already mentioned in the proof of Subject Expansion (Theorem~\ref{th:SELambdaS}).
\end{proof}

\begin{toappendix}
  \appendixbeyond 0
  \begin{lemma}[Most inefficient reduction]
\label{lem:MostInefficientReduction}
    Assume $\Deriliopt[n] \Gamma M A$.
    If $M$ can be reduced by $\Rew{B}$ and not by $\Rew{S}$, then there exist $M'$ and $\Gamma'$ such that $\Gamma \approx \Gamma'$, $M \Rew{B} M'$ and $\Deriliopt[n - 1] {\Gamma'} {M'} A$.
  \end{lemma}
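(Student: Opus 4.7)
The plan is a structural induction on $M$ that selects, at each step, a $B$-redex lying at a position typed \emph{positively} (equivalently, whose sub-derivation does not open with an $(\mbox{Inter})$). Such a redex is typed only once in the whole tree, so removing it strips exactly one $(\mbox{App})$-node. Since $M$ is $B$-reducible it is not a variable, so I split on its top constructor.

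If $M = \lambda x.M_1$, both $B$-reducibility and $S$-irreducibility transfer to $M_1$, and the induction hypothesis yields the reduction, which $(\mbox{Abs})$ reassembles. If $M = \Subst{M_1}{x}{N}$, the $S$-irreducibility of $M$ forces $M_1$ to be a variable $y \neq x$ (every other shape would trigger an $S$-rule), and the $(\mbox{Subst})$ side-condition then necessarily takes $U = \tempty$ (since $x \notin fv(y)$); the type of $N$ is therefore a forgotten type, which by optimality is of form $A^+$, so $N$'s sub-derivation does not open with an $(\mbox{Inter})$. The induction hypothesis applied to $N$ finishes this case.

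The case $M = M_1\,M_2$ splits into three sub-cases. (a)~If $M_1 = \lambda w.M_0$, then $M$ is itself a $B$-redex; performing it replaces the top $(\mbox{App})$+$(\mbox{Abs})$ combination by a single $(\mbox{Subst})$, removing exactly one $(\mbox{App})$-node, and the no-subsumption side-condition of $(\mbox{Abs})$ matches the optimal side-condition of $(\mbox{Subst})$. (b)~If $M_1$ is not an abstraction but contains a $B$-redex, its typing position carries the non-intersection type $A^{--} \arr F^+$, its sub-derivation does not start with $(\mbox{Inter})$, and the induction hypothesis applies directly to $M_1$. (c)~If $M_1$ is not an abstraction and contains no $B$-redex, then the redex must lie inside $M_2$, and I must show that $M_2$'s type is in fact non-intersection.

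Sub-case (c) is the principal obstacle. A short induction on $M_1$ shows that it takes spine form $h\,P_1 \cdots P_k$, where the head $h$ is either a variable or an $S$-irreducible explicit substitution $\Subst{y}{w}{P}$ with $y \neq w$; in both situations the head ultimately receives its type via $(\mbox{Var})$, whose conclusion must, by optimality of the context, be of form $A^-$. Propagating this type through the $k$ instances of $(\mbox{App})$ along the spine, using the equation $A^- = A^+ \arr A^-$, forces every argument of the spine — and in particular $M_2$ — to be typed with an $A^+$ type: non-intersection. Hence $M_2$'s sub-derivation does not open with $(\mbox{Inter})$, the induction hypothesis applied to $M_2$ yields $M_2 \Rew{B} M_2'$ with a one-step decrease, and $(\mbox{App})$ reassembles. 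Optimality and degree are preserved throughout by the refined Subject Reduction already established, $\Gamma' \approx \Gamma$ follows from the preservation of contexts at each case, and the total decrease is exactly one $(\mbox{App})$-node.
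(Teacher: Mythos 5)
Your overall strategy is the paper's: a structural induction on $M$ exploiting the shape of $S$-irreducible terms, with optimality guaranteeing (i) non-intersection types at the positions where you recurse and (ii) no subsumption at the fired redex, so that exactly one $(\mbox{App})$-node disappears. The abstraction case, the root-redex case (a), and the spine argument in (c) all match the paper's proof.

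The gap is in the case $M = \Subst{M_1}{x}{N}$. Your claim that $S$-irreducibility forces $M_1$ to be a variable is false: $M_1$ may itself be an $S$-irreducible stack of explicit substitutions over a variable. Concretely, $\Subst{(\Subst{z}{w}{P})}{x}{N}$ with $z \neq w$, $x \neq z$ and $x \notin fv(P)$ triggers no $S$-rule: the two substitution-over-substitution rules both require $x \in fv(P)$, and the equivalence on nested substitutions only permutes the stack without exposing an $S$-redex (the inner $\Subst{z}{x}{N}$ it produces is a $W$-redex, and the lemma only excludes $\Rew{S}$, not $\Rew{W}$). Such a term is typable with an optimal tree (both substitutions take $U = \tempty$), and its $B$-redex can sit inside $P$ --- take $P = (\l u.u)(\l u.u)$ --- i.e.\ inside $M_1$ rather than inside $N$, so your case analysis produces no reduction for it. The same depth restriction infects your sub-case (c), where the head of the spine is allowed to be only a single substitution $\Subst{y}{w}{P}$. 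The repair is the paper's flattened shape $\Subst{\Subst{x}{y_1}{N_1}\cdots}{y_p}{N_p}\,N_{p+1}\cdots N_m$: for each stacked substitution, $S$-irreducibility forces $y_i$ not to occur free in the body beneath it, so the $(\mbox{Subst})$ side-condition takes $U = \tempty$, the type of $N_i$ is forgotten and hence of the form $A^+$ by optimality --- exactly the argument you already give for $N$ --- and the induction hypothesis then applies to whichever $N_i$ contains the $B$-redex.
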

\end{toappendix}

\begin{toappendix}
  [\begin{proof} Again, we adapt the proof of Subject Reduction (Theorem~\ref{th:SubjectReduction-lS}). More precisely, see appendix \thisappendix.\end{proof}]
  \begin{proof}
    We follow the induction given in the proof of Subject Reduction (Theorem~\ref{th:SubjectReduction-lS}).
    In this induction, $n$ can be decreased by more than $1$ by a $\Rew{B}$ in two cases:
    \begin{iteMize}{$\bullet$}
    \item In the case where the type is an intersection, then $n$ will be decreased by at least $2$.
    \item When we build a typing of $\Subst M x N$ from a typing of $(\l x . M) N$: if there were subsumption in the typing the $\l$-abstraction, then the proof calls Lemma~\ref{lem:basicprop}.4 which might decrease $n$ by more than $1$.
    \end{iteMize}
    Those two cases are never encountered when optimality is assumed, as we prove the result by induction on $M$.
    Since $M$ cannot be reduced by $\Rew{S}$, it is of one of the following forms:
    \begin{iteMize}{$\bullet$}
    \item $\l x . M_1$. It is clear that $M_1$ satisfies the necessary conditions to apply the induction hypothesis.
    \item $(\l x . M_1) N_1 \ldots N_p$ (with $p\geq 1$). We reduce to $\Subst{M_1} x{N_1}\ N_2\ldots N_p$.
      By the optimality property, $A$ is not an intersection, and none of the types of $((\l x . M_1) N_1\ldots N_i)_{1\leq i\leq p-1}$ are intersections either (since they are applied to an argument).
      Also by the optimality property, there is no subsumption in the typing of the $\l$-abstraction, and therefore the call to Lemma~\ref{lem:basicprop}.4 is replaced by a call to Lemma~\ref{lem:basicprop}.4 and therefore $n$ is decresed by exactly $1$.
    \item $\Subst {\Subst x {y_1} {N_1} ...} {y_p} {N_p} N_{p + 1} ... M_m$.
      Therefore there exists $i$ such that $N_i$ can be reduced by $\Rew{B}$. Moreover, optimality requires the type of $x$ to be of the form $A^+_1\arr \cdots\arr A^+_p\arr B-$, and therefore the sub-derivation typing $N_i$ is also optimal: we can apply the induction hypothesis on it.\qedhere
    \end{iteMize}
  \end{proof}
\end{toappendix}

\begin{lemma}[Resources of a normal term]\strut
  \label{lem:normalLambdaS}
  If $\Deriliopt[n] \Gamma M A$, and $M$ cannot be reduced by $\Rew{B, S}$, then
  $n$ is the number of applications in $M$.

	Moreover, if the typing tree is principal, then $n$ is the degree of the
typing tree.
\end{lemma}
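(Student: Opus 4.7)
The plan is to proceed by structural induction on $M$, using the characterisation of $B,S$-normal forms already exploited in the proof of Lemma~\ref{lem:MostInefficientReduction}: $M$ is either $\l x.M_1$ with $M_1$ again $B,S$-normal, or has the form $\Subst{\cdots\Subst x {y_1}{P_1}\cdots}{y_m}{P_m}\,N_1\cdots N_p$, with head variable $x$ and each $P_i, N_j$ again $B,S$-normal. Under Barendregt's convention, each $y_i$ is distinct from $x$ and does not occur in the inner term it is substituted into, so every explicit substitution appearing in $M$ is ``dummy''.

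The crux is to show that (Inter) never appears in the optimal typing $\pi$. In each (App) rule of $\pi$, the argument's type $A$ must match the input of an arrow $A\arr F^+$ occurring inside the $A^{--}$-shaped context type of the head variable; since arrows inside $A^{--}$ must have the form $A^+\arr B^-$, the type $A\in A^+$ is forced to be non-intersection. Likewise, in each (Subst) of $\pi$, the dummy property forces $U=\tempty$, and the substituted argument is typed with a forgotten type, which in an optimal tree must lie in $A^+$, again non-intersection. Consequently every sub-derivation mirrors the syntactic shape of the corresponding sub-term and is itself an optimal typing of a $B,S$-normal sub-term.

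With the no-(Inter) claim in hand, a direct inspection of the typing rules shows that only (App) contributes to $n$: the $p$ outer applications of $M$ give exactly $p$ (App) rules, and by the induction hypothesis applied to each $N_j$ and $P_i$, the remaining (App) rules account for precisely the applications inside those sub-terms, summing to the total number of syntactic applications in $M$. The abstraction case $M=\l x.M_1$ reduces immediately to the induction hypothesis on $M_1$, since (Abs) contributes nothing to $n$ and the no-subsumption condition of optimality transports optimality directly into the sub-derivation of $M_1$. For the principality claim, the same structural analysis lets one track $\degr\pi$: the minimality condition forces the types of arguments and forgotten types to be chosen so that each (App) contributes exactly $1$ to $\degr\pi$ through the recursive clauses of $\degr[+]$ and $\degr[-]$, matching the count of (App) rules and yielding $n=\degr\pi$. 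The main technical obstacle is precisely the no-(Inter) claim, which requires propagating the refined-type grammar $A^+, A^-, A^{--}$ rule-by-rule and exploiting the no-subsumption property of optimality.
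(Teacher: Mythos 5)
Your argument for the first claim ($n$ equals the number of applications) is correct, and since the paper's own proof is just the word ``Straightforward'', it is essentially the intended argument made explicit. The decisive point is exactly the one you isolate: in an optimal tree the head variable's context entry has the form $A^{--}$, so every arrow along its spine is $A^+\arr B^-$ and each argument $N_j$ receives a non-intersection type of the form $A^+$; and every explicit substitution in a $B,S$-normal form is dummy (this follows from $B,S$-normality rather than from Barendregt's convention, but the conclusion is the same), so the side condition of (Subst) forces $U=\tempty$ and the substituted term is typed exactly once, with a forgotten type, which optimality also forces to be a non-intersection $A^+$. Hence (Inter) and (Omega) never occur, the sub-derivations of the $N_j$, $P_i$ and of $M_1$ in the abstraction case are again optimal, and (App) nodes are in bijection with syntactic applications.

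The principality half is where your proof is too thin. ``Each (App) contributes exactly $1$ to $\degr\pi$'' cannot be read off locally: an application $H\,N_j$ makes $\degr[+]{B_j}$ appear twice (once inside $\degr[-]{}$ of the head variable's context type, once as the conclusion degree of the sub-derivation typing $N_j$), and one has to see that these telescope; what survives the telescoping is the number of applications plus $\degr[+]{C}+\degr[-]{C}$ for every ``tail'' type $C$ (the final type of each head spine, each forgotten type, each unapplied variable occurrence), and it is only the minimality built into principality that forces all these tails to be atomic and kills the residue. You should also beware that with the degree table exactly as printed (the $+1$ attached to $\degr[+]{A^{--}\arr B^+}$ and none in $\degr[-]{A^+\arr B^-}$) the claim is simply false: the principal typing $x\col\tau\arr\tau',\,y\col\tau\vdash x\,y:\tau'$ has $n=1$ but degree $0$, while $\vdash\l x.x:\tau\arr\tau$ has $n=0$ but degree $1$. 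The statement only holds if the $+1$ is counted for arrows in \emph{negative} position, i.e.\ in the clause for $\degr[-]{A^+\arr B^-}$, as the paper's verbal gloss and the downstream complexity theorems require; your argument needs to say which convention it uses and then actually carry out the telescoping and the minimality step.
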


\begin{proof}
  Straightforward.
\end{proof}

\begin{theorem}[Complexity result]
  If $\Deriliopt[n] \Gamma M A$, then $n = n_1 + n_2$ where
  \begin{iteMize}{$\bullet$}
  \item $n_1$ is the maximum number of $\Rew{B}$ in a $B,S$-reduction sequence from $M$
  \item $n_2$ is the number of applications in the $B,S$ normal form of $M$.
  \end{iteMize}
Moreover, if the typing tree is principal, then $n_2$ is the degree of the
typing tree.
\end{theorem}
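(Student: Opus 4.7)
The plan is to establish the equality $n = n_1 + n_2$ by proving matching upper and lower bounds, combining the Refined Subject Reduction above, Lemma~\ref{lem:MostInefficientReduction}, Lemma~\ref{lem:normalLambdaS}, and the termination of $\Rew{S,W}$ (Lemma~\ref{lem:SNForSW}). The principality clause is then read off the construction used for the lower bound.

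For the upper bound $n_1 + n_2 \leq n$, I would consider an arbitrary $B,S$-reduction sequence from $M$ with $k$ steps of $\Rew{B}$, ending at a normal form $M'$. The Refined Subject Reduction yields, along this sequence, a family of optimal typings whose size strictly decreases at each $\Rew{B}$-step and is preserved by each $\Rew{S}$-step; so $M'$ admits an optimal typing of size $n' \leq n - k$. Lemma~\ref{lem:normalLambdaS} identifies $n'$ with the number of applications of $M'$, hence $k + \mathrm{apps}(M') \leq n$ for every such reduction sequence.

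For the lower bound, I would exhibit a specific reduction sequence saturating the upper bound. Starting from $M$, iterate the following two-phase loop: first, reduce by $\Rew{S}$ to an $S$-normal form (possible, since $\Rew{S,W}$ terminates by Lemma~\ref{lem:SNForSW}); then, if a $\Rew{B}$-redex is still available, apply Lemma~\ref{lem:MostInefficientReduction} to perform a $\Rew{B}$-step that decreases the typing size by exactly $1$; then repeat. The outer loop terminates because the (non-negative) typing size strictly decreases with each $\Rew{B}$-step, so the procedure ends at a $B,S$-normal form $M^*$. If the sequence performs $k^*$ $\Rew{B}$-steps in total, the preserved optimal typing of $M^*$ has size $n - k^*$, which by Lemma~\ref{lem:normalLambdaS} equals $\mathrm{apps}(M^*)$. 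Hence $k^* + \mathrm{apps}(M^*) = n$, giving $n_1 \geq k^*$; together with the upper bound applied to this very sequence, this forces $n_1 = k^*$ and $n_2 = \mathrm{apps}(M^*)$.

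For the principality claim, the degree of the typing tree is preserved by $\Rew{B,S}$ (as recorded in the Refined Subject Reduction), and principality is preserved as well; so the principal typing tree of $M^*$ has the same degree as the principal typing tree of $M$. The final clause of Lemma~\ref{lem:normalLambdaS} then identifies this common degree with $\mathrm{apps}(M^*) = n_2$. The main obstacle is the justification that the inefficient strategy above genuinely realises the maximum $n_1$: it hinges on the sharp distinction between an arbitrary $\Rew{B}$-step (which may decrease $n$ by more than one) and a \emph{most inefficient} $\Rew{B}$-step of Lemma~\ref{lem:MostInefficientReduction} (which decreases $n$ by exactly one), the latter being available only when no $\Rew{S}$-redex remains — which is precisely why the two-phase interleaving with maximal $\Rew{S}$-normalisation is unavoidable.
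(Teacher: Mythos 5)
Your proof is correct and follows essentially the same route as the paper: the upper bound via Subject Reduction down to the $B,S$-normal form combined with Lemma~\ref{lem:normalLambdaS}, and the lower bound via the most-inefficient strategy (Lemma~\ref{lem:MostInefficientReduction}) interleaved with maximal $S$-normalisation, with principality handled by degree preservation. The only detail the paper adds is the observation that a $B,S$-sequence not ending at the normal form can be completed into one with at least as many $B$-steps, and that confluence of $\LambdaS$ makes the normal form (hence $n_2$) unique --- both of which your upper-bound argument implicitly assumes.
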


\begin{proof}
  The previous lemmas give us a $B,S$-reduction sequence with $n - n_2$ $B$-steps, from $M$ to the normal form of $M$.  In this reduction sequence every reduction $B$ decreases the measure of the typing tree by exactly one.

  Assume we have a $B,S$-reduction sequence, from $M$ to the normal form of $M$ ($\LambdaS$ is confluent), with $m$ $B$-steps. By Subject Reduction (Theorem~\ref{th:SubjectReduction-lS}), the measure of the derivation typing the normal form of $M$ is smaller than $n-m$, but is also $n_2$.  Hence $m \leq n-n_2$.
  
  Assume we have a $B,S$-reduction sequence, from $M$ to any term $M_1$. It can be completed into a $B,S$-reduction sequence with more $B$-steps, from $M$ to the normal form of $M_1$.
\end{proof}

\subsection{$\LambdaLxr$}

In this section we suppose that we have reductions to move
$\Subst {\weakening x M} x N$.

Therefore, it is reasonable to consider $\Rew{rev}$ which is all reductions
except the 5 ones that cause a problem for subject expansion.

Hence we can adapt the proofs of $\LambdaS$ and obtain the following theorem:

\begin{theorem}[Complexity result]If $\Deriliopt[n] \Gamma M A$, then $n = n_1 + n_2$ with $n_1$ the maximum number
of $B$ in a $\Rew{rev}$ sequence, and $n_2$ the number of applications in
the $\Rew{rev}$ normal form.

Moreover, if the typing tree is principal, then $n_2$ is the degree of the typing
tree.
\end{theorem}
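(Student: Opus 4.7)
The plan is to mirror the proof strategy used for $\LambdaS$, replacing $\Rew{B,S}$ throughout by $\Rew{rev}$, i.e.\ all reduction rules of $\LambdaLxr$ (together with the added rules that move $\Subst{\weakening x M} x N$) except the five rules that break Subject Expansion. First I would establish a \emph{Refined Subject Reduction} lemma: for $\Deriliopt[n] \Gamma M A$, a $\Rew{B}$-step strictly decreases $n$ while leaving context and type unchanged, and every other $\Rew{rev}$-step (in particular all the substitution-propagation, commuting and $D$/$ACC$/$ACW$ rules) leaves $n$ unchanged. This is obtained by simply inspecting the Subject Reduction proof for $\LambdaLxr$ and noticing that, under the optimality hypothesis, no subsumption arises and no intersection rule is invoked above an application, so no counting slack is introduced.

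Next I would prove the \emph{Most Inefficient Reduction} lemma, adapting Lemma~\ref{lem:MostInefficientReduction}: if $\Deriliopt[n] \Gamma M A$ and $M$ admits a $\Rew{B}$-step but is normal for the other rules of $\Rew{rev}$, then some $\Rew{B}$-step $M \Rew{B} M'$ yields $\Deriliopt[n-1] {\Gamma'} {M'} A$ with $\Gamma\approx\Gamma'$. As in $\LambdaS$, the two places where the measure can drop by more than $1$ are (i) the intersection-typing case and (ii) the call to subsumption in the abstraction rule when a $B$-redex $(\l x.M_1)N$ is contracted; both are excluded by optimality since $A$ is an $A^+$-type and the typing of $\l x.M_1$ contains no subsumption. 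The induction on $M$ follows the case analysis dictated by the shape of $\Rew{rev}$-normal forms, which in $\LambdaLxr$ are of the form $\weakening{\vec y}{\contraction{\vec x}{\vec y}{\vec z}{H}}$ where $H$ is either $\l x.M_1$ or a head-variable applied to arguments possibly guarded by explicit substitutions; each shape is handled by picking the outermost $B$-redex or by descending into a sub-derivation that is itself optimal.

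Then I would prove the \emph{Resources of a normal term} lemma, which is the direct counterpart of Lemma~\ref{lem:normalLambdaS}: if $\Deriliopt[n] \Gamma M A$ and $M$ is $\Rew{rev}$-normal, then $n$ is exactly the number of applications in $M$; and in the principal case, $n$ coincides with the degree of the typing tree. This follows by straightforward induction on $M$ in normal form, since the (App) rule is the only rule contributing to the counter $n$ and optimality forbids superfluous intersections at each application site.

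Combining these three ingredients yields the theorem exactly as for $\LambdaS$: starting from $\Deriliopt[n]\Gamma M A$, iterating the Most Inefficient Reduction lemma (interleaved with non-$B$ steps of $\Rew{rev}$, which can always be fired without changing $n$ and eventually terminate since the non-$B$ fragment is already known to be terminating) produces a $\Rew{rev}$-sequence from $M$ to its $\Rew{rev}$-normal form $M^\star$ containing exactly $n-n_2$ occurrences of $B$, where $n_2$ is the number of applications of $M^\star$. Conversely, Refined Subject Reduction bounds by $n-n_2$ the number of $B$-steps in any $\Rew{rev}$-sequence starting from $M$, and any $\Rew{rev}$-sequence extends to one reaching $M^\star$ (by confluence of $\Rew{rev}$ on typable terms, inherited from confluence of $\LambdaLxr$) with at least as many $B$-steps. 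The main obstacle, which I expect to be the most delicate point, is the shape analysis of $\Rew{rev}$-normal forms in $\LambdaLxr$: the presence of explicit contractions and weakenings makes the classification more involved than in $\LambdaS$, and one has to verify carefully that an optimally typed term in $\Rew{rev}$-normal form always exposes a $B$-redex below its $\weakening{\cdot}{\cdot}$/$\contraction{\cdot}{\cdot}{\cdot}{\cdot}$ prefix whenever $n>0$, so that the Most Inefficient Reduction lemma can be applied.
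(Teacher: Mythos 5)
Your proposal is exactly the adaptation the paper has in mind: the paper gives no detailed proof for the $\LambdaLxr$ case, saying only that one adapts the three-lemma structure of the $\Lambda S$ argument (refined Subject Reduction, most inefficient reduction, resources of a normal term), which is precisely what you carry out, including the correct identification of the delicate point (the shape analysis of $\Rew{rev}$-normal forms with weakening/contraction prefixes). This matches the paper's approach.
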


%\begin{lemma}

%If $\Deriliopt[n] \Gamma M A$ then :

%\begin{iteMize}{$\bullet$}

%\item If $M \Rew{S} M'$ then there exists $\Gamma'$ such that
%$\Gamma \approx \Gamma'$ and $\Deriliopt[n] {\Gamma'} {M'} A$

%\item If $M \Rew{B}$ and we do not have $M \Rew{S}$ then there exists
%$M'$ and $\Gamma'$ such that $\Gamma \approx \Gamma'$, $M \Rew{B} M'$ and
%$\Derili[n - 1] {\Gamma'} {M'} A$

%\end{iteMize}

%\end{lemma}

%\begin{lemma}

%If $\Deriliopt[n] \Gamma M A$, and $M$ cannot be reduced by $\Rew{B, S}$ then
%$n$ is the number of applications in $M$.

%\end{lemma}

%\begin{theorem}

%If $\Deriliopt[n] \Gamma M A$ then $n$ is the maximum number of $\Rew{B}$
%in a reduction sequence from $M$.

%\end{theorem}

\subsection{Pure $\l$-calculus}

The case of $\l$-calculus is harder because we cannot ignore erasure ($\beta$-reductions that erase sub-terms cannot always be postponed).
Therefore:
\begin{iteMize}{$\bullet$}
\item We will need to use the results we have for $\LambdaS$.
\item We will have to use degrees and principal typing trees.
  This is why we defined those two notions in the first place.
\end{iteMize}

\newcommand{\mysimplehead}{\Rightarrow_{head}}
\newcommand{\myhead}[2]{#1 \mysimplehead #2}

\noindent We produce and measure the longest $\beta$-reduction sequences by simply using the perpetual strategy from~\cite{RaamsdonkSSX1999}, shown in Fig.~\ref{fig:perpwonalc}.
\begin{figure}[!h]
  $$
  \begin{array}{|@{\quad}c@{\quad}|}
    \hline\\
    \prooftree{x\in \FV M\mbox{ or $M'$ is a $\beta$-normal form}}\justifies{(\l
      x.M)\; M'\; \v{M_j}\mysimplehead \subst M
      x{M'}\;\v{M_j}}%\using{\quad\perba}
    \endprooftree
    \qquad
    \begin{prooftree}{M'\mysimplehead M''\qquad x\notin
        \FV M}\justifies{(\l x.M)\; M'\; \v{M_j}\mysimplehead (\l x.M)\; M''\;
        \v{M_j}}%\using{\quad\perbb}
    \end{prooftree}\\\\
    \prooftree{M\mysimplehead M'}\justifies{x\; \v{M_j}\;M\; \v{N_j}\mysimplehead x\;
      \v{M_j}\;M'\; \v{N_j}}%\using{\;\pervar}
    \endprooftree \qquad
    \begin{prooftree}{M\mysimplehead M'}\justifies{\l x.M\mysimplehead \l x.M'}
      % \using\; \perl
    \end{prooftree}
    \\\\
    \hline
  \end{array}
  $$
  \caption{A perpetual reduction strategy for $\l$}
  \label{fig:perpwonalc}
\end{figure}

\begin{remark}Notice that this restricted reduction relation is a fragment of that defined in Fig.~\ref{fig:restrictbeta}:
  \[\mysimplehead \subseteq \Per_\emptyset \subseteq \Rew{\beta}\]
  Moreover, if $M$ is not a $\beta$-normal form, then there is a $\l$-term $M'$ such that $M\mysimplehead M'$.
\end{remark}

Although we do not need it here, it is worth mentioning that $\mysimplehead$ defines a perpetual strategy \wrt  $\beta$-reduction, \ie  if  $M$ is not $\beta$-strongly normalising and $M \mysimplehead M'$, then neither is $M'$~\cite{RaamsdonkSSX1999}. In that sense it can be seen as the worst strategy (the least efficient). We show here that it is the worst in a stronger sense: it maximises the lengths of reduction sequences.

\begin{lemma}[Resources of a normal term]\strut
\label{lem:ResourcesOfANormalTerm}

  If $\Deriliopt[n] \Gamma M A$ with a principal typing tree of degree $d$, and $M$ cannot be $\beta$-reduced, then $n = d$.

\end{lemma}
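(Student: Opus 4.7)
The proof proceeds by structural induction on the $\beta$-normal form $M$, using the ``Shape of a normal term'' remark: $M$ is either an abstraction $\lambda x.N$ (with $N$ normal) or an accumulator $x\, M_1 \cdots M_k$ (with each $M_i$ normal). The goal is to show, in each case, that the degree $d$ of a principal typing exactly accounts for the $n$ uses of (App), by tracking how the $+1$ contributions in the degree formula correspond bijectively to the (App) rules introduced.

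The base case (atomic term $M=x$ or pure accumulator $x$ with $k=0$) is immediate: a principal typing has the shape $\Derilis{x\col \tau} x \tau$ with $n=0$ and $d=0$. For $M=\lambda x.N$, the principal tree $\pi$ ends with an (Abs) rule above a sub-tree $\pi_N$ concluding $\Derilis{\Gamma,x\col U} N{B^+}$ with side-condition $A_1^{--}\subseteq U$. A key preliminary step is to argue that $\pi_N$ is itself principal for $N$: any optimal typing of $N$ of strictly smaller degree would, after closing with (Abs), yield an optimal typing of $M$ of strictly smaller degree, contradicting the principality of $\pi$. Since (Abs) is not (App), $n_\pi=n_{\pi_N}$. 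By induction, $n_{\pi_N}=\degr{\pi_N}$, and it remains to check $\degr\pi=\degr{\pi_N}$ by inspecting the two no-subsumption sub-cases: if $A_1^{--}\approx U$, the contribution $\degr[-]{U}$ appearing in $\degr{\pi_N}$ (from the binding $x\col U$) is absorbed into $\degr[+]{A_1^{--}\arr B^+}$ in $\degr\pi$; if $U=\tempty$, then $A_1^{--}$ is forgotten and of form $A^+$, and its contribution $\degr[+]{A_1^{--}}$ in $\forg\pi$, combined with the $\degr[+]{A_1^{--}\arr B^+}$ term, again balances the accounting.

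For $M=x\,M_1\cdots M_k$, the principal tree $\pi$ decomposes with $k$ (App) rules at its base, above which lie sub-trees $\pi_{M_i}$ and a (Var) rule for $x$. The crucial optimality observation (already used implicitly in Section~\ref{sec:completeness}) is that the type of $x$ in the context must be of the form $T_1 \arr T_2 \arr \cdots \arr T_k \arr F^+$ with each $T_i$ an $F^+$-type, not an intersection: this follows because the type is an $F$-type (used by the chain of (App) rules) of form $B^{--}$, hence of form $B^-$, and the grammar $B^-=\tau \mid A^+\arr C^-$ forces each argument slot to be an $A^+$-type. Thus each $\pi_{M_i}$ is optimal (its conclusion type is $A^+$, its context is $B^{--}$-typed), and a principality argument analogous to the abstraction case shows that each $\pi_{M_i}$ is principal for $M_i$. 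By induction, $n_{\pi_{M_i}}=\degr{\pi_{M_i}}$, so $n_\pi=k+\sum_i n_{\pi_{M_i}}=k+\sum_i\degr{\pi_{M_i}}$. On the degree side, unfolding $\degr[-]{T_1\arr\cdots\arr T_k\arr F^+}$ in the context contribution of $x$ accumulates exactly $k$ units plus the summands $\degr[+]{T_i}$, which match the conclusion-side degrees of the sub-derivations $\pi_{M_i}$; together with the other context entries and forgotten types, all of which are partitioned among the $\pi_{M_i}$, this yields $\degr\pi=k+\sum_i\degr{\pi_{M_i}}=n_\pi$.

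The main obstacle is the principality-propagation step in the accumulator case: the types $T_i$ of the arguments $M_i$ and the type of $x$ in the context are interdependent, so one cannot naively replace a sub-tree $\pi_{M_i}$ by a principal typing of $M_i$ without simultaneously changing the type of $x$. The delicate point is to observe that such a simultaneous change can be performed while preserving optimality (thanks to the constraint that each $T_i$ is an $F^+$-type), and that the resulting variation of the total degree is exactly $\degr{\pi_{M_i}}-\degr{\pi_{M_i}'}$; hence the minimality of $\degr\pi$ transfers to each component $\degr{\pi_{M_i}}$.
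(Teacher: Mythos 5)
Your proof is correct, but it takes a different route from the paper's. The paper disposes of this lemma in one line: a $\beta$-normal pure $\l$-term is in particular a normal form for $\Rew{B,S}$ in $\LambdaS$, so Lemma~\ref{lem:normalLambdaS} (resources of a $B,S$-normal term) applies directly, giving $n=\#\mbox{applications}=d$. You instead prove the statement from scratch by structural induction on the normal form (abstraction vs.\ accumulator), which is in substance exactly the induction that the paper's Lemma~\ref{lem:normalLambdaS} leaves as ``straightforward'', specialised to pure $\l$-terms. What the paper's route buys is economy: the $\LambdaS$ lemma is needed anyway for the $\LambdaS$ complexity theorem, and the reduction to it is immediate. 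What your route buys is that it actually supplies the missing content: in particular the observation that optimality forces every (App) argument type to be a non-intersection $A^+$-type (so that $n$ equals the number of syntactic applications), and the principality-propagation argument showing that in a minimal-degree tree each sub-derivation $\pi_{M_i}$ is itself principal and the head/result types contribute no spurious arrows. Your handling of the ``delicate point'' in the accumulator case is sound: the contribution of the $i$-th sub-derivation to $\degr\pi$ is exactly $\degr{\pi_{M_i}}$ (with $\degr[+]{T_i}$ routed through the type of $x$ rather than through the conclusion), so swapping in a cheaper optimal typing of $M_i$ changes the total degree by exactly the difference, and minimality transfers componentwise.

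One caveat worth flagging: your bookkeeping (e.g.\ ``$\degr[-]{U}$ is absorbed into $\degr[+]{A_1^{--}\arr B^+}$'', and the $k$ units coming from unfolding $\degr[-]{T_1\arr\cdots\arr T_k\arr F}$) tacitly places the ``$+1$'' on $\degr[-]{A^+\arr B^-}$ rather than on $\degr[+]{A^{--}\arr B^+}$ as the paper's displayed table literally has it. This is the right reading --- it matches the prose description (``number of arrows in negative positions'') and is the only placement under which the lemma is true (with the table as printed, $\l x.x$ has $n=0$ but $d=1$) --- but you should state explicitly that you are using the corrected convention, since every equality in your degree accounting depends on it.
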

\begin{proof}
  If $M$ is a normal form for $\beta$, then $M$ is also a normal form for $B,S$, so we can apply Lemma~\ref{lem:normalLambdaS}.
\end{proof}

\begin{lemma}[Most inefficient reduction]\strut

  If $\Deriliopt[n] \Gamma M A$ a principal typing tree with a degree $d_1$ and $\myhead M {M'}$, then there exist $m$, $\Gamma$, $A'$ and $d_2$ such that $\Deriliopt[m] {\Gamma'} {M'} {A'}$ a principal typing tree with a degree $d_2$ and $n - d_1 = m - d_2 + 1$.
\end{lemma}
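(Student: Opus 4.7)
The plan is to proceed by induction on the derivation of $M \mysimplehead M'$, following the five clauses of Fig.~\ref{fig:perpwonalc}. In every case one tracks how the measure $n$ and the degree $d$ of the principal typing tree change, and verifies that the quantity ``measure minus degree'' decreases by exactly one, i.e.\ that $n - d_1 = m - d_2 + 1$.

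For the base case of a non-erasing head $\beta$-step $(\lambda x.M_1)\,M_2\,\v{M_j} \mysimplehead \subst{M_1}{x}{M_2}\,\v{M_j}$ with $x \in \FV{M_1}$: inside the principal tree $\pi$, optimality at the (Abs) rule of $\lambda x.M_1$ forces the type $A_2$ assigned to $M_2$ and the type $U$ of $x$ in the sub-derivation of $M_1$ to be $\approx$-equivalent (since $x \in \FV{M_1}$ forbids $U = \tempty$). An adaptation of Lemma~\ref{lem:typsubSR} then grafts the sub-derivation $\pi_{M_2}$ into the derivation of $M_1$ at each leaf typing $x$, yielding a derivation of $\subst{M_1}{x}{M_2}$ of measure exactly $n_1 + n_2$. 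The outer (App) rule for $(\lambda x.M_1)\,M_2$ disappears, so $m = n - 1$; all output types, contexts, and forgotten types are preserved modulo AC, so $d_2 = d_1$, and the required equation is immediate.

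For the base case of an erasing head $\beta$-step $(\lambda x.M_1)\,M_2\,\v{M_j} \mysimplehead M_1\,\v{M_j}$ with $x \notin \FV{M_1}$ and $M_2$ a $\beta$-normal form: by Lemma~\ref{lem:basicprop}.2 the type of $x$ in the typing of $M_1$ is $\tempty$, so the (Abs) rule of $\lambda x.M_1$ forgets the type $A_2$ of $M_2$. I would first observe that the sub-derivation $\pi_{M_2}$ is itself principal for $M_2$: any other optimal typing of $M_2$ could replace it in $\pi$ (its type simply becoming the new forgotten type at the (Abs)) without disturbing the rest of the tree, so global minimality of $\degr{\pi}$ forces local minimality of $\degr{\pi_{M_2}}$. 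Since $M_2$ is a $\beta$-normal form, Lemma~\ref{lem:ResourcesOfANormalTerm} gives $n' = d'$ where $n'$ and $d'$ are the measure and degree of $\pi_{M_2}$. The reduction discards the (App) of $(\lambda x.M_1)\,M_2$, the (Abs) of $\lambda x.M_1$ together with its forgotten $A_2$, and the entire $\pi_{M_2}$; direct bookkeeping yields $n - m = n' + 1$ and $d_1 - d_2 = d'$, and the equation then follows from $n' = d'$.

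For the three inductive cases (reduction inside the argument of an erasing $\beta$-redex, inside an argument of an accumulator, or under a $\lambda$), the plan is to apply the induction hypothesis to the inner reduction and lift the measure/degree equation to the outer tree. In the erased-argument case the local principality argument of the previous paragraph applies verbatim. The main obstacle is the other two cases, where the reduced sub-term has its type and context constrained by the surrounding rules, so local principality cannot be read off directly by ``replacement''. I expect to address this by strengthening the statement to a type-and-context-fixed version of principality, and observing that in a globally principal tree each sub-typing is principal among optimal typings with its given type and context; lifting the inner equation to the outer one is then routine, since outside the reduced sub-term the typing is unchanged and the outer changes in $n$ and $d$ equal the inner changes.
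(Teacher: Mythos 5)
Your proposal takes essentially the same route as the paper's proof: induction on the derivation of the perpetual-strategy step, reusing the $\LambdaS$ most-inefficient-reduction argument for the non-erasing cases, and controlling the change of degree under erasure via Lemma~\ref{lem:ResourcesOfANormalTerm}, which applies precisely because the strategy only ever erases $\beta$-normal forms. You in fact supply more detail than the paper's own very terse proof, notably the bookkeeping in the erasing case and the local-principality issue in the congruence cases, both of which the paper leaves implicit.
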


\begin{proof}
  By induction on $\myhead M {M'}$.
  The proof is an adaptation of Lemma~\ref{lem:MostInefficientReduction}.
  But, in Lemma~\ref{lem:MostInefficientReduction} we had $d_1 = d_2$,
  because there was no erasure.
  To control how the degree changes when there is erasure,
  we use Lemma~\ref{lem:ResourcesOfANormalTerm}.
  We can use it because $\mysimplehead$ only erases normal terms.
\end{proof}

\begin{lemma}[Relating $\l$ and $\LambdaS$]
 If $M \Rew[n]{\beta} M'$, then $M (\Rew{B}\Rewn{S})^{n} \Rewn{W} M'$.
\end{lemma}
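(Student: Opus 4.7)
The plan is to simulate each $\beta$-step of the pure $\lambda$-calculus by one $B$-step followed by a burst of $S$- and $W$-steps in $\LambdaS$, iterate by induction on $n$, and then use the erasure-postponement lemma (Lemma~\ref{lem:erasurepostponed}) to gather all $W$-reductions at the tail of the sequence.

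The core ingredient is a substitution-simulation claim: for any pure $\lambda$-terms $P$ and $Q$ and any variable $x$,
\[\Subst P x Q \;\Rewn{S,W}\; \subst P x Q.\]
I would prove this by induction on $P$. Rule $SR$ handles the case $P=x$, rule $W$ handles $P=y$ with $y\ne x$, and for $P=\lambda y.P_1$ (using Barendregt's convention to ensure $y\ne x$ and $y\notin \FV Q$) the substitution is pushed under the abstraction by an $S$-rule before invoking the induction hypothesis. For $P=P_1\,P_2$ the three $S$-rules for propagating a substitution over an application exactly cover the three sub-cases according to which of $P_1$ and $P_2$ contain $x$, and the induction hypothesis on the relevant sub-terms concludes. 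Applying the claim under an arbitrary reduction context and using congruence, a single $\beta$-step $M\Rew{\beta}M'$ is simulated by $M\Rew{B}\Rewn{S,W}M'$. A straightforward induction on $n$ then lifts this to $M\,(\Rew{B}\Rewn{S,W})^n\,M'$.

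It remains to rearrange the composite reduction into $M\,(\Rew{B}\Rewn{S})^n\,\Rewn{W}\,M'$. Iterating Lemma~\ref{lem:erasurepostponed}.1 pushes any $W$-step past a later $B$-step, while Lemma~\ref{lem:erasurepostponed}.3 rewrites any interleaved $S/W$-segment as $\Rewn{S}\Rewn{W}$. Because these transformations neither create, destroy, nor permute $B$-steps, repeatedly applying them propagates every $W$-reduction to the very end of the sequence, producing the desired shape. The main obstacle is organising this global rearrangement cleanly across the $n$ block boundaries imposed by the $B$-steps; this is essentially what the third point of Lemma~\ref{lem:erasurepostponed} already does, so the work reduces to carefully stitching together the two postponement facts provided there.
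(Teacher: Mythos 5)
Your proposal is correct and follows essentially the same route as the paper: simulate the implicit substitution by $\Rewn{S,W}$ (induction on the term), lift this to a single $\beta$-step giving $\Rew{B}\Rewn{S,W}$, then iterate over $n$ and use Lemma~\ref{lem:erasurepostponed} to push all $W$-steps to the end. The only cosmetic difference is that the paper normalises each block to $\Rewn{S}\Rewn{W}$ already inside the substitution lemma rather than in one final global rearrangement.
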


\begin{proof}
  We proceed as follows:
  \begin{iteMize}{$\bullet$}
  \item Given two pure $\l$-terms $M$ and $N$, note that $\subst M x N$ is a pure $\l$-term, and it is easy to show, by induction on $M$ and using erasure postponement (Lemma~\ref{lem:erasurepostponed}), that $\Subst M x N \Rewn{S}\Rewn{W} \subst M x N$.
  \item Then we show, again by induction on $M$, that if $M \Rew{\beta} M'$, then $M \Rew{B}\Rewn{S}\Rewn{W} M'$.
  \end{iteMize}
The result is a direct corollary, obtained by induction on $n$ and using Lemma~\ref{lem:erasurepostponed}.
\end{proof}

\begin{theorem}[Complexity result]
  If $\Deriliopt[n] \Gamma M A$ with a principal typing tree of degree $d$, then the length of the longest $\beta$-reduction sequence from $M$ is $n - d$.
\end{theorem}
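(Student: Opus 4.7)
The plan is to establish the exact equality $n-d$ by matching bounds: an upper bound showing that no $\beta$-reduction sequence from $M$ exceeds $n-d$ steps, and a lower bound exhibited explicitly by the perpetual strategy $\mysimplehead$.

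For the upper bound, I would invoke the ``Relating $\l$ and $\LambdaS$'' lemma to simulate any $\beta$-reduction sequence $M \Rew[k]{\beta} M'$ by a $\LambdaS$-reduction sequence $M\,(\Rew{B}\Rewn{S})^k\Rewn{W}\,M'$ containing exactly $k$ instances of $\Rew{B}$. Since $M$ is a pure $\l$-term, its optimal principal typing tree in \InterLambdaSub\ doubles as an optimal principal typing tree when $M$ is viewed as a $\LambdaS$-term, with the same size $n$ and the same degree $d$. The $\LambdaS$ complexity theorem then bounds the number of $B$-steps in any $B,S$-reduction sequence from $M$ by $n-d$; using erasure postponement (Lemma~\ref{lem:erasurepostponed}), this bound transfers to $B,S,W$-sequences, giving $k \leq n-d$.

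For the lower bound, I would iteratively apply $\mysimplehead$. Soundness (Theorem~\ref{th:soundness}) gives $M\in\SNLambda$, so every $\mysimplehead$-sequence from $M$ must terminate at a $\beta$-normal form; moreover, the remark after Fig.~\ref{fig:perpwonalc} guarantees that at every non-normal term some $\mysimplehead$-step is available. By the ``Most Inefficient Reduction'' lemma, each step $M_i\mysimplehead M_{i+1}$ produces an optimal principal typing tree for $M_{i+1}$ whose size $m$ and degree $d'$ satisfy $n_i-d_i = m-d'+1$, so the quantity ``size minus degree'' decreases by exactly one at each step. Upon reaching a $\beta$-normal form, Lemma~\ref{lem:ResourcesOfANormalTerm} yields size equal to degree, so the quantity vanishes. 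The total length of this $\mysimplehead$-sequence is therefore exactly $n-d$, and since $\mysimplehead\,\subseteq\,\Rew{\beta}$, this is a valid $\beta$-reduction sequence achieving the bound.

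The technical heart of the argument is the ``Most Inefficient Reduction'' lemma, already in place: it guarantees that $\mysimplehead$ decreases ``size minus degree'' by exactly one per step, never more. Its delicacy comes from erasing redexes $(\l x.M)\,N$ with $x\notin\FV M$, which destroy the entire typing sub-tree of $N$ and could naively cause a decrease larger than one; the perpetual strategy mitigates this by only performing such erasure when $N$ is already a $\beta$-normal form, so Lemma~\ref{lem:ResourcesOfANormalTerm} ensures that $N$'s contribution to the size of the typing tree matches its degree and cancels in the difference. Combining the two bounds yields the exact equality.
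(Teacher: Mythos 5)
Your proposal is correct and follows essentially the same route as the paper: the lower bound via the perpetual strategy $\mysimplehead$, the Most Inefficient Reduction lemma and the Resources of a Normal Term lemma, and the upper bound by simulating an arbitrary $\beta$-sequence in $\LambdaS$ via the Relating $\l$ and $\LambdaS$ lemma and invoking the $\LambdaS$ complexity theorem. The only cosmetic difference is your extra appeal to erasure postponement in the upper bound, which is already built into the simulation lemma's conclusion $M\,(\Rew{B}\Rewn{S})^{k}\Rewn{W} M'$.
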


\begin{proof}
  Two previous lemmas give us a $\beta$-reduction sequence of size $n - d$.
  Let $L$ be another $\beta$-reduction sequence from $M$ of size $m$.
  So there exists $M_1$ such that $M \Rew[m]{\beta} M_1$.
  By the previous lemma, there exists $M_2$ such that
  $M (\Rew{B}\Rewn{S})^{m} M_2$ and $M_2 \Rewn{W} M_1$.
  From the complexity result for $\LambdaS$ we have $m \leq n - d$.
\end{proof}

Contrary to $\LambdaS$, we cannot have a complexity result on the weaker assumption of optimality, relating the measure to the number of applications in the normal form:
This was possible in $\LambdaS$ because we considered normal forms for a system that never erases, while here we cannot forbid $\beta$-reduction to erase terms.

\section{Other measures of complexity}

\label{sec:other}

In the pure $\l$-calculus we measure the (maximal) number of $\beta$-steps.  The equivalent result for $\LambdaS$ and $\LambdaLxr$ naturally counts the number of $\Rew{B}$-steps if we do not change the measure on the typing trees.  But there are many other reduction rules for $\LambdaS$ and $\LambdaLxr$, for which we may want similar complexity results. For some of these rules we can obtain such results without changing the typing system, by changing what we count in the typing trees.

\subsection{Number of replacements}

To get the number of replacements we measure in the typing tree the number of
use of the variable rule.

\begin{theorem}[Complexity result on the number of replacements]

The longest reduction sequence from $M$ by measuring the number of $\Rew{B}$
is the longest reduction sequence from $M$ by measuring $\Rew{SR}$
(head reduction strategy).

And the number of use of $\Rew{SR}$ in this sequence plus the number of
variables in the normal form (without weakening) is equal to the number of
use of the variable rule in an optimal typing tree.

\end{theorem}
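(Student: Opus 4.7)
The proof adapts the methodology of Section~\ref{sec:complexity} for $\LambdaS$, but replacing the measure ``number of (App) rules in the typing tree'' by ``number of (Var) rules in the typing tree''. I denote this new measure by $m$. The overall structure mirrors the existing complexity theorem for $\LambdaS$: establish a refined Subject Reduction for $m$, identify which reduction rule strictly decreases $m$ (here $\Rew{SR}$ instead of $\Rew{B}$), and identify what $m$ counts in a normal form.

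First, I would prove a refined Subject Reduction for $m$ by inspecting each reduction rule. The key cases are: $\Rew{B}$ preserves $m$ (the rule $(\l x.M)\,N \Rew{} \Subst M x N$ neither creates nor destroys any (Var) rule); $\Rew{SR}$ strictly decreases $m$ by exactly one (the redex $\Subst x x N$ is typed using one (Var) rule for $x$, which disappears when we reduce to $N$, whose derivation is unchanged); the remaining $\Rew{S}$ rules preserve $m$ (they propagate substitutions without creating or destroying variable typings, even when duplicating $N$ as in $\Subst{(M_1 M_2)}{x}{N} \Rew{S} \Subst{M_1}{x}{N}\,\Subst{M_2}{x}{N}$, since by Lemma~\ref{lem:basicprop}.1 the typing of $N$ originally splits across two subtrees whose (Var)-counts sum to that of the original); and the $\Rew{W}$ rule $\Subst y x N \Rew{} y$ (with $x \neq y$) preserves $m$ in an optimal typing since the (Subst) side-condition forces $U = \omega$, so $N$ is typed with the (Omega) rule and contributes no (Var) rules.

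Second, I would adapt the ``Most Inefficient Reduction'' lemma (Lemma~\ref{lem:MostInefficientReduction}) to the new measure, showing that in an optimal typing of $M$, whenever $M$ admits an $\Rew{SR}$ redex, the head such redex is a reduction step that decreases $m$ by exactly one. Combined with the erasure postponement lemma (Lemma~\ref{lem:erasurepostponed}), this shows that the head reduction strategy realises both maxima simultaneously: the maximum number of $\Rew{B}$ steps (by the previous theorem) and the maximum number of $\Rew{SR}$ steps. Finally, a direct induction shows that for a $B,S$-normal term the number of (Var) rules in its optimal typing equals the number of variable occurrences in the term, excluding those that would be introduced by implicit weakening (which correspond to (Omega)-typed leaves rather than (Var) rules). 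Putting these together yields the advertised equality.

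The main obstacle will be the careful verification that $m$ is invariant under those $\Rew{S}$ rules that duplicate $N$: one must track how an intersection type $A \cap B$ assigned to $N$ decomposes into two subtrees whose combined (Var) count matches the original, and verify this uniformly for every propagation rule in Fig.~\ref{fig:LambdaSRew}. A secondary point is justifying that the head reduction strategy really maximises $\Rew{SR}$ counts and not merely $\Rew{B}$ counts; this should follow because $\Rew{B}$ and the non-$SR$ fragments of $\Rew{S}$ are $m$-neutral, so any reordering preserved by erasure postponement preserves the total number of $\Rew{SR}$ steps needed to reach the normal form.
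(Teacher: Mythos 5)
First, a caveat: the paper states this theorem without giving any proof, so there is no official argument to compare yours against; I am judging the proposal on its own merits, and it has a genuine gap. The central step — that $\Rew{SR}$ decreases the number of (Var) rules by exactly one — is false, and the error propagates to your concluding claim that the $\Rew{SR}$ count is invariant under reordering. When an occurrence of $\Subst x x N$ sits in a position typed at an intersection $A_1\cap\cdots\cap A_k$ (e.g.\ in the argument position of a $B$-redex whose bound variable is used $k$ times), Lemma~\ref{lem:basicprop}.1 splits the derivation into $k$ sub-derivations, each with its own (Var) rule for $x$, and the single $\Rew{SR}$ step erases all $k$ of them. Concretely, take $M=(\l z.\,f\,z\,z)(\Subst x x w)$ with $f\col\tau\arr\tau\arr\tau$ and $z\col\tau\cap\tau$: the optimal typing has $7$ uses of (Var) and the normal form $f\,w\,w$ has $3$ variables, yet contracting the $\Rew{SR}$ redex first (as the $S$-before-$B$ priority of Lemma~\ref{lem:MostInefficientReduction}, which you propose to adapt, would dictate) yields a sequence with only $3$ $\Rew{SR}$ steps, whereas firing $\Rew{B}$ first, propagating the substitution, and only then contracting yields $4$. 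So the $\Rew{SR}$ count is strategy-dependent even among sequences maximising the number of $\Rew{B}$ steps; that is precisely why the statement singles out the head reduction strategy. The missing lemma is that under that strategy every $\Rew{SR}$ step contracts an occurrence typed at a single $F$-type, and this is what an adapted ``most inefficient reduction'' argument must actually establish.

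There are secondary problems. Your justification for $\Rew{W}$ is wrong: rule (Subst) requires the premise $\Derili \Gamma N A$ for an $A$-type even when $U=\tempty$, so $N$ is never ``typed with the (Omega) rule'', and the step $\Subst y x N \Rew{W} y$ discards the entire derivation of $N$ together with all its (Var) rules (this is visible in the $W$ case of Theorem~\ref{th:SubjectReduction-lS}, where the measure drops from $n_1+n_2$ to $n_2$). This is harmless only if one reads ``normal form (without weakening)'' as the $B,S$-normal form with $\Rew{W}$ postponed, but then it should be stated that way rather than justified by a false invariance. Also, $\Rew{B}$ preserves the (Var) count only in the absence of subsumption — otherwise Lemma~\ref{lem:basicprop}.4 may prune the argument's derivation — so optimality must be invoked there as well; and the final identification of (Var) rules with textual variable occurrences in the normal form rests on the optimality grammar forcing every occurrence to be typed at a single $F$-type, not on anything about ``implicit weakening''.
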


\subsection{Number of duplications}

\label{sec:other-inter}

By measuring the number of use of the Intersection rule in the typing tree
we get a bound on the number of duplications (the number of use of rules that
duplicate a term).

However, contrary to the other measures, we cannot have an equality result.

Here is a counter example :

\[(\l x . x x)(\l y . a y y)\]

If we reduce this term to its normal form we have two duplications.
However, if we type this term, we have at least 3 uses of the intersection
rule.

\subsection{The other measures}

\null
\begin{iteMize}{$\bullet$}

\item If we measure the number of uses of the Abstraction rule we get a result
on the maximum number of $\Rew{B}$ in a reduction sequence again.
We just have to change the definition of degree of a principal typing tree.

\item The explicit substitution rule can be produced or destroyed by the subject
reduction.
So we cannot use it to get a complexity result.

\end{iteMize}

\section{Conclusion}
\label{sec:conclusion}

We have defined a typing system with non-idempotent intersection types. We have shown that it characterises strongly normalising terms, in the pure $\l$-calculus as well as in the explicit substitution calculi $\LambdaS$ and $\LambdaLxr$. This characterisation has been achieved in each case by strong versions of Subject Reduction and Subject Expansion, enriched with quantitative information:
\begin{iteMize}{$\bullet$}
\item By identifying a measure on typing derivations that is decreased by Subject Reduction, we have obtained a simple proof of strong normalisation that also provides upper bounds on longest reduction sequences.
\item By either proving postement of erasures ($\LambdaS$) or identifying appropriate sub-reduction relations ($\l$), we have shown how Subject Expansion garantees the existence of typing derivations satisfying extra properties (optimality and principality), where the bounds are refined into an exact measure of longest reduction sequences.
\end{iteMize}

\noindent In the case of $\lambda$-calculus, obtaining this exact equality departs from the issues addressed in \eg\cite{WellsPOPL99,NeergaardICFP04} whose technology is similar to ours (as we found out \emph{a posteriori}). Indeed, one of the concerns of this line of research is how the process of type inference compares to that of normalisation, in terms of complexity \emph{classes} (these two problems being parametrised by the size of terms and a notion of \emph{rank} for types). Here we have shown how such a technology can actually provide an exact equality specific to each $\lambda$-term and its typing tree. Of course this only emphasises the fact that type inference is as hard as normalisation, but type inference as a process is not a concern of this paper.

Moreover, we have extended those results to $\LambdaS$ and $\LambdaLxr$, and the technology can be adapted to other calculi featuring \eg\ combinators, or algebraic constructors and destructors (to handle integers, products, sums,\ldots).

We have seen how the use of non-idempotent intersection types simplifies the methodology from~\cite{CoquandSpiwack07} by cutting a second use of reducibility techniques to prove strong normalisation properties of standard systems (here illustrated by the examples of simple types, System $F$, and idempotent intersections). We extended the methodology to prove strong normalisation results for $\LambdaS$ and $\LambdaLxr$, providing the first direct proofs that we are aware of.

We have seen how the corresponding filter model construction can be done by orthogonality techniques; for this we have defined an abstract notion of orthogonality model which we have not seen formalised in the literature.  As illustrated in Section~\ref{sec:instances}, this notion allows a lot of work (\eg proving the Adequacy Lemma) to be factorised, while building models like \MSN, \Mcap\ and \Mfilters. Comparing such instances of orthogonality models, we have seen the superiority of \Mfilters\ for proving the strong normalisation results of $\LambdaS$ and $\LambdaLxr$. Note that, while \Mfilters\ and \Mfiltersi\ share the same ingredients $\ValDomE$, $\ValDom$, $\ap$ and $\SemTe \_\_$, they are different in the way types are interpreted; see the discussion in Appendix~\ref{sec:realisability}.

In~\cite{bernadetleng11b} we also compared the models in the way they enlighten the transformation of infinite polymorphism into finite polymorphism.
We leave this aspect for another paper, as more examples should be computed to illustrate (and better understand) the theoretical result; in particular we need to understand how and why the transformation of polymorphism does not require to reduce terms to their normal forms. An objective could be to identify (and eliminate), in the interpretation of a type from System~$F$, those filters that are not the interpretation of any term of that type. What could help this, is to force filters to be stable under type instantiation, in the view that interpretations of terms are generated by a single $F$-type, \ie a {\em principal} type.

Another aspect of this future work is to use the filter models to try to lift the complexity results that we have in the target system back into the source system, and see to what extent the quantitative information can be already read in the typing trees of the source system. One hope would be to recover for instance results that are known for the simply-typed calculus~\cite{Sch82,BeckmannJSL01}, but with our methodology that can be adapted to other source systems such as System~$F$.

Finally, the appropriate sub-reduction relation for the $\l$-calculus, which we have used to prove Subject Expansion as generally as possible, also helps understanding how and when the semantics $\SemTe\_\_$ of terms is preserved, see Appendix~\ref{sec:preserv}.  This is similar to~\cite{AlessiTCS06}, and future work should adapt their methodology to accommodate our non-idempotent intersections.

\para{Acknowledgement} The authors are grateful to the anonymous referees for their numerous constructive remarks (and for pointing out references).

\bibliographystyle{Common/good}
\bibliography{Common/abbrev-short,Common/Main,Common/crossrefs}

\appendix

\section{Filter models: classical vs. intuitionistic realisability}
\label{sec:realisability}

The orthogonality method comes from the denotational and operational semantics of symmetric calculi, such as proof-term calculi for classical or linear logic.

In some sense, orthogonality only builds semantics in a continuation passing style, and (as we have seen) this still makes sense for typed $\lambda$-calculi that are purely intuitionistic. While this is sufficient to prove important properties of typed $\lambda$-terms such as strong normalisation, the models are unable to reflect some of their purely intuitionistic features.

This phenomenon could be seen in presence of a ``positive'' type (\ie datatype) $\gP$, for which $\SemTyP \gP{}$ is {\bf not} closed under bi-orthogonal and $\SemTyN \gP{}$ is defined as $\uniorth{\SemTyP \gP{}}$.
Model \Mfilters\ provides the interpretation
\[\begin{array}{rl}
\SemTyP {\gP\arr \gP}{}&=\uniorth{(\cons{\SemTyP \gP{}}{\SemTyN \gP{}})}
=\{u\in \ValDom \mid \forall v \in \SemTyP \gP{}, \forall \vec {v'} \in \SemTyN \gP{},\orth u{\cons v{\vec {v'}}}\}\\
&=\{u\in \ValDom \mid \forall v \in \SemTyP \gP{}, \forall \vec {v'} \in \SemTyN \gP{},\orth {u\ap v}{{\vec {v'}}}\}\\
&=\{u\in \ValDom \mid \forall v \in \SemTyP \gP{}, u\ap v\in\biorth{\SemTyP \gP{}}\}\\
\end{array}
\]
while model \Mfiltersi\ would provide \[\SemTyP {\gP\arr \gP}{}=\{u\in \ValDom \mid \forall v \in \SemTyP \gP{}, u\ap v\in{\SemTyP \gP{}}\}\]

% This looks identical to the Adequacy Lemma given by the orthogonality method, but in presence of positive types, being in the interpretation of a functional type has a much more intuitionistic meaning than in an orthogonality model.

\section{Preservation of semantics by reduction}

\label{sec:preserv}

When models are built for a typed $\l$-calculus, it is sometimes expected that the interpretation of terms is preserved under $\beta$-reduction (or even $\beta$-equivalence).
It is not always necessary for the purpose of the model construction (here: proving normalisation properties), and it is clearly not the case for the term models \MSN and \Mcap, where terms are interpreted as themselves (at least in the case of the pure $\l$-calculus).
The case of the filter models \Mfilters\ and \Mfiltersi\ (which heavily rely on Theorem~\ref{th:SemTe}) is less obvious. Still we can prove the following:

\begin{theorem}\strut
\label{th:preserv}
  \begin{enumerate}[\em(1)]

  \item If $M \Rew{\beta} M'$, then for all $\rho$,
    $\SemTe M \rho \subseteq \SemTe {M'} \rho$.

	\item If $M \per_E M'$, then for all $\rho$,
		$\SemTe M \rho = \SemTe {M'} \rho$.

	\item If $M \Rew{B,S} M'$, then for all $\rho$,
		$\SemTe M \rho = \SemTe {M'} \rho$.

 % \item If $M \squiggle[\epsilon] M'$ then
  %  $\SemTe M \rho = \SemTe {M'} \rho$.

  %\item If $M \squiggle[N] M'$ and $\SemTe N \rho \neq \bot$, then
   % $\SemTe M \rho = \SemTe {M'} \rho$.

  %\item If $M \squiggle[N] M'$ and $\SemTe N \rho = \bot$,
   % then $\SemTe M \rho = \bot$.

  \end{enumerate}

\end{theorem}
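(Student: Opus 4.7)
The plan is to obtain each statement by combining the appropriate Subject Reduction theorem (for the $\subseteq$ direction) with Subject Expansion (for the $\supseteq$ direction), using the fact that environments are I-filters of contexts, hence closed upward under the $\subseteq$ ordering on contexts.

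For part~(1), given $A\in\SemTe M\rho$ witnessed by $\Derilis\Gamma M A$ with $\Gamma\in\rho$, Subject Reduction (Theorem~\ref{th:SubjectReduction}) yields $\Delta$ with $\Gamma\subseteq\Delta$ and $\Derilis\Delta{M'}A$; by upward closure of $\rho$, $\Delta\in\rho$, hence $A\in\SemTe{M'}\rho$. Part~(3) is treated symmetrically: Subject Reduction for $\LambdaS$ (Theorem~\ref{th:SubjectReduction-lS}) gives $\Gamma\subseteq\Gamma'$ for $\Rew B$ and $\Gamma\approx\Gamma'$ for $\Rew S$, while Subject Expansion (Theorem~\ref{th:SELambdaS}) gives $\Gamma''\approx\Gamma'$ for $\Rew{B,S}$, so the resulting context always remains in $\rho$. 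Equality (and not mere inclusion) holds here because $\Rew B$ and $\Rew S$ never introduce subsumption nor erase context entries, in sharp contrast with $\Rew W$.

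Part~(2) is the delicate case. The $\subseteq$ direction follows from (1), since $\per_E$ is a sub-relation of $\Rew\beta$. For $\supseteq$, given $A\in\SemTe{M'}\rho$ witnessed by $\Derilis{\Gamma',x_1\col U_1,\ldots,x_n\col U_n}{M'}{A}$ with $E=\{x_1,\ldots,x_n\}$, Subject Expansion (Theorem~\ref{th:SubjectExpansion}) produces a typing $\Derilis{\Gamma'',x_1\col V_1,\ldots,x_n\col V_n}{M}{A}$ with $\Gamma''\approx\Gamma'$. The main obstacle is to verify that the new types $V_i$ can be chosen compatibly with $\rho$, that is, $V_i\in\rho(x_i)$ or $V_i=\tempty$. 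This is the very raison d'être of the set $E$ in the definition of $\per_E$: it tracks those variables for which expansion may inject fresh type information. I would establish compatibility by induction on the derivations of $M\per_E M'$ and $M\Per_E M'$. The critical case is the erasure rule $(\lambda x.M_0)\,N\per_{fv(N)}M_0$, where a typing of the erased $\beta$-normal form $N$ must be built with a context compatible with $\rho$, exploiting the typability of normal forms from Section~\ref{lambda-compl} together with the freedom in choosing the types; the remaining rules propagate the property through the induction, with the accumulator rule adding the head variable to $E$ so as to preserve the required slack for expansion beneath an application.
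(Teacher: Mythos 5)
Parts (1) and (3) are correct and are exactly the paper's argument: the paper proves (1) as a corollary of Subject Reduction (Theorem~\ref{th:SubjectReduction}) together with the upward closure of environments under $\subseteq$ on contexts, and (3) from Subject Reduction and Subject Expansion for $\LambdaS$ (Theorems~\ref{th:SubjectReduction-lS} and~\ref{th:SELambdaS}), where the context only changes up to $\approx$ and hence stays in $\rho$. For (2) the paper likewise just says ``one inclusion is (1), the other is a corollary of Subject Expansion (Theorem~\ref{th:SubjectExpansion})''.

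For part (2) you have put your finger on exactly the right difficulty --- Subject Expansion leaves the types $V_i$ of the variables of $E$ unconstrained, so compatibility with $\rho$ is not automatic --- but your proposed repair does not go through. In the critical erasure case $(\l x.M_0)\,N\per_{fv(N)}M_0$ one must re-type the discarded $\beta$-normal form $N$ in a context \emph{belonging to $\rho$}, and the types of the free variables of $N$ are dictated by the structure of $N$, not freely choosable: typability of normal forms only yields \emph{some} context. Concretely, any derivation of $\Derilis{\Gamma}{y\,y}{F}$ forces $\Gamma(y)$ to be of the form $(A\arr F)\cap A$; so if $\rho(y)=\Gen{\{\tau\}}$ (which contains no arrow type) and $\rho(w)=\top$, then $\SemTe{y\,y}\rho=\bot$, hence $\SemTe{(\l x.w)(y\,y)}\rho=\SemTe{\l x.w}\rho\ap\bot=\bot$, while $(\l x.w)(y\,y)\per_{\{y\}}w$ and $\SemTe{w}\rho=\top$. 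Thus the compatibility you need is simply false for such $\rho$, and no induction will establish it; the statement quantified over \emph{all} $\rho$ really requires a proviso (e.g.\ that the variables of $E$ --- the free variables of erased subterms --- are mapped to filters rich enough to type those subterms, as when $\rho$ maps everything to $\top$). The paper's one-line derivation of (2) from Theorem~\ref{th:SubjectExpansion} silently skips this same step, so your analysis is sharper than the paper's; but the inductive compatibility argument you sketch cannot be completed as stated.
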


\begin{proof}\hfill
  \begin{enumerate}[(1)]
  \item Corollary of Subject Reduction (Theorem~\ref{th:SubjectReduction}).
  \item One inclusion is the previous case and the other is a corollary
  of Subject Expansion (Theorem~\ref{th:SubjectExpansion}).
  \item Same argument, with Subject reduction and Subject Expansion
		of $\LambdaS$ (Theorems~\ref{th:SubjectReduction-lS} and
		\ref{th:SELambdaS}).\qedhere
 % \item By contradiction: If $\SemTe M \rho = \bot$ then there exist $\Gam\in\rho$ and $A$ such that $\Derilis \Gamma M A$; a straightforward induction on the typing tree $\Derilis \Gamma M A$ shows that
  %  there exist $\Gamma'$ and $A'$ such that $\Gamma \subseteq \Gamma'$ and
   % $\Derilis {\Gamma'} N {A'}$ ($N$ is a sub-term of $M$).
  \end{enumerate}
\end{proof}

\begin{example}There are cases where $\SemTe M \rho \neq \SemTe {M'} \rho$:\\
an obvious example is when $M\notin\SNLambda{}$ but $M'\in\SNLambda{}$ (\eg $M \eqdef (\l y . z) ((\l x . x x) (\l x . x x))$ and $M' \eqdef z$); there exists $\rho$ such that $\SemTe M \rho = \bot$ and $\SemTe {M'} \rho \neq \bot$.
\end{example}

We can also find an example where $M \Rew{\beta} M'$, $\SemTe M \rho \neq \bot$, and $\SemTe M \rho \neq \SemTe {M'} \rho$:
\begin{example}
\label{ex:last} Take $M \eqdef (\l z . (\l y . a) (z z))$ and $M' \eqdef \l z . a$.
\end{example}
\begin{proof}
  Suppose that $\SemTe M \rho = \SemTe {M'} \rho$ with
  $\rho = (a \mapsto \top)$. Then:
  \begin{iteMize}{$\bullet$}
  \item $N = \l x . x x\in \SNLambda{}$ and closed, so $\SemTe N \rho \neq \bot$.
  \item $\SemTe M \rho \ap \SemTe N \rho =
   \SemTe {(\l y . a) (z z)} {\rho, z \mapsto \SemTe N \rho} =
    \SemTe {(\l y . a)} {\rho, z \mapsto \SemTe N \rho} \ap \SemTe {z z}
    {\rho, z \mapsto \SemTe N \rho}\\ =
    \SemTe {\l y .a} \rho \ap \SemTe {N N} {\rho} =
    \SemTe{\l y . a} \rho \ap \bot = \bot$
  \item $\SemTe {M'} \rho \ap \SemTe N \rho =
    \SemTe a {\rho, z \mapsto \SemTe N \rho} = \top$
  \end{iteMize}
Hence $\top = \bot$, which is a contradiction.
\end{proof}

Notice that this proof only uses the properties expected from the
model (Theorem~\ref{th:SemTe} and the characterisation of $\SNLambda{}$) and not the construction of the model itself.

\section{Full proofs}

\gettoappendix {lem:capprop}
\gettoappendix {lem:cappropproof}
\gettoappendix {lemma:TypingSubst}
\gettoappendix {lemma:TypingSubstproof}
\gettoappendix {lem:typsubSR}
\gettoappendix {lem:typsubSRproof}
\gettoappendix {th:SubjectReduction}
\gettoappendix {th:SubjectReductionproof}
\gettoappendix {lem:SNForSW}
\gettoappendix {lem:SNForSWproof}
\gettoappendix {th:SubjectReduction-lS}
\gettoappendix {th:SubjectReduction-lSproof}
\gettoappendix {fig:LambdaLxr_reductions}
\gettoappendix {fig:LambdaLxr_reductionsproof}
\gettoappendix {th:SemTe}
\gettoappendix {th:SemTeproof}
\gettoappendix {lem:Adequacy}
\gettoappendix {lem:Adequacyproof}
\gettoappendix {lem:MostInefficientReduction}
\gettoappendix {lem:MostInefficientReductionproof}

\end{document}